\pdfoutput=1
\documentclass[12pt]{article}
\usepackage[right=1in,left=1in,top=1in,bottom=1in]{geometry}
\usepackage{setspace}
\usepackage{hyperref}
\hypersetup{colorlinks, citecolor=blue, filecolor=blue, linkcolor=blue, urlcolor=blue,
  pdftitle={Identification and Inference with Machine-Learned Instruments},
  pdfauthor={Fangzhou Yu}}
\usepackage{graphicx}
\usepackage{url}
\usepackage[round]{natbib}
\usepackage{amsmath,amsthm}
\usepackage{engord}
\usepackage{float}
\usepackage{placeins}
\usepackage{subfig}
\usepackage{pdflscape}
\usepackage{booktabs}
\usepackage{amssymb}
\usepackage{multirow}
\usepackage{xr}
\usepackage{xcolor}
\usepackage{listings}
\lstdefinestyle{crc}{basicstyle=\ttfamily\scriptsize,keywordstyle=\ttfamily,commentstyle=\ttfamily,breaklines=true,columns=fullflexible,keepspaces=true,showstringspaces=false,aboveskip=1pt,belowskip=1pt,xleftmargin=0pt}
\newsavebox{\crcRbox}
\newsavebox{\crcPYbox}

\newtheorem{theorem}{Theorem}
\newtheorem{example}{Example}
\newtheorem{definition}{Definition}
\newtheorem{assumption}{Assumption}
\newtheorem{proposition}{Proposition}
\newtheorem{lemma}{Lemma}
\newtheorem{corollary}{Corollary}
\newtheorem{remark}{Remark}

\DeclareMathOperator{\Var}{Var}
\DeclareMathOperator{\Cov}{Cov}
\DeclareMathOperator{\Supp}{Supp}
\newcommand{\E}{\mathbb{E}}
\newcommand{\indep}{\perp\!\!\!\perp}

\usepackage{sectsty}
\sectionfont{\large}
\subsectionfont{\normalsize}
\subsubsectionfont{\normalsize}

\title{\vspace*{-2.5cm} Identification and Inference with Machine-Learned Instruments}

\author{Fangzhou Yu\thanks{School of Economics, University of Sydney. \href{mailto:fangzhou.yu@sydney.edu.au}{fangzhou.yu@sydney.edu.au}}}

\date{\vspace*{0.5cm}September, 2026}

\begin{document}

\pagenumbering{roman}

\bgroup
\let\footnoterule\relax

\begin{singlespace}
  % \begin{doublespace}
  \maketitle

  \begin{abstract}
    \noindent Instrumental variable estimation increasingly pools many or high-dimensional instruments into a single machine-learned first stage, with rich controls partialled out. The resulting estimand, the partialled-out IV coefficient, is a signal-weighted average of the heterogeneous effects, which gives an opaque first stage a precise structural meaning. The average is convex whenever a covariance-monotonicity condition holds, and we provide a microfoundation for that condition based on vector monotonicity. With a learned signal, the usual debiased moment is not Neyman orthogonal, and its first-order bias is a drift toward the learner's own signal-weighted average, so naive inference remains valid only for that learner-dependent target. We construct a heterogeneity robust orthogonal score that restores $\sqrt{N}$ inference on the fixed, learner-invariant target at no efficiency cost, and provide a Hausman-type diagnostic and identification robust confidence sets.
  \end{abstract}

  \vspace{0.4em}
  \noindent\textbf{Keywords:} instrumental variables; correlated random coefficients; Neyman orthogonality; local average treatment effects.

  \vspace{0.2em}
  % \noindent\textbf{JEL codes:} C14, C21, C26, C36, C55.

  % \end{doublespace}
\end{singlespace}
\thispagestyle{empty}

\clearpage
\egroup
\pagenumbering{arabic}
\setcounter{page}{1}

\section{Introduction}\label{sec:intro}

Instrumental variables estimation increasingly uses machine learning to combine many or high-dimensional instruments into a single predicted treatment signal, with rich controls partialled out nonparametrically. This approach yields the partially linear IV estimator implemented in the ``flexible IV'' routines of \texttt{DoubleML} and \texttt{ddml} packages. Applications include judge- and examiner-leniency designs, shift-share designs, and quarter-of-birth studies of returns to schooling \citep{angrist2022machine}. \citet{chen2021mostly} study this approach as an optimal-instrument problem. We examine the identification and inference of this estimator when treatment effects are heterogeneous and are correlated with treatment selection.

Two strands of the IV literature address different parts of this problem. The literature on causal weights shows that multiple instruments can give rise to negative weights on local effects \citep{mogstad2021causal,mogstad2024policy}, as can misspecified covariate adjustment \citep{blandhol2022tsls,sloczynski2024interpret}. These results clarify the causal interpretation of population estimands. The double/debiased machine learning literature develops methods for $\sqrt{N}$ inference with estimated nuisance functions, relying on orthogonal scores to control estimation error \citep{chernozhukov2018double,emmenegger2021regularizing,scheidegger2025machine}. For a learned instrument signal, however, conventional partialling out need not preserve this orthogonality under heterogeneity correlated with treatment selection, and first-stage estimation error can enter the moment. Our analysis connects the causal interpretation of the signal to the inference problem created by estimating it.

Our first contribution gives the partialled-out IV estimand a structural interpretation. Write the outcome as $Y = Y(0) + X\Delta$, where the unobserved effect $\Delta$ varies across individuals and may influence their choice of treatment $X$. For any square-integrable signal $g(Z,W)$ of instruments $Z$ and covariates $W$, including one constructed by machine learning, consider the partialled-out IV estimand
\begin{equation}\label{eq:betaiv-intro}
  \beta_{IV} = \frac{\E[\tilde{Y}\tilde{g}]}{\E[\tilde{X}\tilde{g}]},
  \qquad \tilde{A} := A - \E[A\mid W],
\end{equation}
where covariates are removed by subtracting conditional means \citep{robinson1988root}. This flexible covariate adjustment addresses the identification concerns raised by \citet{blandhol2022tsls} and \citet{sloczynski2024interpret}. Under conditional exogeneity and relevance, the estimand admits a signal-weighted average treatment effect (SWATE) representation,
\begin{equation}\label{eq:swate-intro}
  \beta_{IV} = \E_\Delta\big[\Delta\,\omega(\Delta)\big],
  \qquad \omega(\delta) = \frac{\E[\tilde{X}\tilde{g}\mid\Delta=\delta]}{\E[\tilde{X}\tilde{g}]},
\end{equation}
with weights that integrate to one. The weight on an effect is the covariance of residualized treatment and signal among individuals with that effect, normalized by the overall covariance. The representation therefore explains which effects an IV coefficient averages, even when its signal comes from an opaque learner. It requires no single-index restriction on choice behavior, because the weights are defined over effects, while the marginal treatment effect representations of \citet{heckman2005structural} and \citet{heckman2006understanding} weight effects over latent treatment resistance. We also relate this signal-based coefficient to the complier averages studied by \citet{imbens1994identification} and \citet{angrist1995two}.

The outcome model $Y = Y(0) + X\Delta$ is the correlated random coefficients (CRC) model \citep{garen1984,wooldridge1997two,heckman1998instrumental,card2001estimating}. It naturally nests the canonical binary treatment case of local average treatment effect (LATE), and the SWATE becomes a weighted combination of LATEs. It also extends the problem to continuous $X$ with linear effects. Two caveats worth noting. With continuous $X$, the linear effect form is a restriction adopted for tractability. The weight function $\omega(\cdot)$ is not identified, only $\beta_{IV}$ and its convexity are, and we do not attempt to estimate the weights. Classical CRC results give conditions under which IV recovers the average effect. The SWATE characterizes what IV recovers absent those conditions, which reappear as the special case $\omega = 1$.

Our second contribution provides a microfoundation for convex SWATE weights. For the population first stage $g_X = \E[X\mid Z,W]$, we show that the vector monotonicity of \citet{goff2024vector}, equivalently the actual monotonicity of \citet{mogstad2021causal}, together with positive association of $Z$ given $W$ delivers a new sufficient condition. We call it Conditional Covariance Monotonicity (CCM), under which SWATE weights are convex. We retain \citet{goff2024vector}'s choice model but add a restriction on the instrument distribution to interpret the projection coefficient directly, rather than a particular complier parameter. Scalar threshold model also implies CCM under extended exogeneity \citep{vytlacil2002independence}, but CCM can hold without a monotone single-index model.

Our interpretation holds for any signal. Estimating the optimal signal is where standard inference theory breaks, and our third contribution addresses this issue. The naive debiased moment, implemented in the standard DML routines such as \texttt{DoubleML} package in \texttt{R} and \texttt{Python}, is not Neyman orthogonal in $g_X$ under heterogeneity. Its first-order bias consists of the component of the instrument's effect on outcomes that the linear signal fails to absorb, and the first-stage estimation error of $g_X$. Because this bias is the covariance of those two terms, it has a geometric consequence. Rescalings and all covariate-measurable recalibrations of the signal leave the estimate unchanged, and only ``shape'' change aligned with that unabsorbed component moves the estimate. This is why well-tuned learners in the standard DML routines frequently appear valid, and why that appearance cannot be relied upon.

The first-order bias is a drift toward the learner's own SWATE, $\beta_{IV}(\hat{g}_X)$. Under our rate conditions, naive inference is asymptotically valid for this moving target, which is convex whenever CCM holds for the learned signal. Under heterogeneity along the signal, however, the procedure cannot attain valid inference for the fixed target uniformly.

For the fixed target, we construct a CRC robust orthogonal score with one new nuisance, $q_Y = \E[Y\mid Z,W]$. It is Neyman orthogonal in all four nuisances, and is the efficient influence function after Jacobian normalization. Its estimator is $\sqrt{N}$-consistent under a product-rate condition allowing a slowly learned $q_Y$. At the population nuisance values it equals the naive score under homogeneity, so the correction has no efficiency cost. 

The choice of $g_X$ is classical under homogeneity \citep{chamberlain1987asymptotic,newey1990semiparametric,belloni2012sparse}. In the binary examiner-IV specification, \citet{kolesar2013estimation} introduces the covariate-adjusted target, and \citet{sithole2026locally} develops locally robust estimation for it with a flexibly learned first stage. We embed that specialization in a broader theory. We show that arbitrary learned signals have SWATE interpretations, heterogeneity creates learner-induced target drift, and fixed-target analysis adds efficiency and robustness. We also develop a Hausman-type diagnostic and show how a projection first stage can correct the point estimate while leaving its conventional standard error inconsistent \citep{kolesar2013estimation,lee2018consistent}. An Anderson--Rubin confidence set can be used to address weak residual signal after nonparametric partialling, but not the classical many weak-instrument problem \citep{staiger1997instrumental,dufour1997some,andrews2019weak,mikusheva2022inference}.

The remainder of the paper is organized as follows. Section~\ref{sec:setup} defines the estimand, proving the representation theorems, microfounding CCM, and relating the framework to LATE. Sections~\ref{sec:estimation} and~\ref{sec:fixedtarget} then develop the moving target that naive DML estimates and the CRC robust score that recovers the fixed, learner-invariant target. SSection~\ref{sec:diagnostics} supplies the diagnostic and robust inference under a weak residual signal. Sections~\ref{sec:montecarlo}--\ref{sec:application} present the Monte Carlo and empirical evidence, and Section~\ref{sec:conclusion} concludes.

\section{Identification and Interpretation}\label{sec:setup}

\subsection{The CRC Model and the Partialled-out Estimand}\label{sec:setup-sub}

Let $(\Omega, \mathcal{F}, \mathbb{P})$ be a probability space. We observe an independent and identically distributed sample of random vectors $(Y, X, Z, W)$. For notational clarity, we suppress individual subscripts. Throughout, ``almost surely'' (a.s.) means with probability one. For conditional statements, it means for almost every value of the conditioning variables.
\begin{itemize}
  \item $Y \in \mathbb{R}$ is the observed continuous outcome.
  \item $X \in \mathcal{X} \subseteq \mathbb{R}$ is the observed endogenous treatment. This accommodates discrete treatments (e.g., binary $\mathcal{X} = \{0,1\}$) as well as continuous treatments (e.g., dosage, duration).
  \item $Z \in \mathcal{Z}$ is the observed instrumental variable, where $\mathcal{Z}$ can be a scalar, a discrete vector, or a high-dimensional feature space $\mathbb{R}^d$.
  \item $W \in \mathcal{W}$ is a high-dimensional vector of observed pre-treatment covariates.
  \item $g: \mathcal{Z} \times \mathcal{W} \to \mathbb{R}$ is a measurable signal function with finite second moments. The function $g(Z, W)$ acts as a predictive projection of the instrument (e.g., a non-parametric cross-fitted machine learner). For notational brevity, we refer to it as $g(Z)$ where the dependence on $W$ is implicit, or simply $g$.
\end{itemize}

We formalize the underlying causal data generating process within the canonical Correlated Random Coefficient (CRC) framework \citep{wooldridge1997two, heckman1998instrumental, angrist2000interpretation}. Let $Y(0)$ denote the untreated potential outcome, and let $\Delta$ denote the heterogeneous treatment effect. The structural outcome equation is
\begin{equation}\label{eq:crc}
  Y = Y(0) + X \Delta.
\end{equation}

\begin{remark}\label{rem:linearity}
  When $X$ is binary, Equation~\eqref{eq:crc} inherently holds under SUTVA, with $\Delta = Y(1)-Y(0)$. When $X$ is continuous, the CRC model imposes a constant marginal response for each individual, i.e., $\partial Y_i / \partial X_i = \Delta_i$. Here $X$ is endogenous because individuals may self-select intensity on their untreated $Y(0)$ or effects $\Delta$ (essential heterogeneity). This linearity is the restriction we pay for tractability with continuous $X$. \citet{angrist2000interpretation} relax this linearity restriction and interpret IV as a weighted average of marginal responses. Under our CRC model, Theorem~\ref{thm:cpt} below expresses the estimand as a signal-weighted average of the individual effects $\Delta$.
\end{remark}

Following \citet{robinson1988root} and the Frisch--Waugh--Lovell theorem, we partial out the covariates $W$. Let $\tilde{A} = A - \E[A \mid W]$ denote the conditional residual of any random variable $A$. The generalized partialled-out instrumental variable (IV) estimand is defined as
\begin{equation}\label{eq:betaiv}
  \beta_{IV} := \frac{\E[\tilde{Y} \tilde{g}]}{\E[\tilde{X} \tilde{g}]}.
\end{equation}

When $g$ is a fixed instrument and only the covariate projections $\E[Y\mid W], \E[X\mid W]$ are learned, the partialled-out IV moment is already Neyman orthogonal, so the double/debiased machine-learning theory of \citet{chernozhukov2018double} delivers valid $\sqrt{N}$ inference regardless of effect heterogeneity. When the signal is learned, with $g$ an estimated first stage pooling many instruments, orthogonality breaks under essential heterogeneity. The identification results of this section hold for any measurable $g$ and cover both regimes. The estimation theory of Sections~\ref{sec:estimation}--\ref{sec:diagnostics} is what the second regime requires.

Having fixed the estimand, we now state the identifying assumptions and prove the two representation theorems that give $\beta_{IV}$ its structural meaning.

\subsection{Assumptions and discussion}\label{sec:assumptions-disc}\label{sec:assumptions}

\begin{assumption}[Conditional Exogeneity of Instruments]\label{ass:exog}
  \[
    Z \indep \big( Y(0), \Delta \big) \mid W.
  \]
  We implicitly assume finite second moments for all relevant variables.
\end{assumption}

\begin{assumption}[Conditional Relevance]\label{ass:relevance}
  $\E[\tilde{X} \tilde{g}] \neq 0 $.
\end{assumption}

Because $\beta_{IV}$ is invariant to rescaling the signal, and in particular to its sign, formally shown in Lemma~\ref{lem:hgeom} below, we adopt throughout the harmless normalization $\E[\tilde{X}\tilde{g}]>0$. This is a convention rather than an additional restriction, as it fixes the denominator's sign so that the following discussion of non-negative weights and convex combination read literally.

\begin{assumption}[Conditional Covariance Monotonicity, CCM]\label{ass:ccm}
  Let $F_\Delta(\cdot)$ be the marginal cumulative distribution function of $\Delta$, with support $\Supp(\Delta)$. Then
  \[
    \E\big[\tilde{X} \tilde{g} \mid \Delta = \delta\big] \ge 0 \quad \text{a.s.}
  \]
\end{assumption}

Within an effect stratum, CCM requires that the signal not, on average, push treatment the wrong way for units sharing a given effect. The failure case is a high effect subgroup whose treatment responds negatively to the signal, which can then receive negative SWATE weight. Because CCM constrains only a conditional covariance rather than individual response functions, it is generally weaker than the monotonicity assumptions of the literature. We show in following sections that monotonicity in \citet{imbens1994identification} implies it in the scalar threshold model, and vector monotonicity \citep{goff2024vector} together with positive association of the instruments implies it for $g_X$ with multiple instruments.

Note that CCM is not directly testable, and its weight on $\Delta$ is not identified. Our defense is instead microfoundations from primitive choice behavior in Section~\ref{sec:microccm}, which restricts only the observable distribution of $Z$ given $W$.

Conditional Relevance also matters for inference. The SWATE representation remains valid whenever $\E[\tilde{X}\tilde{g}]\neq 0$, but estimation can be imprecise when the signal weakly predicts treatment after partialling out $W$. This motivates the confidence sets in Section~\ref{sec:diagnostics}, which invert tests based on the orthogonal score without dividing by an estimated Jacobian.

\subsection{The representation theorems}\label{sec:representation}

\begin{theorem}[Signal-weighted ATE]\label{thm:representation}\label{thm:cpt}\label{thm:weights}
  Suppose Assumptions~\ref{ass:exog} and~\ref{ass:relevance} hold, and let $g$ be any measurable signal. Then
  \begin{itemize}
    \item[(i)] the generalized partialled-out IV estimand $\beta_{IV}$ admits a representation as a Signal-Weighted Average Treatment Effect (SWATE),
          \[
            \beta_{IV} = \E_\Delta \big[ \Delta \cdot \omega(\Delta) \big] = \int_{\Supp(\Delta)} \delta \cdot \omega(\delta)\, dF_\Delta(\delta),
            \qquad
            \omega(\delta) = \frac{\E[\tilde{X} \tilde{g} \mid \Delta = \delta]}{\E[\tilde{X} \tilde{g}]};
          \]
    \item[(ii)] the weighting function integrates to one,
          \[
            \E_\Delta\big[\omega(\Delta)\big] = 1;
          \]
    \item[(iii)] if in addition Assumption~\ref{ass:ccm} (CCM) holds, $\omega(\delta) \ge 0$ a.s., and consequently $\beta_{IV}$ identifies a well-defined convex combination of the heterogeneous treatment effects $\Delta$.
  \end{itemize}
\end{theorem}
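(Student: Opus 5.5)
The plan is to compute the numerator $\E[\tilde Y\tilde g]$ directly from the CRC equation~\eqref{eq:crc}, using conditional exogeneity to eliminate the $Y(0)$ term. Parts (ii) and (iii) will then follow by the tower property and a sign argument.

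First, since $\tilde g = g - \E[g\mid W]$ has $\E[\tilde g\mid W]=0$, for any square-integrable $A$ we have $\E[\tilde A\tilde g]=\E[A\tilde g]$, because the term $\E[\E[A\mid W]\tilde g]$ vanishes by iterated expectations. Hence
\[
\E[\tilde Y\tilde g]=\E[Y(0)\tilde g]+\E[X\Delta\,\tilde g].
\]
For the first term, condition on $W$. Since $\tilde g$ is a function of $(Z,W)$ and $Z\indep Y(0)\mid W$ (Assumption~\ref{ass:exog}), we get $\E[Y(0)\tilde g\mid W]=\E[Y(0)\mid W]\,\E[\tilde g\mid W]=0$.

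For the second term, I use the identity $\E[X\Delta\tilde g]=\E[\tilde X\Delta\tilde g]$. This requires $\E[\E[X\mid W]\Delta\tilde g]=0$, and it follows from $\E[\Delta\tilde g\mid W]=\E[\Delta\mid W]\E[\tilde g\mid W]=0$, again by Assumption~\ref{ass:exog}. Then, conditioning on $\Delta$,
\[
\E[\tilde X\Delta\tilde g]=\E\big[\Delta\,\E[\tilde X\tilde g\mid\Delta]\big]=\int \delta\,\E[\tilde X\tilde g\mid\Delta=\delta]\,dF_\Delta(\delta).
\]
Dividing by $\E[\tilde X\tilde g]$, which is nonzero by Assumption~\ref{ass:relevance}, gives (i) with the stated $\omega$.

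Part (ii) is the tower property: $\E_\Delta[\E[\tilde X\tilde g\mid\Delta]]=\E[\tilde X\tilde g]$. Part (iii) holds because, under the normalization $\E[\tilde X\tilde g]>0$, CCM makes the numerator of $\omega$ nonnegative a.s. Combined with (ii), $\omega\,dF_\Delta$ is then a probability measure, so $\beta_{IV}$ is a convex combination of effects.

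The main obstacle is the integrability needed to justify these conditional expectations, in particular the existence of $\E[\Delta\tilde X\tilde g]$ and of $\E[\tilde X\tilde g\mid\Delta]$. Finite second moments give $\tilde X\tilde g\in L^1$, so the conditional expectation given $\Delta$ is well defined. The product $\Delta X\tilde g$, however, involves three factors, so I will assume, as part of the paper's implicit moment conditions, that $\E|X\Delta\tilde g|<\infty$. For example, this holds if $X\Delta\in L^2$, which is needed anyway for $Y\in L^2$ to be coherent with~\eqref{eq:crc}. Under that condition, pulling $\Delta$ out of $\E[\cdot\mid\Delta]$ is legitimate.
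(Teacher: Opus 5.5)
Your proposal is correct and follows the paper's proof step for step. In both, exogeneity plus $\E[\tilde g\mid W]=0$ removes the $Y(0)$ term, and $\E[\Delta\tilde g\mid W]=0$ lets you replace $X$ by $\tilde X$; conditioning on $\Delta$ and dividing by the positive denominator then gives (i), while the tower property and CCM give (ii)--(iii).

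Your integrability remark is more careful than the paper, which leaves moments implicit, but $\E|X\Delta\tilde g|<\infty$ alone does not control the term $\E[X\mid W]\,\Delta\tilde g$ in your $X\to\tilde X$ swap. You can bypass this by proving $\E[X\tilde g\mid\Delta]=\E[\tilde X\tilde g\mid\Delta]$ directly, which needs only second moments via $\E[\tilde g\mid W,\Delta]=0$, and then conditioning $\E[X\Delta\tilde g]$ on $\Delta$.
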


The representation draws only on conditional exogeneity and relevance. It therefore holds for any measurable $g$, in particular a black-box, machine-learned first stage. This is what equips otherwise opaque machine-learning IV estimands with an exact structural meaning.

In the scalar one-instrument case this weighting scheme is not new. \citet{huntingtonklein2020instruments} shows that IV under first-stage heterogeneity identifies the first stage effect weighted average $\E[\gamma\beta]/\E[\gamma]$, with $\gamma$ the individual first-stage response. This is Theorem~\ref{thm:representation} for a single raw instrument, with his $\gamma$ in the role of our $\E[\tilde{X}\tilde{g}\mid\Delta]$. Theorem~\ref{thm:representation} generalizes it to any measurable signal $g(Z,W)$. Our weights act on the structural effects $\Delta$, whereas the Heckman--Vytlacil identification invoked by \citet{chen2021mostly} places weights on marginal treatment effects, convex only under a single-index structure that Theorem~\ref{thm:representation} does not require.

Textbook just-identified IV is also covered. Taking the signal to be a fixed scalar instrument, $g = Z$, Theorem~\ref{thm:cpt} gives
\[
  \beta_{IV} = \frac{\E[\tilde{Y}\tilde{Z}]}{\E[\tilde{X}\tilde{Z}]}
  = \E_\Delta\big[\Delta\,\omega(\Delta)\big],
  \qquad
  \omega(\delta) = \frac{\E[\tilde{X}\tilde{Z}\mid\Delta=\delta]}{\E[\tilde{X}\tilde{Z}]},
\]
the Wald/2SLS estimand with covariates partialled out.

\subsection{Covariance Monotonicity and its Microfoundation}\label{sec:microccm}

As mentioned, CCM is not directly testable and its warrant is that primitive models of treatment choice imply it. It holds automatically in the single-index threshold model $X = h(g(Z), U, W)$ with $h$ weakly increasing in $g$. The map $g \mapsto \E[X \mid g, W, \Delta=\delta]$ is then non-decreasing and covaries non-negatively with $g$ by Chebyshev's association inequality, so CCM holds by construction. But a common monotone index requires any given change in the instruments to affect treatment in a common direction across individuals with the same covariates. \citet{mogstad2021causal} show why this restriction can be strong with multiple instruments. Their response is to relax single-index monotonicity to a multidimensional monotone choice model. They work with partial monotonicity, under which each instrument moves treatment in a common direction across individuals while that direction may itself depend on the levels of the other instruments. The stronger actual monotonicity, in which each instrument's direction is fixed globally, coincides with the vector monotonicity of \citet{goff2024vector}. We show that, combined with a positive association assumption, vector monotonicity implies CCM, ensuring non-negative SWATE weights without any single-index restriction.

Let the treatment be generated by an arbitrary measurable structural map
\begin{equation}\label{eq:structural-h}
  X = h(Z, W, V),
\end{equation}
where $V \in \mathcal{V}$ is an unobserved, possibly infinite-dimensional vector of choice heterogeneity. This nests the single-index model $h(g(Z), U, W)$ but allows the relative responses to different coordinates of $Z$ to vary across individuals. We take the signal to be the optimal first stage
\[
  g_X(Z,W) := \E[X \mid Z, W].
\]
The same projection is used for estimation in Section~\ref{sec:estimation}, and write $m_X(W) := \E[X \mid W] = \E[g_X \mid W]$, so that $\tilde{g}_X = g_X - m_X$.

\begin{assumption}[Extended Conditional Exogeneity]\label{ass:exogext}
  \[
    Z \indep \big( Y(0), \Delta, V \big) \mid W.
  \]
\end{assumption}

\begin{assumption}[Vector Monotonicity, VM]\label{ass:pm}
  Almost surely, the choice map $z \mapsto h(z, W, V)$ is coordinate-wise non-decreasing on $\mathbb{R}^d$.
\end{assumption}

\begin{assumption}[Positive Association, PA]\label{ass:pa}
  For any two coordinate-wise non-decreasing functions $\phi, \psi : \mathbb{R}^d \to \mathbb{R}$ with finite second moments,
  \[
    \Cov\big(\phi(Z), \psi(Z) \mid W\big) \ge 0 \quad \text{a.s.}
  \]
\end{assumption}

Assumption~\ref{ass:exogext} extends the conditional independence restriction in Assumption~\ref{ass:exog} to include the selection unobservable $V$. VM follows \citet{goff2024vector} and coincides with the actual monotonicity of \citet{mogstad2021causal}. It assumes each individual's choice map is coordinate-wise non-decreasing under a common orientation of the instruments. Extended exogeneity then ensures that both $g_X$ and the mean treatment response within each effect stratum inherit this monotonicity.

PA makes the covariance between these functions non-negative conditional on $W$ \citep{esary1967association}. PA holds under any of the following conditions on the distribution of $Z\mid W$: mutually independent coordinates, affiliation \citep{milgrom1982theory}, or joint Gaussianity with non-negative pairwise correlations \citep{pitt1982positively}. Outside such special cases, non-negative pairwise correlations alone do not guarantee PA.

Two practical points attach to PA. First, unlike CCM, it is refutable. It restricts only the observable law of $Z$ given $W$, so its pairwise implications (positive quadrant dependence of each $(Z_i,Z_j)$ given $W$) can be assessed with the data, even if full multivariate association is harder to test. Second, association is directional, so the instruments must be sign-normalized first. Orient each coordinate of $Z$ in the direction of its own first-stage effect on $X$ so that VM reads coordinate-wise non-decreasing, and impose PA on the oriented vector. This is a design step for the analyst rather than an extra assumption.

\begin{proposition}[Microfounding CCM via Vector Monotonicity]\label{prop:pmccm}
  Suppose Assumptions~\ref{ass:exogext}--\ref{ass:pa} hold and $\E[X^2] < \infty$. Then, for the signal $g_X(Z,W) = \E[X \mid Z,W]$, CCM holds,
  \[
    \E\big[\tilde{X}\,\tilde{g}_X \mid \Delta = \delta\big] \ge 0 \quad \text{a.s.}
  \]
  Consequently, under Assumption~\ref{ass:relevance}, Theorem~\ref{thm:weights} applies.
\end{proposition}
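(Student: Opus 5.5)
The plan is to condition on $(W,\Delta)$ first and then average over $W$. On each stratum $\{W=w,\Delta=\delta\}$ I will write the conditional covariance of $\tilde X$ and $\tilde g_X$ as an ordinary covariance, under the law of $Z\mid W=w$, of two coordinate-wise non-decreasing functions of $Z$. Assumption~\ref{ass:pa} then makes it non-negative. Concretely, by the tower property,
\[
  \E\big[\tilde X\,\tilde g_X \mid \Delta=\delta\big]=\E\Big[\E\big[\tilde X\,\tilde g_X \mid W,\Delta\big]\,\Big|\,\Delta=\delta\Big],
\]
so it suffices to show that $\E[\tilde X\,\tilde g_X\mid W,\Delta]\ge 0$ almost surely.

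\textbf{Step 1: the two monotone functions.} For fixed $w$ and $\delta$, define
\[
  g_X(z,w)=\E\big[h(z,w,V)\mid W=w\big],\qquad \mu(z,w,\delta)=\E\big[h(z,w,V)\mid W=w,\Delta=\delta\big].
\]
Assumption~\ref{ass:exogext} gives $Z\indep V\mid W$ and $Z\indep(\Delta,V)\mid W$. Because of this, conditioning on $Z=z$ does not change the law of $V$ given $W$, or of $V$ given $(W,\Delta)$, so these formulas are valid versions of $\E[X\mid Z,W]$ and $\E[X\mid Z,W,\Delta]$. Both functions are coordinate-wise non-decreasing in $z$, because they integrate the non-decreasing map $z\mapsto h(z,w,V)$ of Assumption~\ref{ass:pm} against a measure that does not depend on $z$. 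Their second moments are finite by conditional Jensen and $\E[X^2]<\infty$.

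\textbf{Step 2: reduce the conditional moment to a covariance.} Write $\tilde X\tilde g_X = X\tilde g_X - m_X(W)\tilde g_X$. By $Z\indep\Delta\mid W$, the law of $Z$ given $(W,\Delta)$ equals the law of $Z$ given $W$. Hence
\[
  \E[\tilde g_X\mid W,\Delta]=\E[\tilde g_X\mid W]=0,
\]
so the $m_X$ term vanishes. For the first term, condition further on $Z$:
\[
  \E[X\tilde g_X\mid W,\Delta]=\E[\mu(Z,W,\Delta)\,\tilde g_X(Z,W)\mid W,\Delta].
\]
Applying $Z\indep\Delta\mid W$ once more, this equals $\Cov\big(\mu(Z,w,\delta),\,g_X(Z,w)\mid W=w\big)$ evaluated at $(w,\delta)=(W,\Delta)$. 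The mean correction in the covariance is harmless because $\E[\tilde g_X\mid W]=0$.

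\textbf{Step 3: apply positive association and integrate.} For almost every $(w,\delta)$, Assumption~\ref{ass:pa} applied to $\phi=\mu(\cdot,w,\delta)$ and $\psi=g_X(\cdot,w)$ gives a non-negative covariance. Integrating over $W\mid\Delta=\delta$ yields CCM. The final claim then follows immediately from Theorem~\ref{thm:weights}(iii).

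\textbf{Main obstacle.} The delicate part is the measure-theoretic bookkeeping in Steps 1 and 2. Assumption~\ref{ass:pa} is stated for fixed deterministic $\phi,\psi$, but here $\mu$ depends on $(w,\delta)$. I would handle this by working with a regular conditional distribution of $(V,\Delta)$ given $W$ and of $Z$ given $W$; these exist since the spaces are assumed standard Borel. I would choose versions of $\mu$ and $g_X$ that are jointly measurable and monotone in $z$ for every $(w,\delta)$, modifying them on null sets if needed. Then I would apply Assumption~\ref{ass:pa} pointwise in $(w,\delta)$ outside a null set, with $\Delta$ entering only as a parameter. The joint independence $Z\indep(\Delta,V)\mid W$, rather than separate marginal independences, is exactly what lets the disintegration factor in this way.
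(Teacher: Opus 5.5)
Your proposal is correct and follows essentially the paper's proof. Both arguments condition on $(W,\Delta)$ and use $Z\indep\Delta\mid W$ so that the $m_X$ terms drop out. The stratum moment then becomes a covariance, under $F_{Z\mid W}$, of two coordinate-wise non-decreasing integrals of $h$; your $\mu$ is the paper's $f_\delta$, justified by $Z\indep V\mid(W,\Delta)$, which follows from the joint independence. Both then apply PA and integrate over $W\mid\Delta=\delta$. Your remarks on regular conditional distributions and on applying PA pointwise in $(w,\delta)$ tighten a step the paper treats informally; they do not change the route.
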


\begin{remark}\label{rem:pa-binding}
  Assumption~\ref{ass:exogext} and VM alone do not imply CCM. If individuals with effect $\delta$ respond mainly to one instrument while $g_X$ is driven by another, dependence between the instruments can make their mean treatment response covary negatively with $g_X$ conditional on $W$. A negative average of these covariances over $W\mid\Delta=\delta$ gives $\omega(\delta)<0$.
\end{remark}

Mutually exclusive assignment indicators illustrate a limitation of PA. Within a court $\times$ period cell, indicators for assignment to different judges have negative covariance. With three or more such indicators, sign changes cannot eliminate all negative pairwise covariances. The argument based on VM and PA therefore provides no convexity guarantee for this representation. Since PA is sufficient but not necessary for CCM, its failure does not imply negative SWATE weights.

\subsection{Relation to the LATE Framework}\label{sec:multipleLATE}

We now specialize to a threshold-crossing model with binary treatment. Under the conditions stated below, the SWATE decomposes into a convex combination of margin-specific LATEs.

First fix a covariate value $w$. Define the conditional first-stage covariance and the corresponding within-stratum covariance ratio by
\[
  D(w):=\E[\tilde X\tilde g\mid W=w],
  \qquad
  \beta_{IV}(w):=\frac{\E[\tilde Y\tilde g\mid W=w]}{D(w)}.
\]
Until Corollary~\ref{cor:late-unconditional}, all expectations, probabilities, and distributions are taken under the conditional distribution given $W=w$, and the dependence of the objects below on $w$ is suppressed. The paper's unconditional estimand $\beta_{IV}$ is not an unweighted average of the ratios $\beta_{IV}(w)$, and the corollary gives the weighted aggregation.

Consider a binary treatment $X\in\{0,1\}$ generated by a canonical threshold-crossing model
\[
  X \;=\; \mathbf{1}\{\,U \le g(Z,W)\,\},
\]
where $U$ is unobserved treatment resistance. We use Assumption~\ref{ass:exogext} with the selection unobservable specialized to $V=U$. Because $g$ is measurable in $(Z,W)$, this condition implies
\[
  g(Z,W)\indep\bigl(Y(0),\Delta,U\bigr)\mid W.
\]
Suppose the signal $g$ takes $L$ ordered values $g_1<g_2<\dots<g_L$ with $\mathbb{P}(g=g_\ell)=p_\ell$, write $\bar g:=\E[g]=\sum_\ell p_\ell g_\ell$, and recall $\tilde g = g-\bar g$, $\tilde X = X-\E[X]$. For $\ell=2,\dots,L$ define the margin-$\ell$ complier event
\[
  C_\ell \;:=\; \{\, g_{\ell-1} < U \le g_\ell \,\},
\]
so that an individual is treated once the signal reaches level $g_\ell$ but not at $g_{\ell-1}$. Units with $U\le g_1$ are always-takers and units with $U>g_L$ are never-takers. For both groups, $X$ is constant in $g$. Write $\pi_\ell(\delta):=\mathbb{P}(C_\ell\mid\Delta=\delta)$ and $\bar\pi_\ell:=\mathbb{P}(C_\ell)$, and define
\[
  \mathcal L_+ := \{\ell\in\{2,\dots,L\}:\bar\pi_\ell>0\},
  \qquad
  \mathrm{LATE}_\ell:=\E[\Delta\mid C_\ell],\quad \ell\in\mathcal L_+,
\]
where $\mathrm{LATE}_\ell$ is the local average treatment effect \citep{imbens1994identification} for margin-$\ell$ compliers. Zero-probability margins are omitted throughout. Define the covariance increments
\[
  S_\ell \;:=\; \E\!\big[(g-\bar g)\,\mathbf{1}\{g\ge g_\ell\}\big]
  \;=\; \mathbb{P}(g\ge g_\ell)\,\big(\E[g\mid g\ge g_\ell]-\E[g]\big)
  \;=\; \Cov\!\big(\mathbf{1}\{g\ge g_\ell\},\,g\big)\;\ge\;0 ,
\]
where the inequality holds because $\mathbf{1}\{g\ge g_\ell\}$ and $g$ are both nondecreasing in $g$.

\begin{proposition}[Multiple-Instrument LATE Decomposition]\label{thm:multipleLATE}
  Suppose Assumption~\ref{ass:exogext} holds with $V=U$. For almost every covariate value $w$ with $D(w)>0$, the threshold-crossing model above implies:
  \begin{itemize}
    \item[(i)] the projection estimand is a convex combination of margin-specific LATEs,
          \[
            \beta_{IV}(w) \;=\; \sum_{\ell\in\mathcal L_+} \lambda_\ell\,\mathrm{LATE}_\ell,
            \qquad
            \lambda_\ell \;:=\; \frac{S_\ell\,\bar\pi_\ell}{\sum_{\ell'\in\mathcal L_+} S_{\ell'}\,\bar\pi_{\ell'}},
          \]
          with $\lambda_\ell\ge0$ and $\sum_{\ell\in\mathcal L_+}\lambda_\ell=1$;
    \item[(ii)] for almost every $\delta$, the within-stratum SWATE weighting function
          \[
            \omega_w(\delta):=\frac{\E[\tilde X\tilde g\mid\Delta=\delta,W=w]}{D(w)}
          \]
          admits the closed form
          \begin{equation}\label{eq:complierdensity}
            \omega_w(\delta) \;=\; \sum_{\ell\in\mathcal L_+} \lambda_\ell\,
            \frac{dF_{\Delta\mid C_\ell}(\delta)}{dF_\Delta(\delta)},
          \end{equation}
          with the same weights $\lambda_\ell$, where all displayed distributions are conditional on $W=w$ and $dF_{\Delta\mid C_\ell}/dF_\Delta$ is the complier likelihood ratio, equivalently $\mathbb{P}(C_\ell\mid\Delta=\delta)/\mathbb{P}(C_\ell)$.
  \end{itemize}
\end{proposition}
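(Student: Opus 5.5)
We work throughout conditional on $W=w$ and suppress $w$. The plan is to compute the numerator $\E[\tilde X\tilde g\mid\Delta=\delta]$ and the outcome covariance $\E[\tilde Y\tilde g]$ in closed form using the threshold structure.

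The first step is a telescoping decomposition of treatment. Since $X=\mathbf 1\{U\le g\}$ and $g$ takes the ordered values $g_1<\dots<g_L$, we can write
\[
X=\mathbf 1\{U\le g_1\}+\sum_{\ell=2}^L \mathbf 1\{C_\ell\}\,\mathbf 1\{g\ge g_\ell\}.
\]
This holds because the events $C_\ell$ are disjoint, and the individual is treated exactly when $U\le g_1$ or when $U\in C_\ell$ for some $\ell$ with $g\ge g_\ell$.

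The second step computes the conditional numerator. Since $\tilde g=g-\bar g$ has mean zero, centering $X$ does not matter, so
\[
\E[\tilde X\tilde g\mid\Delta=\delta]=\E[X(g-\bar g)\mid\Delta=\delta].
\]
By the exogeneity implication $g\indep(Y(0),\Delta,U)$ (given $W$), the pair $(\Delta,U)$ is independent of $g$. We need joint independence here, not just marginal independence of $\Delta$ and of $U$ from $g$.
\begin{itemize}
\item[(a)] The always-taker term contributes $\mathbb P(U\le g_1\mid\Delta=\delta)\,\E[g-\bar g]=0$.
\item[(b)] Each margin term factors as $\mathbb P(C_\ell\mid\Delta=\delta)\,\E[(g-\bar g)\mathbf 1\{g\ge g_\ell\}]=\pi_\ell(\delta)S_\ell$.
\end{itemize}
Hence $\E[\tilde X\tilde g\mid\Delta=\delta]=\sum_\ell S_\ell\pi_\ell(\delta)$. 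Integrating over $\delta$ gives $D=\sum_\ell S_\ell\bar\pi_\ell$. Margins with $\bar\pi_\ell=0$ have $\pi_\ell(\delta)=0$ for a.e.\ $\delta$, so they drop out and the sums run over $\mathcal L_+$. This yields
\[
\omega_w(\delta)=\sum_{\ell\in\mathcal L_+}\frac{S_\ell\bar\pi_\ell}{D}\cdot\frac{\pi_\ell(\delta)}{\bar\pi_\ell}.
\]
By Bayes' rule, $\pi_\ell(\delta)/\bar\pi_\ell$ is the likelihood ratio $dF_{\Delta\mid C_\ell}/dF_\Delta$, which proves (ii). The weights satisfy $\lambda_\ell\ge0$ because $S_\ell\ge0$ (shown before the statement) and $\bar\pi_\ell\ge0$. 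They sum to one by construction, and the normalization is well defined because $D>0$.

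The third step is part (i). We apply Theorem~\ref{thm:representation}(i) within the stratum $W=w$; the same Frisch--Waugh argument goes through conditionally on $W=w$, using $Y=Y(0)+X\Delta$ and $g\indep Y(0)$. This gives $\beta_{IV}(w)=\E[\Delta\,\omega_w(\Delta)]$. Substituting (ii) gives
\[
\beta_{IV}(w)=\sum_\ell\lambda_\ell\,\E[\Delta\,\pi_\ell(\Delta)]/\bar\pi_\ell=\sum_\ell\lambda_\ell\,\E[\Delta\mathbf 1\{C_\ell\}]/\mathbb P(C_\ell)=\sum_\ell\lambda_\ell\,\mathrm{LATE}_\ell .
\]
Alternatively, we can compute $\E[\tilde Y\tilde g]$ directly from the telescoped $X$.
\begin{itemize}
\item[(a)] The term $\E[Y(0)(g-\bar g)]$ vanishes by independence.
\item[(b)] The always-taker term is $\E[\Delta\mathbf 1\{U\le g_1\}]\cdot\E[g-\bar g]=0$.
\item[(c)] The margin terms are $S_\ell\,\E[\Delta\mathbf 1\{C_\ell\}]$.
\end{itemize}
Dividing by $D$ then gives the same result.

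We expect the main obstacle to be measure-theoretic. First, conditioning on the event $\Delta=\delta$ and dividing by $dF_\Delta$ requires care: $\pi_\ell(\delta)$ must be defined as a regular conditional probability, and the identities hold for a.e.\ $\delta$. Second, the Bayes/Radon--Nikodym identification of $\pi_\ell/\bar\pi_\ell$ with $dF_{\Delta\mid C_\ell}/dF_\Delta$ needs justification; absolute continuity of $F_{\Delta\mid C_\ell}$ with respect to $F_\Delta$ is automatic, so this should be routine. Third, the finite-moment conditions from Assumption~\ref{ass:exog} justify the factorization of expectations. Beyond these, the conditional-on-$w$ versions of the independence statements also need checking, and this is the step to verify most carefully.
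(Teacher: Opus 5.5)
Your proof is correct and follows essentially the paper's route. Your pathwise telescoping $X=\mathbf 1\{U\le g_1\}+\sum_{\ell\ge2}\mathbf 1\{C_\ell\}\mathbf 1\{g\ge g_\ell\}$ is the random-variable form of the paper's Abel summation of $\sum_j p_j(g_j-\bar g)F^\delta(g_j)$ with $F^\delta(g_j)=\sum_{\ell\le j}\pi_\ell(\delta)$ (Lemma~\ref{lem:margin}), after which both arguments use Theorem~\ref{thm:representation} within the stratum and Bayes' rule exactly as you do. One small precision: centering $X$ is harmless because $\E[\tilde g\mid\Delta=\delta]=0$ by $g\indep\Delta$ given $W=w$, not merely because $\E[\tilde g]=0$.
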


\begin{corollary}[Unconditional Decomposition]\label{cor:late-unconditional}
  Under the threshold-crossing model, suppose Assumptions~\ref{ass:exogext} and~\ref{ass:relevance} hold, with $V=U$ in Assumption~\ref{ass:exogext} and the positive-denominator normalization, and suppose $g$ has finite conditional support for almost every covariate value. Restoring the dependence on $w$ in Proposition~\ref{thm:multipleLATE},
  \begin{equation}\label{eq:unconditional-late}
    \begin{aligned}
      \beta_{IV}
       & =\frac{\displaystyle\int_{\{w:\,D(w)>0\}}
      D(w)\beta_{IV}(w)\,dF_W(w)}{\E[D(W)]}                  \\
       & =\int_{\{w:\,D(w)>0\}}\sum_{\ell\in\mathcal L_+(w)}
      \frac{D(w)\lambda_\ell(w)}{\E[D(W)]}
      \mathrm{LATE}_\ell(w)\,dF_W(w).
    \end{aligned}
  \end{equation}
  The displayed joint $(W,\ell)$ weights are non-negative and integrate to one, so $\beta_{IV}$ is a convex mixture of the covariate- and margin-specific local effects.
\end{corollary}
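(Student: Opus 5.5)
The plan is to aggregate the within-stratum ratios using the law of iterated expectations, then insert Proposition~\ref{thm:multipleLATE}(i) stratum by stratum. The first task is to pin down the residuals. In the corollary, $\tilde A = A-\E[A\mid W]$ is residualized on $W$. Conditional on $W=w$ this coincides with the within-stratum centring used in Proposition~\ref{thm:multipleLATE}, where $\tilde X = X-\E[X\mid W=w]$ and $\tilde g = g-\bar g(w)$. Hence $\E[\tilde X\tilde g\mid W=w]=D(w)$ and $\E[\tilde Y\tilde g\mid W=w]=D(w)\beta_{IV}(w)$ whenever $D(w)\neq0$.

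The next step is to establish $D(w)\ge 0$ almost surely. This requires no Proposition; it follows from the threshold model and the exogeneity $g\indep U\mid W$. Conditional on $W=w$, we have $\E[X\mid g,W=w]=F_{U\mid W=w}(g)$, which is nondecreasing in $g$. By Chebyshev's association inequality,
\[
D(w)=\Cov\bigl(F_{U\mid W=w}(g),\,g\mid W=w\bigr)\ge 0 .
\]
Finite second moments make these covariances well defined. The strata with $D(w)=0$ then contribute nothing to the denominator. They also contribute nothing to the numerator, because there $\E[\tilde Y\tilde g\mid W=w]=0$.

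I expect that last claim to be the main obstacle. I would handle it by writing $\E[\tilde Y\tilde g\mid W=w]=\E[\tilde X\tilde g\,\Delta\mid W=w]$, which uses $Y(0)\indep g\mid W$ and $\E[\tilde g\mid W]=0$. Using the complier structure from the proof of Proposition~\ref{thm:multipleLATE}, this equals $\sum_\ell S_\ell(w)\,\E[\Delta\,\mathbf 1_{C_\ell}\mid W=w]$. In the same way, $D(w)=\sum_\ell S_\ell(w)\bar\pi_\ell(w)$. If $D(w)=0$, every term with $S_\ell\bar\pi_\ell>0$ vanishes. A term with $S_\ell>0$ forces $\bar\pi_\ell=0$, which kills $\E[\Delta\mathbf 1_{C_\ell}]$ since $\Delta$ is integrable. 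A term with $S_\ell=0$ is zero trivially. The finite-support assumption is used here and in the Proposition.

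With these pieces in place, iterated expectations give
\[
\beta_{IV}=\frac{\E\bigl[\E[\tilde Y\tilde g\mid W]\bigr]}{\E\bigl[\E[\tilde X\tilde g\mid W]\bigr]}=\frac{\int_{\{D>0\}}D(w)\beta_{IV}(w)\,dF_W}{\E[D(W)]},
\]
and Assumption~\ref{ass:relevance} with the positive normalization gives $\E[D(W)]>0$. This is the first line of \eqref{eq:unconditional-late}.

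Substituting Proposition~\ref{thm:multipleLATE}(i) for each $w$ with $D(w)>0$ yields the second line. For non-negativity, note that $D(w)>0$, $\lambda_\ell(w)\ge0$, and $\E[D(W)]>0$. For integration to one, $\sum_\ell\lambda_\ell(w)=1$ and
\[
\int_{\{D>0\}}D\,dF_W=\E[D(W)],
\]
where the last equality uses $D\ge0$. Measurability of $w\mapsto\lambda_\ell(w)\mathrm{LATE}_\ell(w)$ follows from regular conditional distributions, and I would only remark on it.
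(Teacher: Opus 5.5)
Your proposal is correct and follows the paper's proof: iterated expectations over $W$, Proposition~\ref{thm:multipleLATE}(i) on strata with $D(w)>0$, and the margin decomposition of Lemma~\ref{lem:margin} to show that both conditional moments vanish when $D(w)=0$. The differences are minor. You obtain $D(w)\ge0$ from Chebyshev's association inequality rather than from $D(w)=\sum_\ell S_\ell\bar\pi_\ell$, and you write out the term-by-term argument for the null strata, which the paper only asserts.
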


Corollary~\ref{cor:late-unconditional} makes the covariate aggregation precise. A stratum is weighted by its conditional first-stage covariance $D(w)$ as well as its population frequency. A stratum with $D(w)=0$ contributes zero to both unconditional moments, even though its conditional ratio is undefined.

Unlike the aggregate two-stage least squares estimand, whose multiple-instrument weights can be negative \citep{mogstad2021causal}, the scalar-signal projection has non-negative weights both within covariate strata and unconditionally. The non-negative covariance increments $S_\ell$ generate the threshold-model CCM kernel, so collapsing the instruments into $g$ turns the ``causal salad'' back into a convex combination of margin-specific local effects. The $\lambda_\ell$ are the ordered instrument analogue of \citet{angrist1995two}. Where \citet{mogstad2021causal} obtain a decomposition of the 2SLS estimand and \citet{goff2024vector} identify complier parameters under vector monotonicity, we characterize what the workhorse projection estimand is, and Section~\ref{sec:estimation} shows how to make its inference robust.

Part (ii) shows where the within-stratum weight comes from. Equation~\eqref{eq:complierdensity} expresses $\omega_w$ as a $\lambda$-convex average of complier likelihood ratios, and the ratio $dF_{\Delta\mid C_\ell}/dF_\Delta$ measures how over- or under-represented an effect $\delta$ is among margin-$\ell$ compliers in that stratum, so $\omega_w(\delta)>1$ when $\delta$ is over-represented. An effect $\delta^\ast$ occurring only among always- or never-takers has $\pi_\ell(\delta^\ast)=0$, hence $\omega_w(\delta^\ast)=0$. This is the CRC generalization of LATE's always-/never-taker zero-weighting, and the formal counterpart of \citet{huntingtonklein2020instruments}'s scalar observation that units with $\gamma_i\approx 0$ receive vanishing weight.

\section{The Moving Target: What Naive DML Estimates}\label{sec:estimation}

Everything to this point concerns the population signal. We now show the fact that the optimal signal $g_X=\E[X\mid Z,W]$ must be learned, and that learning it breaks the standard moment.

The sample analogue of the covariance ratio is not robust to the first-order estimation error of a machine-learned signal. Under essential heterogeneity, the naive moment is not Neyman orthogonal, so plugging in a slowly converging $\hat{g}$ contaminates $\hat{\beta}_{IV}$ with regularization bias. This section formalizes that obstruction and quantifies it, showing that the naive plug-in validly estimates a moving target that drifts with the learner. Section~\ref{sec:fixedtarget} then constructs a locally robust score in the sense of \citet{chernozhukov2022locally}, which restores $\sqrt{N}$-consistent and asymptotically normal estimation of the fixed target.

\citet{chen2021mostly} establish asymptotic normality for a fixed coefficient in a linear IV model with machine-learned instruments. Their moment restrictions on the structural error eliminate the first-order effect of estimating the instrument. Selection on heterogeneous effects can violate these restrictions in our CRC model, allowing first-stage estimation error to enter the moment for the coefficient.

We adopt one notational convention for the whole section. The standalone symbol $\Delta$ always denotes the structural effect of~\eqref{eq:crc}, whereas a nuisance estimation error always carries a nuisance operand, as in $\Delta l_Y := \hat{l}_Y - l_Y$ and likewise $\Delta m_X, \Delta q_Y, \Delta g_X$, collected as $\Delta\eta := \hat{\eta} - \eta_0$. The direction $\delta_g \in L_2(\sigma(Z,W))$ denotes a perturbation of the signal.

\subsection{The Learned Signal and the Target}\label{sec:nuisances}

To make the estimand feasible to estimate, we fix the generic signal to the structurally optimal quantity
\begin{equation}\label{eq:gx}
  g(Z,W) = g_X(Z,W) := \E[X \mid Z, W].
\end{equation}
This is the projection of the instrument vector $Z$ onto the treatment, which is what a first-stage learner targets. The four nuisance functions are collected in the vector $\eta = (l_Y, m_X, q_Y, g_X)$,
\[
  \begin{gathered}
    l_Y(W) := \E[Y\mid W], \qquad m_X(W) := \E[X\mid W], \\[2pt]
    q_Y(Z,W) := \E[Y\mid Z,W], \qquad g_X(Z,W) := \E[X\mid Z,W].
  \end{gathered}
\]
Note that $\E[g_X\mid W] = m_X$, and the residualized signal is $\tilde{g} = g_X - m_X$. The estimand~\eqref{eq:betaiv} can be written as
\begin{equation}\label{eq:beta0}
  \beta_0 \;:=\; \beta_{IV}(g_X) \;=\; \frac{\E\big[(Y - l_Y)(g_X - m_X)\big]}{\E\big[(X - m_X)(g_X - m_X)\big]} \;=\; \frac{\E\big[(Y - l_Y)(g_X - m_X)\big]}{\E\big[(g_X - m_X)^2\big]},
\end{equation}
where the last equality uses $\E[(X - m_X)(g_X - m_X)] = \E[(g_X - m_X)^2]$, which holds because $\E[X - m_X\mid Z,W] = g_X - m_X$. As a specialization of $\beta_{IV}$ with optimal first stage $g_X$, the estimand $\beta_0$ inherits the SWATE interpretation of Theorem~\ref{thm:cpt}.

For a binary examiner-IV design, relabel $X$ as $T$ and define the residualized leniency instrument $\gamma := \E[T\mid Z,W]-\E[T\mid W]=g_X-m_X$. Because $\E[\gamma\mid W]=0$, Equation~\eqref{eq:beta0} becomes $\beta_0=\E[Y\gamma]/\E[T\gamma]$, and $\E[T\gamma]=\E[\gamma^2]$. Thus $\beta_0$ is the covariate-adjusted examiner-IV estimand of \citet{kolesar2013estimation} and the fixed target studied under flexible first-stage estimation by \citet{sithole2026locally}. At the population nuisance values, substituting Sithole's Riesz representers $\alpha_1=q_Y-\beta g_X$ and $\alpha_2=-l_Y+\beta m_X$ into his oracle score makes it algebraically identical to the CRC robust score of Section~\ref{sec:fixedtarget}. Sithole focuses on locally robust estimation of this binary examiner specialization, whereas our identification result covers arbitrary signals and our subsequent analysis characterizes learner-induced target drift and establishes the fixed target's efficient influence function.

Three properties recommend $g_X$ as the signal. First, it is the maximal-relevance choice as the projection maximizes covariance with $\tilde{X}$. Second, $\beta_{IV}$ is invariant to rescaling and to $W$-measurable recalibration of the signal (Lemma~\ref{lem:hgeom}). Third, $g_X$ is the classical optimal instrument under homogeneous treatment effects and homoskedastic outcome errors \citep{chamberlain1987asymptotic,newey1990semiparametric}.

\begin{definition}[Heterogeneity along the Signal]\label{def:esshet}
  The structural effects are correlated with selection in a manner that the instrument shifts,
  \[
    \mathbb{P}\Big( \E[X\Delta\mid Z,W] - \E[X\Delta\mid W] \;\neq\; \beta_0\,(g_X - m_X) \Big) > 0.
  \]
\end{definition}

To read this definition, define the heterogeneity residual
\begin{equation}\label{eq:Hdef}
  \mathcal{H}(Z,W) := \big(\E[X\Delta\mid Z,W] - \E[X\Delta\mid W]\big) - \beta_0\,(g_X - m_X).
\end{equation}
Definition~\ref{def:esshet} is the statement $\mathbb{P}(\mathcal{H}\neq 0) > 0$. When effects are constant, i.e., $\Delta = \beta_0$, one has $\E[X\Delta\mid Z,W] = \beta_0\,g_X$ and $\E[X\Delta\mid W] = \beta_0\,m_X$, so $\mathcal{H} = 0$. Under essential heterogeneity, $\mathcal{H}$ is the component of the instrument's effect on $\E[X\Delta\mid Z,W]$ that the linear signal $\beta_0(g_X - m_X)$ fails to absorb. We therefore call Definition~\ref{def:esshet} throughout as ``heterogeneity along the signal.''

The four nuisances $\eta$ are unknown and are estimated from the sample by $K$-fold cross-fitting \citep{chernozhukov2018double}. Partition the observations into $K \ge 2$ folds $\mathcal{I}_1, \dots, \mathcal{I}_K$ of equal size $n_k = N/K$, with $K$ fixed as $N \to \infty$, and write $\mathcal{D}_{-k}$ for the complementary subsample of observations outside $\mathcal{I}_k$. On each fold we fit the nuisance vector $\hat{\eta}_k = (\hat{l}_{Y,k}, \hat{m}_{X,k}, \hat{q}_{Y,k}, \hat{g}_{X,k})$ using only $\mathcal{D}_{-k}$, by any machine learner (such as lasso, random forests, gradient boosting, or an ensemble), and evaluate it on the held-out fold $\mathcal{I}_k$. Its $j$-th coordinate is written $\hat{\eta}_{j,k}$. Because $\hat{\eta}_k$ is trained without the observations at which it is used, it is independent of them and may be held fixed conditional on $\mathcal{D}_{-k}$, so overfitting does not enter the moment. When no fold index is displayed, $\hat{\eta}$ refers to this out-of-fold estimator, and fold by fold nuisance error is denoted as $\Delta\eta_k := \hat{\eta}_k - \eta_0$. The realized precision of the first stage is measured by the fold-wise $L_2$ errors
\[
  \rho_{j,k} := \|\hat{\eta}_{j,k} - \eta_{0,j}\|_{\mathbb{P},2} \ \ (j \in \{l_Y, m_X, q_Y, g_X\}), \qquad \rho_N := \max_{j,k}\rho_{j,k}.
\]
The next two assumptions place regularity and rate conditions on this construction.

\begin{assumption}[Boundedness and Moments]\label{ass:regB}
  (i) $\E[Y^4] + \E[X^4] \le C_4 < \infty$ and $\E[\Delta^2] + \E[X^2\Delta^2] < \infty$. (ii) There is $\bar{C} < \infty$ such that $\max_j \|\eta_{0,j}\|_\infty \le \bar{C}$ and, a.s., $\max_j \|\hat{\eta}_{j,k}\|_\infty \le \bar{C}$ for every fold $k$.
\end{assumption}

Part (ii) is innocuous. Truncating each fitted nuisance at $\bar{C} \ge \max_j\|\eta_{0,j}\|_\infty$ preserves cross-fitting and weakly reduces $L_2$ error, so it can alwaysbe imposed by the analyst.

\begin{assumption}[Product Rate Conditions]\label{ass:ratesP}
  (i) $\rho_N = o_{\mathbb{P}}(1)$; and (ii) $\max_k\big\{\rho_{m,k}^2 + \rho_{g,k}^2\big\} = o_{\mathbb{P}}(N^{-1/2})$ and $\max_k\big(\rho_{l,k} + \rho_{q,k}\big)\big(\rho_{m,k} + \rho_{g,k}\big) = o_{\mathbb{P}}(N^{-1/2})$.
\end{assumption}

Assumption~\ref{ass:ratesP}(ii) holds in particular under the single rate $\|\hat{\eta}_{j,k} - \eta_{0,j}\|_{\mathbb{P},2} = o_{\mathbb{P}}(N^{-1/4})$ for each nuisance, which is attainable for lasso, random forests, and gradient boosting under standard sparsity or smoothness. The product form relaxes this single rate condition. The outcome-side projections $l_Y, q_Y$ may converge arbitrarily slowly provided the treatment-side projections $m_X, g_X$ compensate, which is valuable because $q_Y = \E[Y \mid Z, W]$ is typically the hardest nuisance to learn.

The standard estimator in DML routines such as \texttt{DoubleML} package in \texttt{R} and \texttt{Python} sets the empirical average of the canonical partialled-out moment
\begin{equation}\label{eq:psinaive}
  \psi_{\mathrm{naive}}(O; \beta, \eta) := \big(Y - l_Y - \beta(X - m_X)\big)(g_X - m_X)
\end{equation}
to zero. This moment identifies $\beta_0$, since its expectation vanishes at $(\beta_0,\eta_0)$ by construction, but it is not robust to estimation error in the signal $g_X$.

With a single instrument and only the covariate nuisances $(l_Y,m_X)$ estimated, the partially linear IV score of \citet{chernozhukov2018double} is Neyman orthogonal irrespective of heterogeneity. The obstruction below is specific to the learned-signal variant, where $g_X=\E[X\mid Z,W]$ is itself estimated, and it arises only under heterogeneity along the signal in Definition~\ref{def:esshet}.

\begin{proposition}[Non-Orthogonality of the Naive Score]\label{prop:nonorth}
  Under Assumption~\ref{ass:exog}, view the naive moment as a functional of the signal, $g \mapsto \E[\psi_{\mathrm{naive}}(O;\beta_0,(l_Y,m_X,q_Y,g))]$. Its pathwise (G\^{a}teaux) derivative at $g = g_X$, in any direction $\delta_g \in L_2(\sigma(Z,W))$, is
  \[
    \begin{aligned}
      \partial_{g_X}\,\E\big[\psi_{\mathrm{naive}}(O;\beta_0,\eta_0)\big][\delta_g]
       & := \frac{d}{dt}\Big|_{t=0}\,\E\big[\psi_{\mathrm{naive}}\big(O;\beta_0,(l_Y,m_X,q_Y,g_X + t\,\delta_g)\big)\big] \\
       & = \E\big[\mathcal{H}(Z,W)\,\delta_g(Z,W)\big].
    \end{aligned}
  \]
  Consequently, under essential heterogeneity, the score is not Neyman orthogonal with respect to $g_X$. Taking $\delta_g = \mathcal{H}$ then gives $\E[\mathcal{H}^2] > 0$.
\end{proposition}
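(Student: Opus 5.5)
The plan is a direct computation that exploits the fact that the naive moment is affine in the signal. For fixed $(\beta_0, l_Y, m_X)$, write $\varepsilon_0 := Y - l_Y(W) - \beta_0\,(X - m_X(W))$. Then
\[
  \E\big[\psi_{\mathrm{naive}}\big(O;\beta_0,(l_Y,m_X,q_Y,g_X + t\,\delta_g)\big)\big]
  = \E\big[\varepsilon_0\,(g_X - m_X)\big] + t\,\E\big[\varepsilon_0\,\delta_g\big].
\]
This is exactly linear in $t$, so the G\^{a}teaux derivative at $t=0$ is $\E[\varepsilon_0\,\delta_g]$. No interchange of limit and expectation is needed beyond this expectation being finite. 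Finiteness follows from Cauchy--Schwarz, because $\varepsilon_0 \in L_2$ under the finite second moments of Assumption~\ref{ass:exog} and $\delta_g \in L_2(\sigma(Z,W))$. Note that $q_Y$ does not appear in $\psi_{\mathrm{naive}}$, so it plays no role here.

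The second step identifies $\E[\varepsilon_0\mid Z,W]$ with $\mathcal{H}(Z,W)$. Since $\delta_g$ is $\sigma(Z,W)$-measurable, the tower property gives $\E[\varepsilon_0\,\delta_g] = \E\big[\E[\varepsilon_0\mid Z,W]\,\delta_g\big]$. I would compute the two pieces of this conditional mean separately.

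\emph{Treatment part.} We have $\E[X - m_X \mid Z,W] = g_X - m_X$.

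\emph{Outcome part.} Substituting the CRC equation~\eqref{eq:crc} gives
\[
  \E[Y\mid Z,W] = \E[Y(0)\mid Z,W] + \E[X\Delta\mid Z,W].
\]
Assumption~\ref{ass:exog} implies $\E[Y(0)\mid Z,W] = \E[Y(0)\mid W]$. Taking conditional expectations of~\eqref{eq:crc} given $W$ alone gives $l_Y = \E[Y(0)\mid W] + \E[X\Delta\mid W]$. Subtracting, the $Y(0)$ terms cancel:
\[
  \E[Y - l_Y\mid Z,W] = \E[X\Delta\mid Z,W] - \E[X\Delta\mid W].
\]

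Combining the two parts,
\[
  \E[\varepsilon_0\mid Z,W] = \E[X\Delta\mid Z,W] - \E[X\Delta\mid W] - \beta_0\,(g_X - m_X) = \mathcal{H}(Z,W),
\]
which is the claimed formula for the derivative. Only the exogeneity of $Y(0)$ is used here; the term $X\Delta$ is deliberately not simplified, because that is exactly where heterogeneity along the signal enters.

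The third step gives the consequence. $\mathcal{H}$ is $\sigma(Z,W)$-measurable and lies in $L_2$, since $\E[X^2\Delta^2] < \infty$ and $g_X, m_X \in L_2$ by Jensen's inequality. It is therefore an admissible direction. Taking $\delta_g = \mathcal{H}$ yields derivative $\E[\mathcal{H}^2]$. Under Definition~\ref{def:esshet}, $\mathbb{P}(\mathcal{H}\neq 0) > 0$, so $\E[\mathcal{H}^2] > 0$ and the score fails Neyman orthogonality in $g_X$.

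The only real obstacle is integrability bookkeeping. I need $X\Delta \in L_2$ so that the conditional expectations in $\mathcal{H}$ are well defined and square integrable. I would take this from the paper's implicit finite second-moment convention in Assumption~\ref{ass:exog}, or equivalently from the explicit moment condition $\E[X^2\Delta^2] < \infty$ in Assumption~\ref{ass:regB}(i), and state this dependence explicitly.
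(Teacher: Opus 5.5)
Your proposal is correct and follows essentially the same route as the paper: affinity of the naive moment in the signal gives the derivative $\E[(Y-l_Y-\beta_0(X-m_X))\,\delta_g]$. Conditioning on $(Z,W)$ and using $q_Y-l_Y=\E[X\Delta\mid Z,W]-\E[X\Delta\mid W]$ (from the CRC equation and exogeneity of $Y(0)$) turns the bracket into $\mathcal{H}$, and the direction $\delta_g=\mathcal{H}$ gives $\E[\mathcal{H}^2]>0$. Your extra integrability bookkeeping is harmless, and it can be shortened: $\mathcal{H}=\E[\varepsilon_0\mid Z,W]\in L_2$ follows directly from $\varepsilon_0\in L_2$, because conditional expectation is an $L_2$ contraction.
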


A black-box first-stage learner $\hat{g}_X$ converges at a non-parametric rate slower than $N^{-1/2}$, and Proposition~\ref{prop:nonorth} shows its error enters the moment at first order. The first-order influence of the estimated signal is $\E[\mathcal{H}\,\Delta g_X]$. Because it is a covariance, only the component of the first-stage error aligned with $\mathcal{H}$ can move the estimate. In the following section, we will show that this first-order consequence for the estimator is a drift of the estimand rather than noise.

\subsection{What Naive DML Estimates}\label{sec:geometry}

This section characterizes the first-order bias identified in Proposition~\ref{prop:nonorth}. We first describe the geometry of $\mathcal{H}$, which governs the first-stage errors that the naive moment can and cannot detect. We then show that the bias is a drift of the estimand, so that the naive plug-in performs valid inference on a moving target. For simplicity, we write $\|\cdot\| := \|\cdot\|_{\mathbb{P},2}$.

\begin{lemma}[Orthogonal Structure of $\mathcal{H}$]\label{lem:hgeom}
  Let Assumptions~\ref{ass:exog} and~\ref{ass:relevance} hold. With $g = g_X$,
  \begin{itemize}
    \item[(i)] $\E[\mathcal{H} \mid W] = 0$ a.s.;
    \item[(ii)] $\E[\mathcal{H}\,\tilde{g}_X] = 0$;
    \item[(iii)] the set
          $\mathcal{G} := \big\{\, c\,\tilde{g}_X + d \;:\; c \in \mathbb{R},\; d \in L_2(\sigma(W)) \,\big\} \subset L_2(\sigma(Z,W))$ is a closed linear subspace, $\mathcal{H} \perp \mathcal{G}$, and for every direction $\delta_g \in L_2(\sigma(Z,W))$,
          \[
            \E[\mathcal{H}\,\delta_g]
            \;=\; \E\big[\mathcal{H}\,(I - \Pi_{\mathcal{G}})\delta_g\big],
            \qquad
            \big|\E[\mathcal{H}\,\delta_g]\big|
            \;\le\; \|\mathcal{H}\| \cdot \big\|(I - \Pi_{\mathcal{G}})\delta_g\big\|,
          \]
          where $\Pi_{\mathcal{G}}$ denotes orthogonal projection onto $\mathcal{G}$;
    \item[(iv)] for any $g \in L_2(\sigma(Z,W))$ with $\E[\tilde{X}\tilde{g}] \neq 0$, every $a \in \mathbb{R}\setminus\{0\}$ and every $d \in L_2(\sigma(W))$, the signal $g' := a\,g + d$ satisfies $\E[\tilde{X}\tilde{g}'] = a\,\E[\tilde{X}\tilde{g}] \neq 0$ and $\beta_{IV}(g') = \beta_{IV}(g)$. The SWATE estimand is invariant to rescalings of the signal and to arbitrary $W$-measurable recalibrations.
  \end{itemize}
\end{lemma}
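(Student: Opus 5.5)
We prove the four parts in order, since (i) and (ii) are the inputs to (iii), and (iv) is a direct computation. Throughout, we write $\mu(Z,W) := \E[X\Delta\mid Z,W]$ and $\mu_W(W) := \E[X\Delta\mid W]$, so that $\mathcal{H} = (\mu - \mu_W) - \beta_0\,\tilde{g}_X$ with $\tilde{g}_X = g_X - m_X$.

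For (i), we use the tower property. Since $\E[\mu\mid W] = \mu_W$ and $\E[g_X\mid W] = m_X$, both bracketed terms have conditional mean zero given $W$. Hence $\E[\mathcal{H}\mid W]=0$ a.s.

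For (ii), the key step is to rewrite the numerator of $\beta_0$ through $\mu$. First, $\tilde{Y} = \tilde{Y}(0) + (X\Delta - \mu_W)$. Next, by Assumption~\ref{ass:exog}, $\E[Y(0)\mid Z,W] = \E[Y(0)\mid W]$, so $\E[\tilde{Y}(0)\,\tilde{g}_X]=0$, because $\tilde{g}_X$ is $\sigma(Z,W)$-measurable. Then, conditioning on $(Z,W)$,
\[
\E[\tilde{Y}\tilde{g}_X] = \E[(\mu-\mu_W)\tilde{g}_X].
\]
Combining this with $\E[\tilde{X}\tilde{g}_X] = \E[\tilde{g}_X^2]$ from~\eqref{eq:beta0}, we get $\E[(\mu-\mu_W)\tilde{g}_X] = \beta_0\,\E[\tilde{g}_X^2]$, which gives $\E[\mathcal{H}\tilde{g}_X]=0$. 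This is the one step where exogeneity genuinely enters. Note that only the $Y(0)$ part needs it: the $X\Delta$ term is handled purely by conditioning, so it requires no independence at all. Square integrability of $X\Delta$ comes from Assumption~\ref{ass:exog}'s moment convention, or Assumption~\ref{ass:regB}.

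For (iii), we first show that $\mathcal{G}$ is closed. It is the sum of the closed subspace $L_2(\sigma(W))$ and the one-dimensional span of $\tilde{g}_X$. Moreover, $\tilde{g}_X \perp L_2(\sigma(W))$, since $\E[\tilde{g}_X d] = \E[d\,\E[\tilde{g}_X\mid W]] = 0$. An orthogonal sum of a closed subspace and a finite-dimensional subspace is closed. Orthogonality $\mathcal{H}\perp\mathcal{G}$ then follows from (i), which gives $\E[\mathcal{H}d]=\E[d\,\E[\mathcal{H}\mid W]]=0$, together with (ii). To obtain the decomposition, we write $\delta_g = \Pi_{\mathcal{G}}\delta_g + (I-\Pi_{\mathcal{G}})\delta_g$ and drop the first term. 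The bound is then Cauchy--Schwarz.

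For (iv), we start from $\tilde{g}' = a\tilde{g}$, since $\E[d\mid W]=d$. Substituting into both the numerator and the denominator of~\eqref{eq:betaiv}, $a$ cancels. The only things to check are that $g'\in L_2$ and that the denominator $a\E[\tilde{X}\tilde{g}]$ is nonzero.

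We expect no real obstacle. The only delicate point is the exogeneity step in (ii): one must verify that $\tilde{Y}(0)$ is conditionally mean-independent of $Z$ given $W$, and that all products are integrable.
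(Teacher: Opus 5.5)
Your proof is correct and follows essentially the paper's argument. Parts (iii) and (iv) match it step for step, and your (ii) is the same normal-equation identity $\E[UV]=0$ that the paper obtains from $\E[U\mid Z,W]=\mathcal H$; you simply unpack it through $\mu=\E[X\Delta\mid Z,W]$ and the exogeneity of $Y(0)$. The one minor difference is in (i): you obtain it directly from the tower property applied to the definition of $\mathcal H$, which correctly shows that part needs no exogeneity, whereas the paper routes it through $\E[U\mid W]=0$.
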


Part (iii) shows the operator of Proposition~\ref{prop:nonorth} eliminates every perturbation in $\mathcal{G}$, a rescaled signal plus a $W$-measurable shift, leaving only the residual shape component $(I-\Pi_{\mathcal{G}})\delta_g$ to move the naive moment at first order. The obstruction is broader than essential heterogeneity. Decomposing $\mathcal{H} = (\bar{\Delta}_W - \beta_0)\tilde{g}_X + \tilde{\kappa}$ into an observable-heterogeneity term with $\bar{\Delta}_W := \E[\Delta\mid W]$ and a selection-on-gains term where $\tilde{\kappa} := \kappa - \E[\kappa\mid W]$, $\kappa := \Cov(X,\Delta\mid Z,W)$, Lemma~\ref{lem:hdecomp} in Appendix~\ref{sec:appenda} shows observable heterogeneity alone makes $\mathcal{H}\neq 0$.

We now formalize the drifting estimand of naive DML. We write the naive moment as
\begin{equation}\label{eq:naiveexact}
  \E\big[\psi_{\mathrm{naive}}(O;\beta_0,\eta)\big]
  \;=\; \E[\mathcal{H}\,\Delta g_X]
  \;+\; \E\big[(-\Delta l_Y + \beta_0\,\Delta m_X)(\Delta g_X - \Delta m_X)\big].
\end{equation}
The leading term is the drift loading, and the second is a product of nuisance errors, hence higher order.

Define $A_k := \E\big[(Y - \hat{l}_{Y,k})(\hat{g}_{X,k} - \hat{m}_{X,k}) \mid \mathcal{D}_{-k}\big]$ and $B_k := \E\big[(X - \hat{m}_{X,k})(\hat{g}_{X,k} - \hat{m}_{X,k}) \mid \mathcal{D}_{-k}\big]$,
\begin{equation}\label{eq:pseudotrue}
  \beta_k^* := \frac{A_k}{B_k},
  \qquad
  \beta_N^* := \frac{\sum_{k=1}^K n_k A_k}{\sum_{k=1}^K n_k B_k}
  \;=\; \sum_{k=1}^K w_k\,\beta_k^*,
  \qquad
  w_k := \frac{n_k B_k}{\sum_{j=1}^K n_j B_j}.
\end{equation}
Let $\bar{D}_N := \sum_{k=1}^K (n_k/N)\, \E\big[\mathcal{H}\,\Delta g_{X,k} \mid \mathcal{D}_{-k}\big]$ denote the average realized drift loading, where $\Delta g_{X,k} := \hat{g}_{X,k} - g_X$.

\begin{theorem}[The Moving Target: Learner-Induced SWATE]\label{thm:driftvalid}
  Let Assumptions~\ref{ass:exog} and~\ref{ass:relevance} hold with $g = g_X$. For parts~(iii)--(iv) let Assumptions~\ref{ass:regB} and~\ref{ass:ratesP} also hold and let $\hat{\beta}_{\mathrm{naive}}$ solve the cross-fitted empirical analogue of~\eqref{eq:psinaive}. Then
  \begin{itemize}
    \item[(i)] for every $\delta_g \in L_2(\sigma(Z,W))$ with $J_0 + \E[\tilde{g}_X\,\delta_g] \neq 0$,
          \begin{equation}\label{eq:driftexact}
            \beta_{IV}(g_X + \delta_g) - \beta_0
            \;=\; \frac{\E[\mathcal{H}\,\delta_g]}{\,J_0 + \E[\tilde{g}_X\,\delta_g]\,};
          \end{equation}
    \item[(ii)] $\beta_{IV}$ is Fr\'echet differentiable at $g_X$ with derivative $\delta_g \mapsto J_0^{-1}\,\E[\mathcal{H}\,\delta_g]$ and a second-order remainder of order $\|\delta_g\|^2$. Combining with Lemma~\ref{lem:hgeom}(iii), $\beta_{IV}(g_X + \delta_g) - \beta_0 = \E[\mathcal{H}\,(I - \Pi_{\mathcal{G}})\delta_g] / (J_0 + \E[\tilde{g}_X\,\delta_g])$, so the numerator and hence the Fr\'echet derivative depend only on the residual shape $(I - \Pi_{\mathcal{G}})\delta_g$. The drift also depends on the component parallel to $\tilde{g}_X$ through the denominator, and it is zero for every $\delta_g \in \mathcal{G}$ for which the perturbed signal remains relevant;
    \item[(iii)] $|\bar{D}_N| \le \|\mathcal{H}\|\,\max_k \rho_{g,k} = o_{\mathbb{P}}(N^{-1/4})$, and
          \[
            \beta_N^* \;=\; \beta_0 + J_0^{-1}\,\bar{D}_N + o_{\mathbb{P}}(N^{-1/2}),
            \qquad
            \beta_N^* \;=\; \sum_{k=1}^K w_k\,\beta_{IV}\big(\hat{g}_{X,k}\big)
            + o_{\mathbb{P}}(N^{-1/2}),
          \]
          the naive procedure targets the $B_k$-weighted SWATE of the learner's own signals;
    \item[(iv)] with $U := Y - l_Y - \beta_0(X - m_X)$, $V := g_X - m_X$, and
          $\Omega_0^{\mathrm{nv}} := \E[U^2 V^2]$, suppose $\Omega_0^{\mathrm{nv}} > 0$. Then
          \[
            \sqrt{N}\,\big(\hat{\beta}_{\mathrm{naive}} - \beta_N^*\big)
            \;\xrightarrow{d}\; \mathcal{N}\big(0,\; J_0^{-2}\,\Omega_0^{\mathrm{nv}}\big),
          \]
          The asymptotic variance is consistently estimable by plug-ins $\hat{J}_{\mathrm{nv}} \xrightarrow{p} J_0$ and $\hat{\Omega}^{\mathrm{nv}} \xrightarrow{p} \Omega_0^{\mathrm{nv}}$, so the nominal Wald interval satisfies $\mathbb{P}\big(\beta_N^* \in [\hat{\beta}_{\mathrm{naive}} \pm z_{1-\alpha/2}\, \hat{J}_{\mathrm{nv}}^{-1}\sqrt{\hat{\Omega}^{\mathrm{nv}}/N}]\big) \to 1 - \alpha$.
  \end{itemize}
\end{theorem}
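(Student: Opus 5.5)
The plan is to first derive an exact population identity for $\beta_{IV}(g)$ at an arbitrary signal $g=g_X+\delta_g$, and then transfer it to the cross-fitted estimator fold by fold, conditioning on $\mathcal{D}_{-k}$. Here $J_0:=\E[\tilde g_X^2]=\E[\tilde X\tilde g_X]>0$. Write $\tilde\delta_g:=\delta_g-\E[\delta_g\mid W]$.

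\emph{Part (i).} The denominator is $\E[\tilde X(\tilde g_X+\tilde\delta_g)]=J_0+\E[\tilde X\tilde\delta_g]$. By iterated expectations on $(Z,W)$, $\E[\tilde X\tilde\delta_g]=\E[\tilde g_X\tilde\delta_g]=\E[\tilde g_X\delta_g]$, using $\E[\tilde g_X\mid W]=0$. For the numerator, use Assumption~\ref{ass:exog}: $\E[Y(0)\mid Z,W]=\E[Y(0)\mid W]$, so $\E[\tilde Y\mid Z,W]=\E[X\Delta\mid Z,W]-\E[X\Delta\mid W]=\beta_0\tilde g_X+\mathcal{H}$. Hence the numerator is $\E[\tilde Y\tilde g]=\beta_0(J_0+\E[\tilde g_X\delta_g])+\E[\mathcal{H}\tilde g_X]+\E[\mathcal{H}\tilde\delta_g]$. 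Lemma~\ref{lem:hgeom}(i)--(ii) kills $\E[\mathcal{H}\tilde g_X]$ and lets me replace $\tilde\delta_g$ by $\delta_g$. Subtracting $\beta_0$ times the denominator and dividing yields \eqref{eq:driftexact}.

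\emph{Part (ii).} Expand the ratio in (i) as $J_0^{-1}\E[\mathcal{H}\delta_g]$ plus a remainder $-\E[\mathcal{H}\delta_g]\E[\tilde g_X\delta_g]/(J_0(J_0+\E[\tilde g_X\delta_g]))$. By Cauchy--Schwarz the remainder is $O(\|\mathcal{H}\|\|\tilde g_X\|\|\delta_g\|^2)$ once $\|\delta_g\|$ is small enough that the denominator exceeds $J_0/2$. This gives Fréchet differentiability. The projection statement then follows from Lemma~\ref{lem:hgeom}(iii). Vanishing of the drift on $\mathcal{G}$ is immediate because the numerator is zero there.

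\emph{Part (iii).} The bound on $\bar D_N$ is Cauchy--Schwarz together with $\|\mathcal{H}\|<\infty$ (Assumption~\ref{ass:regB}) and $\rho_g=o_P(N^{-1/4})$ from Assumption~\ref{ass:ratesP}(ii). For $\beta_N^*$, I would compute $A_k-\beta_0 B_k=\E[\psi_{\mathrm{naive}}(O;\beta_0,\hat\eta_k)\mid\mathcal{D}_{-k}]$. By \eqref{eq:naiveexact}, which I will verify by direct expansion using the same conditional-mean identity as in (i), this equals $\E[\mathcal{H}\Delta g_{X,k}\mid\mathcal{D}_{-k}]$ plus a product term bounded by $(\rho_l+\rho_m)(\rho_g+\rho_m)=o_P(N^{-1/2})$. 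Similarly $B_k=J_0+O_P(\rho_m+\rho_g)$, so $B_k\to J_0$ in probability. Averaging with weights $n_k$ gives $\beta_N^*-\beta_0=(\bar D_N+o_P(N^{-1/2}))/(J_0+o_P(1))$. Since $\bar D_N=o_P(N^{-1/4})$, a naive bound on the cross term $\bar D_N\cdot o_P(1)$ only gives $o_P(N^{-1/4})$. This is the \textbf{main obstacle}.

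To get the sharper rate, I would write $B_k-J_0$ explicitly. It equals $\E[\tilde g_X\Delta g_{X,k}]$ plus terms that are quadratic in $\Delta m_X,\Delta g_X$. The quadratic terms are $o_P(N^{-1/2})$ by the first product condition. The remaining linear term is what pairs with $\bar D_N$, and it is $O(\rho_g)$. The product $\bar D_N\cdot O(\rho_g)=O(\rho_g^2)$ is then $o_P(N^{-1/2})$. The second display follows the same way. Applying (i) with $\delta_g=\Delta g_{X,k}$ gives $\beta_{IV}(\hat g_{X,k})-\beta_0=J_0^{-1}\E[\mathcal{H}\Delta g_{X,k}]+O(\rho_g^2)$, while $w_k\to n_k/N$ at rate $O_P(\rho)$. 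The reweighting error is again a product of two $O_P(\rho)$ quantities, hence $o_P(N^{-1/2})$.

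\emph{Part (iv).} Write $\hat\beta_{\mathrm{naive}}-\beta_N^*$ as the ratio of $N^{-1}\sum_k\sum_{i\in\mathcal{I}_k}[(Y_i-\hat l)(\hat g-\hat m)-\beta_N^*(X_i-\hat m)(\hat g-\hat m)]$ to the empirical Jacobian. Conditionally on $\mathcal{D}_{-k}$, each summand has mean $A_k-\beta_N^*B_k$. These means aggregate to zero by the definition of $\beta_N^*$. The numerator is therefore a sum of conditionally centered i.i.d. terms. Its difference from the oracle $N^{-1}\sum U_iV_i$ has conditional variance $O(\rho_N^2+(\beta_N^*-\beta_0)^2)\to0$, by boundedness and fourth moments. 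The Lindeberg CLT for $\sum U_iV_i$, together with $\hat J\to_p J_0$ by a conditional LLN, gives the stated limit. Consistency of $\hat\Omega^{\mathrm{nv}}$ follows from the same $L_2$ plug-in argument, and the coverage statement follows by Slutsky.
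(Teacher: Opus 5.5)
Your parts (i)--(iii) are correct and follow the paper's argument closely. You use the same exact ratio identity built on $\E[\tilde Y\mid Z,W]=\beta_0\tilde g_X+\mathcal H$, the same Cauchy--Schwarz remainder for Fr\'echet differentiability, and the same conditional use of the exact naive expansion with $B_k-J_0=\E[\tilde g_X\Delta g_{X,k}\mid\mathcal D_{-k}]$ plus terms quadratic in $(\Delta m,\Delta g)$. The one difference is how you aggregate in (iii). You expand the aggregate ratio $\beta_N^*-\beta_0=\sum_k n_k(A_k-\beta_0B_k)/\sum_k n_kB_k$ once, whereas the paper expands each fold-level $\beta_k^*$ and then separately bounds $\sum_k(w_k-1/K)D_k$. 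Your route is valid and slightly shorter.

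The gap is in (iv). You claim that, conditionally on $\mathcal D_{-k}$, each summand $\psi_{\mathrm{naive}}(O_i;\beta_N^*,\hat\eta_k)$ with $i\in\mathcal I_k$ has mean $A_k-\beta_N^*B_k$. You then treat the recentred terms as conditionally i.i.d. and bound their distance to $U_iV_i$ by a conditional variance involving $(\beta_N^*-\beta_0)^2$. All of this treats $\beta_N^*$ as $\mathcal D_{-k}$-measurable, and it is not. $\beta_N^*$ is built from $A_j,B_j$ for every fold $j$, and for $j\neq k$ the nuisance $\hat\eta_j$ is trained on $\mathcal D_{-j}\supseteq\mathcal I_k$. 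So $\beta_N^*$ depends on the very observations you are averaging, and the conditional-mean and conditional-variance steps fail as written.

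The repair uses affinity in $\beta$. Write $\bar a_k,\bar j_k$ for the fold-$k$ averages of $(Y-\hat l_k)(\hat g_k-\hat m_k)$ and $(X-\hat m_k)(\hat g_k-\hat m_k)$, and use $\sum_k n_k(A_k-\beta_N^*B_k)=0$. Then
\[
  \frac1N\sum_k\sum_{i\in\mathcal I_k}\psi_{\mathrm{naive}}(O_i;\beta_N^*,\hat\eta_k)
  =\sum_k\frac{n_k}{N}\big[(\bar a_k-A_k)-\beta_0(\bar j_k-B_k)\big]
  -(\beta_N^*-\beta_0)\sum_k\frac{n_k}{N}(\bar j_k-B_k).
\]
The first sum involves only $\mathcal D_{-k}$-measurable parameters, so your comparison with $N^{-1}\sum_iU_iV_i$ goes through there. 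The second term is $o_{\mathbb P}(N^{-1/4})\cdot O_{\mathbb P}(N^{-1/2})$, and it must be bounded as a product, not as a variance. The paper achieves the same effect by recentring each fold at the $\mathcal D_{-k}$-measurable $\beta_k^*$ (its Step 3) before the equicontinuity step. The same care is needed for $\hat\beta_{\mathrm{naive}}$ inside $\hat\Omega^{\mathrm{nv}}$: split it off as a scalar multiplying fold averages of $j(O_i;\hat\eta_k)^2$.

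A smaller point: with only fourth moments on $(Y,X)$, the $L_2$ modulus of the score in the nuisances is of order $\rho_N^{1/2}$, via $\|b\|_{\mathbb P,4}^2\le\|b\|_\infty\|b\|$. Your conditional variance is therefore $O_{\mathbb P}(\rho_N)$, not $O_{\mathbb P}(\rho_N^2)$. It still vanishes, so the conclusion is unaffected.
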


The Fréchet derivative in~(ii) equals the G\^{a}teaux derivative of the naive moment divided by $J_0$, so the ``regularization bias'' is, to first order, an estimand drift. The naive plug-in tracks $\beta_{IV}(\hat{g}_X)$, the SWATE of the learner's own signal, which is legitimate and convex only if CCM holds for $\hat{g}_X$.

Although CCM for a learned signal is not directly testable because it conditions on the latent effect $\Delta$, a sufficient condition can be imposed on the learner. Conditional on the training sample $\mathcal D_{-k}$, suppose the realized signal $\hat g_{X,k}(\cdot,w)$ is coordinate-wise non-decreasing in the sign-normalized instruments. Under extended exogeneity, VM, and PA, the association argument of Proposition~\ref{prop:pmccm} then applies with $g_X$ replaced by $\hat g_{X,k}$ and yields CCM for the properly residualized signal $\hat g_{X,k}-\E[\hat g_{X,k}\mid W,\mathcal D_{-k}]$. Provided this signal remains relevant, $\beta_{IV}(\hat g_{X,k})$ is therefore a convex SWATE. This restriction is operational, as non-negative sums or averages of functions monotone in the same directions remain monotone, and standard implementations expose monotonicity constraints, including \texttt{monotone\_constraints} in XGBoost and LightGBM and \texttt{monotonic\_cst} in scikit-learn learners. Monotone-constrained first stages can thus guarantee convexity of each learner-specific target conditional on the maintained behavioral and design assumptions. Failure of fitted monotonicity does not refute CCM, since monotonicity is sufficient but not necessary.

Combining parts~(iii) and~(iv), the naive DML estimator has the first-order expansion
\[
  \hat{\beta}_{\mathrm{naive}} - \beta_0 = \frac{1}{N}\sum_{i=1}^N J_0^{-1}\,\psi_{\mathrm{naive}}(O_i;\beta_0,\eta_0) \;+\; J_0^{-1}\,\E\big[\mathcal{H}(Z,W)\,\Delta g_X(Z,W)\big] \;+\; o_{\mathbb{P}}\big(N^{-1/2}\big),
\]
whose regularization bias term $J_0^{-1}\E[\mathcal{H}\,\Delta g_X]$ is $o_{\mathbb{P}}(N^{-1/4})$ but not, in general, $o_{\mathbb{P}}(N^{-1/2})$. Hence $\sqrt{N}(\hat{\beta}_{\mathrm{naive}}-\beta_0)$ generally has no mean-zero limit and conventional intervals for $\beta_0$ are invalid.

\begin{corollary}[Coverage of the Fixed Target]\label{cor:cover}
  In the setting of Theorem~\ref{thm:driftvalid}, let $\sigma_{\mathrm{nv}}^2 := J_0^{-2}\Omega_0^{\mathrm{nv}}$, $\mu_N := \sqrt{N}\,J_0^{-1}\bar{D}_N / \sigma_{\mathrm{nv}}$, and let $\mathrm{CI}_N$ denote the nominal $(1-\alpha)$ naive interval of part~(iv). Then
  \begin{itemize}
    \item[(a)] if $\mu_N \xrightarrow{p} 0$:
          $\mathbb{P}(\beta_0 \in \mathrm{CI}_N) \to 1-\alpha$;
    \item[(b)] if $\mu_N \xrightarrow{p} \mu \neq 0$ finite:
          $\mathbb{P}(\beta_0 \in \mathrm{CI}_N) \to \Phi(z_{1-\alpha/2} - \mu) - \Phi(-z_{1-\alpha/2} - \mu) < 1-\alpha$;
    \item[(c)] if $|\mu_N| \xrightarrow{p} \infty$:
          $\mathbb{P}(\beta_0 \in \mathrm{CI}_N) \to 0$.
  \end{itemize}
  In particular, $\sqrt{N}(\hat{\beta}_{\mathrm{naive}} - \beta_0)$ is asymptotically centered and the naive interval is asymptotically valid for $\beta_0$ if and only if $\sqrt{N}\,\bar{D}_N \xrightarrow{p} 0$, or equivalently, the realized $\E[\mathcal{H}\,\Delta g_X]$ is $o_{\mathbb{P}}(N^{-1/2})$.
\end{corollary}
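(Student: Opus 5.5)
The plan is to reduce everything to the pivot $T_N := \sqrt{N}(\hat\beta_{\mathrm{naive}}-\beta_0)/\hat\sigma_{\mathrm{nv}}$, where $\hat\sigma_{\mathrm{nv}} := \hat J_{\mathrm{nv}}^{-1}\sqrt{\hat\Omega^{\mathrm{nv}}}$. The event $\beta_0\in\mathrm{CI}_N$ is exactly $|T_N|\le z_{1-\alpha/2}$. Split the pivot as
\[
T_N = \frac{\sqrt N(\hat\beta_{\mathrm{naive}}-\beta_N^*)}{\hat\sigma_{\mathrm{nv}}} + \frac{\sqrt N(\beta_N^*-\beta_0)}{\hat\sigma_{\mathrm{nv}}}.
\]
Theorem~\ref{thm:driftvalid}(iv) and the consistency of the plug-ins give $\hat\sigma_{\mathrm{nv}}\xrightarrow{p}\sigma_{\mathrm{nv}}>0$. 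Slutsky's lemma then shows that the first term converges in distribution to $\mathcal N(0,1)$. By Theorem~\ref{thm:driftvalid}(iii), $\sqrt N(\beta_N^*-\beta_0)=\sqrt N J_0^{-1}\bar D_N+o_{\mathbb P}(1)$, so the second term equals $\mu_N\,\sigma_{\mathrm{nv}}/\hat\sigma_{\mathrm{nv}}+o_{\mathbb P}(1)$. Hence $T_N = \xi_N + \mu_N(1+o_{\mathbb P}(1)) + o_{\mathbb P}(1)$ with $\xi_N\xrightarrow{d}\mathcal N(0,1)$.

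The delicate point is that $\mu_N$ is random, since it depends on $\mathcal D_{-k}$ through $\bar D_N$. A joint limit for $(\xi_N,\mu_N)$ therefore has to be justified. In cases (a) and (b), $\mu_N\xrightarrow{p}\mu$ with $\mu$ a constant, and convergence in probability to a constant combines with convergence in distribution by Slutsky. This gives $T_N\xrightarrow{d}\mathcal N(\mu,1)$. The limit law is continuous, so the portmanteau theorem yields
\[
\mathbb P(|T_N|\le z_{1-\alpha/2})\to\Phi(z_{1-\alpha/2}-\mu)-\Phi(-z_{1-\alpha/2}-\mu).
\]
Case (a) is the special case $\mu=0$. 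In case (b), strict inequality comes from the fact that the mass of a $\mathcal N(\mu,1)$ law on a symmetric interval is strictly maximized at $\mu=0$, by symmetry and unimodality: the derivative in $\mu$ is $\phi(z-\mu)-\phi(z+\mu)$, which has the sign of $-\mu$.

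For case (c), note that $\xi_N=O_{\mathbb P}(1)$. Since $|\mu_N|\xrightarrow{p}\infty$, for any $M$ the event $|\mu_N|(1-\epsilon)>M+z_{1-\alpha/2}$ has probability tending to one. On that event, $|T_N|\le z_{1-\alpha/2}$ forces $|\xi_N|>M-o_{\mathbb P}(1)$, whose probability is eventually small. So the coverage tends to $0$.

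For the final equivalence, note that $\mu_N\xrightarrow{p}0$ is the same statement as $\sqrt N\bar D_N\xrightarrow{p}0$, because $J_0$ and $\sigma_{\mathrm{nv}}$ are fixed nonzero constants. Sufficiency follows from (a). It also shows that $\sqrt N(\hat\beta_{\mathrm{naive}}-\beta_0)\xrightarrow{d}\mathcal N(0,\sigma^2_{\mathrm{nv}})$, i.e.\ the statistic is centered. The restatement in terms of $\E[\mathcal H\Delta g_X]$ is the definition of $\bar D_N$ as the fold-averaged conditional drift loading.

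Necessity is the main obstacle, because (a)--(c) do not exhaust all behaviors: $\mu_N$ may oscillate or fail to converge. I would argue by subsequences. If $\sqrt N\bar D_N\not\xrightarrow{p}0$, extract a subsequence along which $P(|\mu_N|>\epsilon)\ge\epsilon$. Since $\mu_N$ is random, a standard tightness argument in the extended reals $[-\infty,\infty]$ gives a further subsequence on which $\mu_N$ converges in distribution, jointly with $\xi_N$, to a law $\nu$ on $[-\infty,\infty]$ that is not $\delta_0$. The joint convergence uses the asymptotic independence of $\xi_N$ from $\mathcal D_{-k}$ conditional on the folds, which holds because the CLT in Theorem~\ref{thm:driftvalid}(iv) is a conditional-on-$\mathcal D_{-k}$ Lindeberg argument. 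Along this subsequence, coverage converges to $\int[\Phi(z-\mu)-\Phi(-z-\mu)]\,d\nu(\mu)$, with the integrand taken to be $0$ at $\mu=\pm\infty$. The integrand is strictly below $1-\alpha$ off $\mu=0$, so the integral is strictly below $1-\alpha$, which contradicts asymptotic validity. Likewise $\sqrt N(\hat\beta_{\mathrm{naive}}-\beta_0)$ has a non-centered limit, or none, along that subsequence, which contradicts the centering claim.
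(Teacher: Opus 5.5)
Your treatment of (a)--(c) and of sufficiency is the paper's argument. You use the same split $Z_N+\mu_N$ obtained from Theorem~\ref{thm:driftvalid}(iii)--(iv) and \eqref{eq:crossfitpt}, Slutsky when $\mu_N$ has a constant limit, symmetry and unimodality of $\phi$ for the strict inequality in (b), divergence of the pivot in (c), and the remark that $\mu_N\xrightarrow{p}0$ iff $\sqrt N\,\bar D_N\xrightarrow{p}0$. There is one small slip: the derivative of $\Phi(z-\mu)-\Phi(-z-\mu)$ is $\phi(z+\mu)-\phi(z-\mu)$. That expression has the sign of $-\mu$; the one you wrote has the opposite sign.

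You are right that (a)--(c) are not exhaustive, and the paper's proof does not go beyond them. However, your necessity argument has a genuine gap, and the claim it targets is false at this generality. The step that fails is the asymptotic independence of $\xi_N$ and $\mu_N$.

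The drift $\bar D_N=\sum_k (n_k/N)\,\E[\mathcal H\,\Delta g_{X,k}\mid\mathcal D_{-k}]$ depends on the training samples of all folds, which together contain every observation. Meanwhile $\xi_N$ is asymptotically $N^{-1/2}\sum_i U_iV_i/\sqrt{\Omega_0^{\mathrm{nv}}}$ over the full sample. The CLT in Step~5 of the proof of Theorem~\ref{thm:driftvalid} is unconditional; only the equicontinuity step conditions on $\mathcal D_{-k}$.

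The paper itself supplies a counterexample. Under Assumption~\ref{ass:fsal}, the proof of Proposition~\ref{prop:sech} gives $\sqrt N\,\bar D_N=N^{-1/2}\sum_j e_j a_j+o_{\mathbb P}(1)$. So $(\xi_N,\mu_N)$ is jointly Gaussian in the limit, with covariance proportional to $\E[UVea]$. The limiting coverage is then $2\Phi\big(z_{1-\alpha/2}\sqrt{\Omega_0^{\mathrm{nv}}/\Omega_S}\big)-1$, not $\int[\Phi(z_{1-\alpha/2}-\mu)-\Phi(-z_{1-\alpha/2}-\mu)]\,d\nu(\mu)$.

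Moreover $\Omega_S-\Omega_0^{\mathrm{nv}}=2\E[UVea]+\E[e^2a^2]$, and $\E[UVea]=\E[Va\,\Cov(U,X\mid Z,W)]$ can take either sign. Nothing in the maintained assumptions rules out $\Omega_S\le\Omega_0^{\mathrm{nv}}$ with $\E[e^2a^2]>0$. In that case the naive interval has asymptotic coverage at least $1-\alpha$, and $\sqrt N(\hat\beta_{\mathrm{naive}}-\beta_0)$ has a mean-zero Gaussian limit (for instance, the series first stage of Proposition~\ref{prop:series}), even though $\sqrt N\,\bar D_N$ does not converge in probability to zero.

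Even granting independence, a symmetric $\nu$ yields a mean-zero limit. So your ``non-centered'' conclusion needs ``centered'' to mean $\sqrt N(\beta_N^*-\beta_0)\xrightarrow{p}0$, and under that reading it follows immediately from Theorem~\ref{thm:driftvalid}(iii).

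What can be proved, and what the paper's proof in effect delivers, is the ``only if'' within the regimes where $\mu_N$ converges in probability to a constant in $[-\infty,\infty]$, i.e.\ cases (b)--(c). You should state necessity in that restricted form, or add a hypothesis under which the joint limit of $(\xi_N,\mu_N)$ is a product measure; only then does your mixture bound for the coverage claim go through.
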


\begin{remark}\label{rem:shrink}
  A well-tuned first stage can yield approximately valid naive inference even when $\mathcal{H}\neq 0$. The reason is that the drift depends on the covariance between the realized first-stage error and $\mathcal{H}$. When prediction errors are small and weakly correlated with $\mathcal{H}$, the resulting drift can be small relative to the standard error, leaving confidence interval coverage close to its nominal level. Common rescaling of the signal and additive shifts depending only on $W$ illustrate this point, as they leave the population estimand unchanged (Lemma~\ref{lem:hgeom}(iv)).

  On the other hand, good predictive performance alone does not guarantee valid inference. Omitted nonlinearities or uneven fit across covariate strata can produce errors correlated with $\mathcal{H}$. Under the maintained conditions, asymptotically valid naive inference requires $\sqrt{N}\,\bar{D}_N\xrightarrow{p}0$. A non-rejection by the diagnostic in Section~\ref{sec:diagnostics} does not establish this condition, and selecting the inference procedure using a preliminary test can distort coverage \citep[see the related IV discussion in][]{andrews2019weak}. We therefore recommend CRC robust inference by default and use the diagnostic to assess sensitivity to first-stage estimation.
\end{remark}

\section{The Fixed Target: CRC Robust Score}\label{sec:fixedtarget}

We now build the score whose target is the fixed, learner-invariant $\beta_0$, which is an orthogonalized moment that eliminates the first-order influence of the learned signal, restores $\sqrt{N}$ inference, attains the efficiency bound, and there is no cost under homogeneity.

\subsection{The CRC Robust Orthogonal Score}\label{sec:crcscore}

Neyman orthogonality is restored by augmenting the numerator and denominator moments with correction terms that eliminate the first-order influence of every nuisance. Define
\begin{align}
  \psi_Y(O;\eta) & := (Y - l_Y)(g_X - m_X) + (X - g_X)(q_Y - l_Y), \label{eq:psiY} \\
  \psi_X(O;\eta) & := (X - m_X)(g_X - m_X) + (X - g_X)(g_X - m_X), \label{eq:psiX}
\end{align}
and the CRC robust score
\begin{equation}\label{eq:psicrc}
  \psi_{\mathrm{CRC}}(O;\beta,\eta) := \psi_Y(O;\eta) - \beta\,\psi_X(O;\eta).
\end{equation}

\begin{lemma}[Score Difference Identity]\label{lem:scoreid}
  For every $\beta \in \mathbb{R}$ and every admissible nuisance value $\eta = (l, m, q, g)$,
  \[
    \psi_{\mathrm{CRC}}(O;\beta,\eta) - \psi_{\mathrm{naive}}(O;\beta,\eta)
    \;=\; (X - g)\,\big[(q - l) - \beta\,(g - m)\big].
  \]
  At $(\beta_0, \eta_0)$ the correction equals $(X - g_X)\,\mathcal{H}(Z,W)$, and, for every $\beta$,
  \[
    \E\Big[(X - g_X)\big\{(q_Y - l_Y) - \beta\,(g_X - m_X)\big\}\Big] \;=\; 0 .
  \]
  Moreover $(X - g_X)\,\mathcal{H} = 0$ a.s.\ if and only if $\sigma_X^2(Z,W)\,\mathcal{H}^2 = 0$ a.s., where $\sigma_X^2(Z,W) := \Var(X \mid Z, W)$.
\end{lemma}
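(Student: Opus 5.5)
The proof has three parts: an algebraic identity, a population identity that uses exogeneity, and a conditional-variance argument for the ``if and only if''.

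\textbf{Step 1 (pointwise identity).} Fix $\beta$ and $\eta=(l,m,q,g)$ and expand \eqref{eq:psicrc} using \eqref{eq:psiY}--\eqref{eq:psiX}:
\[
\psi_{\mathrm{CRC}}=(Y-l)(g-m)-\beta(X-m)(g-m)+(X-g)(q-l)-\beta(X-g)(g-m).
\]
The first two terms are exactly $\psi_{\mathrm{naive}}(O;\beta,\eta)$ from \eqref{eq:psinaive}. The remaining two terms factor as $(X-g)[(q-l)-\beta(g-m)]$. This step is pure algebra and needs no assumptions.

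\textbf{Step 2 (value at the truth).} At $\eta_0$, the bracket is $(q_Y-l_Y)-\beta_0(g_X-m_X)$, and I will show it equals $\mathcal H$ from \eqref{eq:Hdef}. Using \eqref{eq:crc}, write
\[
q_Y-l_Y=\big(\E[Y(0)\mid Z,W]-\E[Y(0)\mid W]\big)+\big(\E[X\Delta\mid Z,W]-\E[X\Delta\mid W]\big).
\]
Assumption~\ref{ass:exog} gives $Z\indep Y(0)\mid W$, so $\E[Y(0)\mid Z,W]=\E[Y(0)\mid W]$ and the first bracket vanishes. Subtracting $\beta_0(g_X-m_X)$ then reproduces \eqref{eq:Hdef} exactly.

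For the zero-mean claim, fix any $\beta$. The factor $(q_Y-l_Y)-\beta(g_X-m_X)$ is $\sigma(Z,W)$-measurable and square integrable, using the finite second moments (or Assumption~\ref{ass:regB}). Since $\E[X-g_X\mid Z,W]=0$ by the definition of $g_X$, the tower property applied to the product gives expectation zero. The product is integrable by Cauchy--Schwarz.

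\textbf{Step 3 (the equivalence).} The statement $(X-g_X)\mathcal H=0$ a.s.\ is equivalent to $\E[(X-g_X)^2\mathcal H^2]=0$, because the integrand is nonnegative. Condition on $(Z,W)$ and pull out the measurable factor $\mathcal H^2$; the tower property for nonnegative variables holds without further integrability conditions. This gives $\E[(X-g_X)^2\mathcal H^2]=\E[\sigma_X^2(Z,W)\mathcal H^2]$. That quantity vanishes if and only if the nonnegative variable $\sigma_X^2\mathcal H^2$ is zero a.s., which is the claim.

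The only step with real content is Step 2's use of exogeneity to eliminate the $Y(0)$ term. It requires care that $\Delta$ itself need not be independent of $X$: we only subtract conditional means of $X\Delta$, and those terms remain in $\mathcal H$ by definition. I do not expect this to be an obstacle.
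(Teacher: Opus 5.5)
Your proposal is correct and follows the paper's proof essentially line for line. It uses the same algebraic subtraction, the same conditioning on $(Z,W)$ with $\E[X-g_X\mid Z,W]=0$ for the zero-mean claim, and the same identity $\E[(X-g_X)^2\mathcal H^2]=\E[\sigma_X^2(Z,W)\mathcal H^2]$ for the equivalence. Your Step 2 derivation that $q_Y-l_Y=\E[X\Delta\mid Z,W]-\E[X\Delta\mid W]$ under Assumption~\ref{ass:exog} is exactly the identity \eqref{eq:qminusl}, which the paper uses implicitly when it says the correction equals $(X-g_X)\mathcal H$ ``by definition of $\mathcal H$.''
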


The orthogonalization thus re-injects the first-stage residual weighted by the unabsorbed heterogeneity. Three consequences follow from the identity. First, the correction is mean zero at every $\beta$, so both scores identify $\beta_0$. Second, when $\sigma_X^2 > 0$ it vanishes when the essential heterogeneity of Definition~\ref{def:esshet} fails, which explains why the naive score is orthogonal under homogeneity. Third, the correction is the influence contribution of the estimated signal that the naive sandwich variance omits (Proposition~\ref{prop:sech} in Appendix~\ref{sec:appenda}).

\subsection{Asymptotic Distribution and Efficiency}\label{sec:crcasymp}

Write $\beta_0 = \theta_Y / J_0$ with $\theta_Y = \E[(Y - l_Y)(g_X - m_X)]$ and $J_0 = \E[(g_X - m_X)^2]$. The CRC robust score restores $\sqrt{N}$ inference on the fixed target, attains the efficiency bound, and reduces to the naive score under homogeneity. Orthogonality, together with cross-fitting at the rates of Assumption~\ref{ass:ratesP}, reduces the impact of nuisance estimation to an asymptotically negligible second-order remainder.

\begin{theorem}[Asymptotic Distribution and Efficiency]\label{thm:fixedtarget}\label{thm:orth}\label{thm:an}\label{prop:eif-swate}\label{rem:nocost}
  Let Assumptions~\ref{ass:exog}, \ref{ass:relevance}, \ref{ass:regB}, and~\ref{ass:ratesP} hold, and let $\hat{\beta}_{\mathrm{CRC}}$ be the root of the $K$-fold cross-fitted score equation
  \[
    \frac{1}{N}\sum_{k=1}^{K}\sum_{i\in\mathcal{I}_k} \psi_{\mathrm{CRC}}\big(O_i;\hat{\beta},\hat{\eta}_k\big) = 0,
  \]
  where $\hat{\eta}_k$ is trained on the observations outside fold $\mathcal{I}_k$. Then
  \begin{itemize}
    \item[(i)] $\E[\psi_{\mathrm{CRC}}(O;\beta_0,\eta_0)] = 0$, and it is Neyman orthogonal with respect to the nuisance vector $\eta = (l_Y, m_X, q_Y, g_X)$ at $\eta_0$, i.e., $\partial_{\eta}\,\E[\psi_{\mathrm{CRC}}(O;\beta_0,\eta_0)][\eta - \eta_0] = 0$;
    \item[(ii)] $\sqrt{N}\,(\hat{\beta}_{\mathrm{CRC}} - \beta_0) \xrightarrow{d} \mathcal{N}\big(0,\; J_0^{-2}\,\Omega_0\big)$, where $\Omega_0 = \E\big[\psi_{\mathrm{CRC}}(O;\beta_0,\eta_0)^2\big]$ with the asymptotic variance consistently estimated by $\hat{J}^{-2}\hat{\Omega}$, where $\hat{J} = \tfrac1N\sum_i (\hat{g}_{X,i} - \hat{m}_{X,i})^2$ and $\hat{\Omega} = \tfrac1N\sum_i \psi_{\mathrm{CRC}}(O_i;\hat{\beta},\hat{\eta})^2$;
    \item[(iii)] the Efficient Influence Function (EIF) of $\beta_0$ is
          \[
            \phi_{\beta_0}(O) \;=\; J_0^{-1}\,\psi_{\mathrm{CRC}}(O;\beta_0,\eta_0)
            \;=\; J_0^{-1}\big(UV + e\,\mathcal{H}\big),
          \]
          with $U = Y - l_Y - \beta_0(X - m_X)$, $V = g_X - m_X$, $e = X - g_X$. Consequently the semiparametric efficiency bound for $\beta_0$ is $J_0^{-2}\,\Omega_0$, which is attained by $\hat{\beta}_{\mathrm{CRC}}$;
    \item[(iv)] if effects are homogeneous, $\psi_{\mathrm{CRC}}(\cdot;\beta_0,\eta_0) = \psi_{\mathrm{naive}}(\cdot;\beta_0,\eta_0) = UV$ and $\Omega_0 = \Omega_0^{\mathrm{nv}}$.
  \end{itemize}
\end{theorem}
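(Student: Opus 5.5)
The proof handles the four parts in the order (i), (ii), (iii), (iv), treating $\psi_Y$ and $\psi_X$ separately and exploiting linearity of $\psi_{\mathrm{CRC}}$ in $\beta$.

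\textbf{Part (i).} Mean zero follows from Lemma~\ref{lem:scoreid}. The correction term has mean zero for every $\beta$, and $\E[\psi_{\mathrm{naive}}(O;\beta_0,\eta_0)]=0$ by the definition of $\beta_0$ in~\eqref{eq:beta0}.

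For orthogonality, take Gâteaux derivatives at $\eta_0$ in each coordinate direction, using $\E[X-g_X\mid Z,W]=0$ and $\E[g_X-m_X\mid W]=0$.
\begin{itemize}
\item Direction $\delta_q$: the derivative is $\E[(X-g_X)\delta_q]=0$.
\item Direction $\delta_l$ (a function of $W$): the derivative is $-\E[(g_X-m_X)\delta_l]-\E[(X-g_X)\delta_l]=0$.
\item Direction $\delta_m$ (a function of $W$): the derivative is $-\E[(Y-l_Y)\delta_m]+\beta_0\{2\E[(X-m_X)\delta_m]-\E[(X-g_X)\delta_m]\}$, and each term vanishes because $\E[Y-l_Y\mid W]=\E[X-m_X\mid W]=0$.
\item Direction $\delta_g$: the $\psi_Y$ part gives $\E[(Y-l_Y)\delta_g]-\E[\delta_g(q_Y-l_Y)]=0$, since $\E[Y\mid Z,W]=q_Y$. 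The $\psi_X$ part, written as $2(X-m_X)(g-m)-(g-m)^2$, gives $2\E[(X-m_X)\delta_g]-2\E[(g_X-m_X)\delta_g]=0$.
\end{itemize}
So only the $q$-correction is needed to cancel the $\mathcal{H}$-loading found in Proposition~\ref{prop:nonorth}.

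It is also worth recording the exact second-order expansion $\E[\psi(\beta_0,\eta)]$, which is the analogue of~\eqref{eq:naiveexact}. The $\psi_X$ part equals $J_0-\E[(\Delta g-\Delta m)^2]$, exactly. The $\psi_Y$ part equals $\theta_Y$ plus products $\E[(\Delta q-\Delta l)(\ldots)]$ involving $\Delta g,\Delta m$, together with $\E[\Delta l\,\Delta m]$-type terms.

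\textbf{Part (ii).} Because the score is affine in $\beta$,
\[
\hat\beta-\beta_0=\frac{\mathbb{E}_N[\psi_{\mathrm{CRC}}(\beta_0,\hat\eta)]}{\mathbb{E}_N[\psi_X(\hat\eta)]}.
\]
Treat the numerator fold by fold, conditioning on $\mathcal{D}_{-k}$, and split it as
\[
\mathbb{E}_{n_k}[\psi(\eta_0)]+\big(\mathbb{G}\text{-term}\big)+\big(\E[\psi(\hat\eta_k)\mid\mathcal{D}_{-k}]-0\big).
\]
\begin{itemize}
\item The empirical-process term has conditional variance bounded by $\E[(\psi(\hat\eta_k)-\psi(\eta_0))^2]\lesssim\rho_N^2$, using boundedness from Assumption~\ref{ass:regB} and fourth moments. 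By Chebyshev it is therefore $o_P(N^{-1/2})$.
\item The bias term is exactly the second-order product expression above, which is $o_P(N^{-1/2})$ under Assumption~\ref{ass:ratesP}(ii).
\end{itemize}
Matching the product terms to the stated rates is the main bookkeeping obstacle. The $\psi_X$ remainder $(\Delta g-\Delta m)^2$ is controlled by $\rho_m^2+\rho_g^2$. The $\psi_Y$ remainders pair an outcome-side error with a treatment-side error, giving terms such as $(\Delta q-\Delta l)(\Delta g)$ and $\Delta l\,\Delta m$. I must check that no $\Delta q\,\Delta l$ or $\Delta l^2$ term survives; the conditional-mean identities should kill them.

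For the denominator, the same argument shows $\mathbb{E}_N[\psi_X(\hat\eta)]\to_p J_0$, and then the CLT and Slutsky give the limit. The same argument applies to $\hat J$, which uses only $(\hat g-\hat m)^2$: it is consistent since $\|\hat g-\hat m-(g-m)\|\to0$. Consistency of $\hat\Omega$ follows from an $L_2$ bound on $\psi(\hat\beta,\hat\eta)-\psi(\beta_0,\eta_0)$ under fourth moments.

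\textbf{Part (iii).} First, $\psi_{\mathrm{CRC}}(\beta_0,\eta_0)=UV+e\mathcal{H}$ follows from Lemma~\ref{lem:scoreid}. To show this is the EIF, I would compute the pathwise derivative of $\beta(P)=\theta_Y(P)/J(P)$ along regular parametric submodels with score $s(O)$, in the nonparametric model. Conditional exogeneity imposes no restriction on the observed law, so the model is locally saturated, and any influence function is the efficient one.

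Concretely, I would differentiate the conditional expectations $l_Y,m_X,q_Y,g_X$ along the submodel. The orthogonality shown in (i) implies that the nuisance derivatives contribute nothing beyond the correction terms already built into $\psi_{\mathrm{CRC}}$. This yields $\partial_t\beta=\E[J_0^{-1}\psi_{\mathrm{CRC}}\,s]$, and $J_0^{-1}\psi_{\mathrm{CRC}}$ is mean zero. The bound is therefore its variance, $J_0^{-2}\Omega_0$, which coincides with the limit in (ii).

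\textbf{Part (iv).} Under homogeneity, $\mathcal{H}=0$ by the remark following~\eqref{eq:Hdef}. Hence $e\mathcal{H}=0$, the two scores coincide at $(\beta_0,\eta_0)$, and $\Omega_0=\Omega_0^{\mathrm{nv}}$.
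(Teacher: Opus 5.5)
Your parts (i), (ii) and (iv) follow the paper. You use coordinatewise G\^{a}teaux derivatives annihilated by $\E[X-g_X\mid Z,W]=0$ and the $W$-conditional-mean identities. You use the exact affine ratio $\hat\beta-\beta_0=\mathbb{E}_N[\psi_{\mathrm{CRC}}(\beta_0,\hat\eta)]/\mathbb{E}_N[\psi_X(\hat\eta)]$, split fold by fold into an oracle average, a centered empirical-process term and a second-order bias. And you use $\mathcal H=0$ under homogeneity.

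Part (iii) takes a genuinely different route. You get the gradient by differentiating $\E_{P_t}[\psi_{\mathrm{CRC}}(O;\beta(P_t),\eta(P_t))]=0$ and letting (i) kill the $\dot\eta$ term, so $\dot\beta=J_0^{-1}\E[\psi_{\mathrm{CRC}}\,s]$ in one step. The paper instead computes $\mathrm{EIF}(\theta_Y)=\psi_Y-\theta_Y$ and $\mathrm{EIF}(J_0)=\psi_X-J_0$ from $\dot g_X=\E[es\mid Z,W]$, $\dot m_X$ and $\dot l_Y$. It then applies the delta method to $\theta_Y/J_0$ and checks $UV+e\mathcal H=J_0\phi_{\beta_0}$.

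Your route is shorter and makes the link ``orthogonal moment $\Rightarrow$ influence function'' explicit. It does rely on two facts you leave implicit:
\begin{itemize}
\item the CRC moment has mean zero at every law $P$ in the model, not only at the truth (true, because the corrections vanish by the definition of $g_P$, and $\E_P[(X-m_P)(g_P-m_P)]=J(P)$);
\item a product-rule interchange along the submodel.
\end{itemize}
The paper's route is self-contained and shows exactly where $q_Y$ comes from: differentiating the embedded $g_X$ reprojects $Y-l_Y$ onto $(Z,W)$.

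For efficiency, do not rest on the claim that conditional exogeneity imposes no restriction on the observed law. That is unproven and false in general: in a stratum of $W$ where $X$ is degenerate, exogeneity forces $Y\indep Z$ given $W$ there. The paper states the bound relative to the nonparametric observed-data model, with the causal assumptions only interpreting the functional, and you should do the same.

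Several computations are off, though none changes a conclusion.
\begin{itemize}
\item Your ``exact'' $\psi_X$ expansion is wrong. Writing $\psi_X=2(X-m)(g-m)-(g-m)^2$ and conditioning on $(Z,W)$ gives $\E[\psi_X(O;\eta)]=\E[(g_X-m)^2]-\E[(g_X-g)^2]=J_0+\E[(\Delta m_X)^2-(\Delta g_X)^2]$. Your $J_0-\E[(\Delta g-\Delta m)^2]$ is what results from replacing $X-m$ by $X-m_X$.
\item The $\psi_Y$ remainder you leave unchecked is exactly $\E[\Delta l_Y\Delta m_X-\Delta g_X\Delta q_Y]$. The $\Delta l\,\Delta g$ products cancel, and no $\Delta q\,\Delta l$ or $(\Delta l)^2$ term arises. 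The bias is therefore bounded by $\rho_l\rho_m+\rho_g\rho_q+|\beta_0|(\rho_m^2+\rho_g^2)=o_{\mathbb P}(N^{-1/2})$ under Assumption~\ref{ass:ratesP}(ii).
\item In your $\delta_m$ derivative the term $-\beta_0\E[(X-g_X)\delta_m]$ should not appear, since the derivative of $\psi_X$ in direction $\delta_m$ at the truth is $-2(X-m_X)\delta_m$. It is zero anyway.
\item Your aside that only the $q$-correction cancels the $\mathcal H$-loading is inaccurate. The term $(X-g)(q-l)$ removes the $q_Y-l_Y$ part. The $(X-g)(g-m)$ term in $\psi_X$ is needed to remove the $-\beta_0 V$ part.
\item With only fourth moments on $(Y,X)$, $\E[(\psi(\hat\eta_k)-\psi(\eta_0))^2\mid\mathcal D_{-k}]$ is $O_{\mathbb P}(\rho_N)$, not $O_{\mathbb P}(\rho_N^2)$. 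Terms like $Y\,\Delta g_X$ are controlled only through $\|Y\|_{\mathbb P,4}^2\|\Delta g_X\|_\infty\|\Delta g_X\|$ (Lemma~\ref{lem:l2cont}). You only need this to be $o_{\mathbb P}(1)$, so the empirical-process term is still $o_{\mathbb P}(N^{-1/2})$.
\end{itemize}
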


\begin{lrbox}{\crcRbox}
  \begin{minipage}{0.92\linewidth}\singlespacing
    \begin{lstlisting}[style=crc,language=R]
## For R: nuisances lY=E[Y|W], mX=E[X|W], qY=E[Y|Z,W], gX=E[X|Z,W]  (cross-fitted)
psi_X <- (df$x-mX)*(gX-mX) + (df$x-gX)*(gX-mX)       # denominator moment
psi_Y <- (df$y-lY)*(gX-mX) + (df$x-gX)*(qY-lY)       # numerator moment
crc <- function(y,z,d,l_hat,m_hat,r_hat,g_hat,smpls) list(psi_a=-psi_X, psi_b=psi_Y)
\end{lstlisting}
  \end{minipage}
\end{lrbox}%
\begin{lrbox}{\crcPYbox}
  \begin{minipage}{0.92\linewidth}\singlespacing
    \begin{lstlisting}[style=crc,language=Python]
## For Python: nuisances l=lY=E[Y|W], m=gX=E[X|Z,W], r=mX=E[X|W], g=qY=E[Y|Z,W] (cross-fitted)
def crc(y, z, d, l, m, r, g, smpls):
    lY, gX, mX, qY = l, m, r, g
    psiX = (d-mX)*(gX-mX) + (d-gX)*(gX-mX)            # denominator moment
    psiY = (y-lY)*(gX-mX) + (d-gX)*(qY-lY)            # numerator moment
    return -psiX, psiY                                # (psi_a, psi_b); pass score=crc
\end{lstlisting}
  \end{minipage}
\end{lrbox}%

The estimation of $\hat{\beta}_{\mathrm{CRC}}$ is the standard DML problem for a user-defined linear score, which can be directly handled by the \texttt{DoubleML} libraries in R and Python.\footnote{Here we provide code for using CRC robust score in R and Python:\newline\usebox{\crcRbox}\newline\usebox{\crcPYbox}} The standard library routine returns the same estimator and standard error of Theorem~\ref{thm:an}.

\subsection{Worst-case bias over a realization ball}\label{sec:diagnosticsbias}

Two first-stage estimates can have the same $L_2$ error yet induce different biases in the naive moment, because their errors may covary differently with $\mathcal{H}$. Cross-fitting separates training from evaluation but does not rule out this covariance. Although the fitted functions are observed, their errors relative to the true nuisance functions are unknown. We therefore compare the largest absolute population moment bias of the naive and CRC robust scores over all admissible nuisance estimates satisfying the same error bounds. Lemma~\ref{lem:exact} provides expansions for this comparison.

\begin{proposition}[Worst-case first-order bias over a realization ball]\label{prop:worst}
  Let Assumptions~\ref{ass:exog} and~\ref{ass:relevance} and the heterogeneity condition $\E[\mathcal{H}^2] > 0$ hold, fix radii $r = (r_l, r_m, r_q, r_g) \in (0,1]^4$, fix $\bar{C} \ge \max_j\|\eta_{0,j}\|_\infty + \|\mathcal{H}\|_\infty / \|\mathcal{H}\|_{\mathbb{P},2}$, and define the realization set
  \[
    \mathcal{T}(r) := \Big\{ \eta \text{ admissible} :
    \|\eta_j\|_\infty \le \bar{C},\;\;
    \|\eta_j - \eta_{0,j}\|_{\mathbb{P},2} \le r_j,\; j \in \{l_Y,m_X,q_Y,g_X\}
    \Big\}.
  \]
  Then
  \begin{itemize}
    \item[(i)]
          $\displaystyle r_g\,\|\mathcal{H}\| \;\le\; \sup_{\eta \in \mathcal{T}(r)} \big|\E[\psi_{\mathrm{naive}}(O;\beta_0,\eta)]\big| \;\le\; r_g\,\|\mathcal{H}\| + (r_l + |\beta_0|\,r_m)(r_g + r_m), $ with the lower bound attained at $\eta = \big(l_Y,\, m_X,\, q_Y,\, g_X + r_g\,\mathcal{H}/\|\mathcal{H}\|\big)$;
    \item[(ii)]
          $\displaystyle \sup_{\eta \in \mathcal{T}(r)} \big|\E[\psi_{\mathrm{CRC}}(O;\beta_0,\eta)]\big| \;\le\; r_l r_m + r_g r_q + |\beta_0|\big(r_m^2 + r_g^2\big). $
  \end{itemize}
  Hence over any realization set of nonparametric radius ($\sqrt{N}\,r_g \to \infty$), the naive moment's worst-case first-order bias is of the order $r_g$, while that of the CRC robust moment is of the order of squared radii. Uniform validity over $\mathcal{T}(r_N)$ holds for the CRC robust score under Assumption~\ref{ass:ratesP} and fails for the naive score whenever $\mathcal{H} \neq 0$.
\end{proposition}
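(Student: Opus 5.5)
The plan is to get exact algebraic expansions of both population moments at $\beta_0$ for an arbitrary admissible $\eta$, which is the content of Lemma~\ref{lem:exact}. The upper bounds in (i) and (ii) then follow from Cauchy--Schwarz, and the lower bound in (i) from an explicit admissible perturbation. Throughout I write $\Delta l := l-l_Y$, $\Delta m := m-m_X$, $\Delta q := q-q_Y$ and $\Delta g := g-g_X$. Two facts drive every cancellation. First, $\E[Y\mid Z,W]=q_Y$ and $\E[X\mid Z,W]=g_X$. Second, any $\sigma(W)$-measurable function is orthogonal to $q_Y-l_Y$ and to $g_X-m_X$, because both have conditional mean zero given $W$.

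\emph{Part (i).} I start from the exact decomposition \eqref{eq:naiveexact},
\[
\E[\psi_{\mathrm{naive}}(O;\beta_0,\eta)] = \E[\mathcal{H}\,\Delta g] + \E\big[(-\Delta l+\beta_0\Delta m)(\Delta g-\Delta m)\big].
\]
Applying Cauchy--Schwarz to each term gives the upper bound $r_g\|\mathcal{H}\| + (r_l+|\beta_0| r_m)(r_g+r_m)$ uniformly over $\mathcal{T}(r)$.

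For the lower bound I take $\eta^\star=(l_Y,m_X,q_Y,g_X+r_g\mathcal{H}/\|\mathcal{H}\|)$. It is admissible for three reasons.
\begin{itemize}
\item[(a)] Its $L_2$ error in the $g$ coordinate is exactly $r_g$, and the other coordinates have zero error.
\item[(b)] Its sup norm is at most $\|g_X\|_\infty + r_g\|\mathcal{H}\|_\infty/\|\mathcal{H}\| \le \bar C$, using $r_g\le 1$ and the stated choice of $\bar C$.
\item[(c)] The perturbation $\mathcal{H}$ is $\sigma(Z,W)$-measurable.
\end{itemize}
At $\eta^\star$ we have $\Delta l=\Delta m=0$, so the product term vanishes and the moment equals $r_g\E[\mathcal{H}^2]/\|\mathcal{H}\| = r_g\|\mathcal{H}\|$. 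Hence the supremum is at least $r_g\|\mathcal{H}\|$.

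\emph{Part (ii).} I expand the numerator and denominator moments separately. For $\psi_Y$, conditioning on $(Z,W)$ gives
\[
\E[\psi_Y(O;\eta)] = \E\big[(q_Y-l)(g-m) + (g_X-g)(q-l)\big].
\]
Substitute $q_Y-l=(q_Y-l_Y)-\Delta l$, $g-m=(g_X-m_X)+\Delta g-\Delta m$, $g_X-g=-\Delta g$ and $q-l=(q_Y-l_Y)+\Delta q-\Delta l$, then expand. The cross terms $\E[(q_Y-l_Y)\Delta m]$ and $\E[\Delta l\,(g_X-m_X)]$ vanish by the $W$-orthogonality noted above. The terms $\pm\E[(q_Y-l_Y)\Delta g]$ and $\pm\E[\Delta l\,\Delta g]$ cancel in pairs. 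What remains is
\[
\E[\psi_Y(O;\eta)] = \theta_Y + \E[\Delta l\,\Delta m] - \E[\Delta g\,\Delta q].
\]
The same computation with $(q,l)$ replaced by $(g,m)$ gives $\E[\psi_X(O;\eta)] = J_0 + \|\Delta m\|^2 - \|\Delta g\|^2$. Since $\theta_Y-\beta_0 J_0=0$, it follows that
\[
\E[\psi_{\mathrm{CRC}}(O;\beta_0,\eta)] = \E[\Delta l\,\Delta m]-\E[\Delta g\,\Delta q]-\beta_0\big(\|\Delta m\|^2-\|\Delta g\|^2\big).
\]
Cauchy--Schwarz then yields the bound $r_lr_m+r_gr_q+|\beta_0|(r_m^2+r_g^2)$.

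\emph{Concluding statements.} If $\sqrt N r_g\to\infty$, the naive worst case is at least $r_g\|\mathcal{H}\|$, which is not $o(N^{-1/2})$. By contrast, the CRC bound consists of exactly the products controlled by Assumption~\ref{ass:ratesP}(ii), so it is $o(N^{-1/2})$. This gives uniform validity of the CRC score over $\mathcal{T}(r_N)$ and its failure for the naive score whenever $\mathcal{H}\neq0$.

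I expect the main obstacle to be the bookkeeping in the $\psi_Y$ expansion: making sure every first-order term either cancels or is killed by $W$-orthogonality, and that no stray $\E[(q_Y-l_Y)\Delta g]$ survives. The other delicate point is checking that $\eta^\star$ satisfies the sup-norm constraint, which is exactly why $\bar C$ is inflated by $\|\mathcal{H}\|_\infty/\|\mathcal{H}\|$.
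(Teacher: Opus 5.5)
Your proposal is correct and follows the paper's proof essentially step for step. Both use the exact expansions of Lemma~\ref{lem:exact}, which you re-derive for the CRC score by conditioning on $(Z,W)$ and treating $\psi_X$ as the $(Y,q,l)\mapsto(X,g,m)$ instance of $\psi_Y$; both then apply Cauchy--Schwarz for the two upper bounds, use the same admissible perturbation $g_X + r_g\mathcal{H}/\|\mathcal{H}\|$ for the lower bound, and make the same closing rate comparison under Assumption~\ref{ass:ratesP}.
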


Proposition~\ref{prop:worst} is more than a bound. In Corollary~\ref{cor:adv}, Appendix~\ref{sec:appenda}, we show that an explicit deterministic sequence $\hat{g}_X^{(N)} = g_X + N^{-\gamma}\mathcal{H}$ with $\gamma\in(1/4,1/2)$ drives naive coverage of $\beta_0$ to zero while the CRC robust interval stays nominal.

\section{Diagnostics and inference under a weak residual signal}\label{sec:diagnostics}

There are two remaining tasks for practice. First, telling whether the naive estimate is on target and inferring $\beta_0$ when partialling out $W$ leaves little residual signal. This section supplies an alignment diagnostic and gives an identification robust confidence set that stays valid when the instruments are weak.

\subsection{The Alignment Diagnostic}\label{sec:aligndiag}

Corollary~\ref{cor:cover} makes the practitioner's question precise: is the realized drift negligible, or is the naive estimator answering the question about $\beta_0$? Lemma~\ref{lem:scoreid} supplies a test. With $\hat{\mathcal{H}}_k(Z,W) := (\hat{q}_{Y,k} - \hat{l}_{Y,k}) - \hat{\beta}_{\mathrm{CRC}}\,(\hat{g}_{X,k} - \hat{m}_{X,k})$, define
\[
  \widehat{\mathcal{A}}_N := \frac{1}{N}\sum_{k=1}^{K}\sum_{i\in\mathcal{I}_k}
  \big(X_i - \hat{g}_{X,k,i}\big)\,\hat{\mathcal{H}}_{k,i},
  \quad
  \widehat{\Xi}_N := \frac{1}{N}\sum_{k=1}^{K}\sum_{i\in\mathcal{I}_k}
  \big(X_i - \hat{g}_{X,k,i}\big)^2\,\hat{\mathcal{H}}_{k,i}^2,
  \quad
  T_N := \frac{\sqrt{N}\,\widehat{\mathcal{A}}_N}{\widehat{\Xi}_N^{1/2}} .
\]

\begin{proposition}[Alignment Diagnostic]\label{prop:align}
  Let Assumptions~\ref{ass:exog}, \ref{ass:relevance}, \ref{ass:regB}, and~\ref{ass:ratesP} hold, and suppose $\Xi_0 := \E\big[(X - g_X)^2\,\mathcal{H}^2\big] > 0$. Then
  \begin{itemize}
    \item[(i)] $\displaystyle
            \sqrt{N}\,\widehat{\mathcal{A}}_N = \frac{1}{\sqrt{N}}\sum_{i=1}^{N}(X_i - g_{X,i})\,\mathcal{H}_i \;-\; \sqrt{N}\,\bar{D}_N \;+\; o_{\mathbb{P}}(1)$, and $\widehat{\Xi}_N \xrightarrow{p} \Xi_0$;
    \item[(ii)] under
          $H_0 : \sqrt{N}\,\bar{D}_N \xrightarrow{p} 0$ and $T_N \xrightarrow{d} \mathcal{N}(0,1)$, so the test rejecting when $|T_N| > z_{1-\alpha/2}$ has asymptotic size $\alpha$;
    \item[(iii)] if $\sqrt{N}\,|\bar{D}_N| \xrightarrow{p} \infty$, then
          $|T_N| \xrightarrow{p} \infty$, and the test is consistent;
    \item[(iv)] $\sqrt{N}\,\widehat{\mathcal{A}}_N = J_0\,\sqrt{N}\,\big(\hat{\beta}_{\mathrm{CRC}} - \hat{\beta}_{\mathrm{naive}}\big) + o_{\mathbb{P}}(1)$, so $T_N$ is asymptotically the studentized contrast of the two estimators.
  \end{itemize}
\end{proposition}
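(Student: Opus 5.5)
The plan is to decompose $\widehat{\mathcal{A}}_N$ fold by fold around the population product $(X-g_X)\mathcal{H}$ and then read off (ii)--(iv) from that expansion. Write $e=X-g_X$, $\hat e_k = X-\hat g_{X,k} = e-\Delta g_{X,k}$. Set $\mathcal{H}_k^\dagger := (\hat q_{Y,k}-\hat l_{Y,k})-\beta_0(\hat g_{X,k}-\hat m_{X,k})$, so that $\hat{\mathcal{H}}_k=\mathcal{H}_k^\dagger-(\hat\beta_{\mathrm{CRC}}-\beta_0)(\hat g_{X,k}-\hat m_{X,k})$. Note also $\mathcal{H}_k^\dagger-\mathcal{H}=\Delta q_Y-\Delta l_Y-\beta_0(\Delta g_X-\Delta m_X)$, because $q_Y-l_Y=\E[X\Delta\mid Z,W]-\E[X\Delta\mid W]+\E[Y(0)\mid Z,W]-\E[Y(0)\mid W]$ and the last difference vanishes by Assumption~\ref{ass:exog}. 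Then
\[
\hat e_k\hat{\mathcal{H}}_k = e\mathcal{H} + e(\mathcal{H}^\dagger_k-\mathcal{H}) - \Delta g_{X,k}\,\mathcal{H} - \Delta g_{X,k}(\mathcal{H}^\dagger_k-\mathcal{H}) - (\hat\beta_{\mathrm{CRC}}-\beta_0)\hat e_k(\hat g_{X,k}-\hat m_{X,k}).
\]

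The terms are handled as follows.
\begin{itemize}
\item[(a)] The term $e(\mathcal{H}^\dagger_k-\mathcal{H})$ has conditional mean zero given $\mathcal{D}_{-k}$, since $\E[e\mid Z,W]=0$. Its conditional variance is bounded by $\bar C^2$-times $\rho_N^2$ via Assumption~\ref{ass:regB}, so by the usual cross-fitting Chebyshev argument its scaled sum is $o_{\mathbb{P}}(1)$.
\item[(b)] The term $-\Delta g_{X,k}\mathcal{H}$ has conditional mean exactly $-\E[\mathcal{H}\Delta g_{X,k}\mid\mathcal{D}_{-k}]$. Averaged over folds this gives $-\bar D_N$; the centered fluctuation is again $o_{\mathbb{P}}(N^{-1/2})$ by variance $\lesssim\rho_{g,k}^2$.
\item[(c)] The term $\Delta g_{X,k}(\mathcal{H}^\dagger_k-\mathcal{H})$ has conditional mean bounded by Cauchy--Schwarz, giving $\rho_{g,k}(\rho_{q,k}+\rho_{l,k}+|\beta_0|(\rho_{g,k}+\rho_{m,k}))=o_{\mathbb{P}}(N^{-1/2})$ by Assumption~\ref{ass:ratesP}(ii).
\item[(d)] For the $\hat\beta$ term, $\hat\beta_{\mathrm{CRC}}-\beta_0=O_{\mathbb{P}}(N^{-1/2})$ by Theorem~\ref{thm:fixedtarget}(ii). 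Moreover $N^{-1}\sum\hat e_k(\hat g_{X,k}-\hat m_{X,k})\to\E[e(g_X-m_X)]=0$ in probability, so this term is $o_{\mathbb{P}}(N^{-1/2})$.
\end{itemize}
This yields (i). Consistency of $\widehat\Xi_N$ follows the same way: it suffices to show $N^{-1}\sum(\hat e_k\hat{\mathcal{H}}_k)^2-N^{-1}\sum(e\mathcal{H})^2\to0$, using boundedness, fourth moments of $X$, $\rho_N=o_{\mathbb{P}}(1)$, and the LLN for $e^2\mathcal{H}^2$.

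Part (ii) is then immediate. The leading sum is i.i.d., mean zero, with variance $\Xi_0>0$ (finite by Assumption~\ref{ass:regB}), so the CLT and Slutsky apply once $\sqrt N\bar D_N\to0$. For (iii), the leading term is $O_{\mathbb{P}}(1)$ while $\sqrt N|\bar D_N|\to\infty$ and $\widehat\Xi_N\to\Xi_0$, so $|T_N|\to\infty$.

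For (iv), use Lemma~\ref{lem:scoreid}, which gives $\psi_{\mathrm{CRC}}-\psi_{\mathrm{naive}}=(X-g)[(q-l)-\beta(g-m)]$ pointwise. Evaluated at $(\hat\beta_{\mathrm{CRC}},\hat\eta_k)$, the cross-fitted sample average of this difference is exactly $\widehat{\mathcal{A}}_N$. The CRC empirical score is zero at $\hat\beta_{\mathrm{CRC}}$, so the naive empirical score at $\hat\beta_{\mathrm{CRC}}$ equals $-\widehat{\mathcal{A}}_N$. The naive empirical score is linear in $\beta$, with slope $-\hat J_{\mathrm{nv}}=-N^{-1}\sum(X-\hat m)(\hat g-\hat m)\to -J_0$, and vanishes at $\hat\beta_{\mathrm{naive}}$. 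Hence $\widehat{\mathcal{A}}_N=\hat J_{\mathrm{nv}}(\hat\beta_{\mathrm{CRC}}-\hat\beta_{\mathrm{naive}})$ exactly. Replacing $\hat J_{\mathrm{nv}}$ by $J_0$ costs $o_{\mathbb{P}}(1)$ after scaling, because $\sqrt N(\hat\beta_{\mathrm{CRC}}-\hat\beta_{\mathrm{naive}})=O_{\mathbb{P}}(1)+\sqrt N J_0^{-1}\bar D_N$. Under the alternative of (iii) one should state this as a ratio, which I would note.

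I expect step (b) to be the main obstacle, specifically handling the sign and normalization of $\bar D_N$ and the cross-fitting fluctuation. The fluctuation term in (b) has variance only $O(\rho_{g}^2/N)$, so it is fine. The real care is that (c) requires the product-rate conditions, not just $\rho_N\to0$.
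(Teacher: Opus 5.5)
Your treatment of (i)--(iii) is essentially the paper's. The paper also splits $\widehat{\mathcal{A}}_N$ into the plug-in at $\beta_0$ and a $\beta$-leakage term. It expands the plug-in into the four pieces $e\mathcal{H}$, $e\Delta_{H,k}$, $\Delta g_{X,k}\mathcal{H}$, $\Delta g_{X,k}\Delta_{H,k}$, which are your (a)--(c), and it handles the leakage term exactly as in your (d).

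Two small repairs there:
\begin{itemize}
\item In (a), the bound $\bar C^2\rho_N^2$ treats $e=X-g_X$ as bounded, which Assumption~\ref{ass:regB} does not give. Use the interpolation bound instead: $\E[e^2\Delta_{H,k}^2\mid\mathcal D_{-k}]\le\|e\|_{\mathbb{P},4}^2\|\Delta_{H,k}\|_\infty\|\Delta_{H,k}\|=O_{\mathbb{P}}(\rho_N)$, which is still $o_{\mathbb{P}}(1)$.
\item In (c), you need the (trivial) fluctuation bound as well as the conditional mean.
\end{itemize}

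For (iv) you take a genuinely different route. The paper subtracts the two influence expansions, $\sqrt N(\hat\beta_{\mathrm{CRC}}-\beta_0)=J_0^{-1}N^{-1/2}\sum_i(U_iV_i+e_i\mathcal H_i)+o_{\mathbb P}(1)$ and $\sqrt N(\hat\beta_{\mathrm{naive}}-\beta_0)=J_0^{-1}[N^{-1/2}\sum_iU_iV_i+\sqrt N\bar D_N]+o_{\mathbb P}(1)$, and matches the difference with part (i). Because the drift enters the naive expansion through $\beta_N^*$ with the deterministic factor $J_0^{-1}$, this yields the additive statement in every regime.

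Your exact identity $\widehat{\mathcal A}_N=\hat J_{\mathrm{nv}}(\hat\beta_{\mathrm{CRC}}-\hat\beta_{\mathrm{naive}})$, from Lemma~\ref{lem:scoreid} and the two score equations, is correct. It is also sharper: it makes $T_N$ literally a studentized contrast with the empirical Jacobian.

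However, the step replacing $\hat J_{\mathrm{nv}}$ by $J_0$ has a gap. Knowing only $\hat J_{\mathrm{nv}}-J_0=o_{\mathbb P}(1)$, the error $(\hat J_{\mathrm{nv}}-J_0)\sqrt N(\hat\beta_{\mathrm{CRC}}-\hat\beta_{\mathrm{naive}})$ is $o_{\mathbb P}(1)$ only when $\sqrt N\bar D_N=O_{\mathbb P}(1)$. The assumptions allow $\bar D_N$ up to $o_{\mathbb P}(N^{-1/4})$, and retreating to a ratio statement under the alternative proves less than (iv) claims.

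The missing ingredient is a rate for the Jacobian:
\begin{itemize}
\item From the proof of Theorem~\ref{thm:driftvalid}, $B_k=J_0+s_k$ with $|s_k|\le\|\tilde g_X\|\rho_{g,k}+\rho_{m,k}(\rho_{g,k}+\rho_{m,k})$, and the fold averages satisfy $\bar j_k-B_k=O_{\mathbb P}(N^{-1/2})$. Hence $\hat J_{\mathrm{nv}}-J_0=O_{\mathbb P}(N^{-1/2}+\rho_{g,N}+\rho_{m,N}^2)$.
\item Meanwhile $\sqrt N|\bar D_N|\le\sqrt N\|\mathcal H\|\rho_{g,N}$.
\item The cross term is therefore $o_{\mathbb P}(1)+O_{\mathbb P}(\rho_{g,N}+\sqrt N\rho_{g,N}^2+\sqrt N\rho_{m,N}^2)=o_{\mathbb P}(1)$ by Assumption~\ref{ass:ratesP}(ii).
\end{itemize}
So the additive form holds even when $\sqrt N|\bar D_N|\to\infty$. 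Note also the sign: $\sqrt N(\hat\beta_{\mathrm{CRC}}-\hat\beta_{\mathrm{naive}})=O_{\mathbb P}(1)-J_0^{-1}\sqrt N\bar D_N$.
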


A rejection provides evidence of non-negligible drift in the naive estimand due to first-stage error correlated with $\mathcal{H}$. Under the maintained assumptions, the CRC robust estimator continues to provide valid inference for $\beta_0$. The diagnostic thus uses a Hausman-type comparison of estimators with different robustness to first-stage error \citep{hausman1978specification}. The diagnostic tests negligible drift, rather than $\mathcal{H}=0$, and therefore differs from tests of the CRC model such as \citet{heckman2010testing}. We recommend reporting $(T_N,\widehat{\Xi}_N)$ to interpret the estimates, without using the diagnostic to select between inference procedures (Section~\ref{sec:geometry}).

\begin{remark}\label{rem:degen}
  Under the maintained conditions, $\mathcal{H}=0$ implies $\Xi_0=0$, $\sqrt{N}\,\widehat{\mathcal{A}}_N=o_{\mathbb{P}}(1)$, and $\hat{\beta}_{\mathrm{naive}}-\hat{\beta}_{\mathrm{CRC}}=o_{\mathbb{P}}(N^{-1/2})$. The two estimators are then first-order equivalent, but the standard-normal calibration of $T_N$, which requires $\Xi_0>0$, does not apply. More generally, $\Xi_0=\E[\Var(X\mid Z,W)\mathcal{H}^2]$, so a small $\widehat{\Xi}_N$ may reflect little residual treatment variation as well as limited heterogeneity along the signal.
\end{remark}

Agreement between the point estimates can also arise when $\mathcal{H}\neq0$. Under Assumption~\ref{ass:fsal} with $a=\mathcal{H}$, the first stage supplies the correction implicitly, making the naive and CRC robust estimators first-order equivalent (Proposition~\ref{prop:sech}(iii)). When $\Xi_0>0$, this gives $T_N\xrightarrow{p}0$ because the diagnostic's leading stochastic terms cancel. The naive sandwich nevertheless converges to $J_0^{-2}\Omega_0^{\mathrm{nv}}$, which need not equal the true variance $J_0^{-2}\Omega_0$. A least-squares first stage whose dictionary spans $g_X$ and $\mathcal{H}$ provides a concrete example (Proposition~\ref{prop:series}), consistent with the implicit correction studied by \citet{newey1994asymptotic}. Thus a small diagnostic does not validate the naive standard error, a concern also in \citet{kolesar2013estimation}, \citet{evdokimov2018inference}, \citet{lee2018consistent}.

\subsection{Identification Robust Sets}\label{sec:robustsets}

Because $\psi_{\mathrm{CRC}}$ is affine in $\beta$, inference can avoid the Jacobian. This is valuable when partialling $W$ out of a strongly $W$-predictable instrument leaves little residual signal ($J_0 \approx 0$), the regime in which the Wald interval of Theorem~\ref{thm:an} degrades. Define
\begin{equation}\label{eq:arstat}
  \bar{\psi}(\beta) := \frac{1}{N}\sum_{k=1}^K\sum_{i\in\mathcal{I}_k}
  \psi_{\mathrm{CRC}}\big(O_i;\beta,\hat{\eta}_k\big),
  \qquad
  \hat{\sigma}^2(\beta) := \frac{1}{N}\sum_{k=1}^K\sum_{i\in\mathcal{I}_k}
  \psi_{\mathrm{CRC}}\big(O_i;\beta,\hat{\eta}_k\big)^2,
\end{equation}
$\mathrm{AR}_N(\beta) := \sqrt{N}\,\bar{\psi}(\beta)/\hat{\sigma}(\beta)$, and the confidence set $\mathcal{C}_{1-\alpha} := \{\beta \in \mathbb{R} : |\mathrm{AR}_N(\beta)| \le z_{1-\alpha/2}\}$.

\begin{proposition}[Validity with Weak Signal]\label{thm:ar}
  Let Assumptions~\ref{ass:exog}, \ref{ass:relevance}, \ref{ass:regB}, and~\ref{ass:ratesP} hold, and suppose $\Omega_0 = \E\big[\psi_{\mathrm{CRC}}(O;\beta_0,\eta_0)^2\big] > 0$. Then
  \[
    \mathbb{P}\big(\beta_0 \in \mathcal{C}_{1-\alpha}\big)
    \;\longrightarrow\; 1 - \alpha .
  \]
\end{proposition}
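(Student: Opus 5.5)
The event $\beta_0\in\mathcal{C}_{1-\alpha}$ is exactly $|\mathrm{AR}_N(\beta_0)|\le z_{1-\alpha/2}$. So it suffices to show $\mathrm{AR}_N(\beta_0)\xrightarrow{d}\mathcal{N}(0,1)$ at the single point $\beta=\beta_0$. No inversion or uniformity in $\beta$ is needed, and the Jacobian $J_0$ never appears in the statistic. The proof therefore splits into a numerator and a denominator, and both reuse the machinery behind Theorem~\ref{thm:an}(ii), now evaluated at the known $\beta_0$ rather than at $\hat\beta_{\mathrm{CRC}}$.

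\textbf{Numerator.} Write, fold by fold,
\[
\sqrt{N}\,\bar\psi(\beta_0)=\frac{1}{\sqrt N}\sum_{i=1}^N\psi_{\mathrm{CRC}}(O_i;\beta_0,\eta_0)+R_{1,N}+R_{2,N},
\]
where:
\begin{itemize}
\item $R_{1,N}=\sum_k \frac{n_k}{\sqrt N}\,\E[\psi_{\mathrm{CRC}}(O;\beta_0,\hat\eta_k)-\psi_{\mathrm{CRC}}(O;\beta_0,\eta_0)\mid\mathcal D_{-k}]$ is the conditional bias;
\item $R_{2,N}$ is the centered empirical-process difference.
\end{itemize}

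For $R_{1,N}$, Theorem~\ref{thm:an}(i) gives Neyman orthogonality, and Lemma~\ref{lem:exact} (as used in Proposition~\ref{prop:worst}(ii)) shows the remaining bias is a sum of products of nuisance errors. It is bounded by $\rho_{l,k}\rho_{m,k}+\rho_{g,k}\rho_{q,k}+|\beta_0|(\rho_{m,k}^2+\rho_{g,k}^2)$ up to constants. This is $o_{\mathbb P}(N^{-1/2})$ by Assumption~\ref{ass:ratesP}(ii), so $R_{1,N}=o_{\mathbb P}(1)$.

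For $R_{2,N}$, condition on $\mathcal D_{-k}$; the held-out observations are then i.i.d. and $\hat\eta_k$ is fixed. Chebyshev's inequality bounds the conditional variance by $\E[(\psi_{\mathrm{CRC}}(O;\beta_0,\hat\eta_k)-\psi_{\mathrm{CRC}}(O;\beta_0,\eta_0))^2\mid\mathcal D_{-k}]$. Expanding the score, this difference is a sum of products, each pairing a bounded factor ($\|\cdot\|_\infty\le\bar C$ by Assumption~\ref{ass:regB}(ii)) or an observable with finite fourth moments ($Y,X$) with a nuisance error. The boundedness lets us pull out sup norms, giving a bound of order $\rho_N^2\to0$ by Assumption~\ref{ass:ratesP}(i). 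Hence $R_{2,N}=o_{\mathbb P}(1)$.

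The leading term is an i.i.d. sum. It has mean zero by Theorem~\ref{thm:an}(i) and variance $\Omega_0\in(0,\infty)$: positivity is assumed, and finiteness follows from Assumption~\ref{ass:regB}(i). The Lindeberg--L\'evy CLT then gives $\mathcal N(0,\Omega_0)$.

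\textbf{Denominator.} Show $\hat\sigma^2(\beta_0)\xrightarrow{p}\Omega_0$. Write
\[
\hat\sigma^2(\beta_0)=\frac1N\sum_i\psi_{\mathrm{CRC}}(O_i;\beta_0,\eta_0)^2+\Big(\hat\sigma^2(\beta_0)-\frac1N\sum_i\psi_{\mathrm{CRC}}(O_i;\beta_0,\eta_0)^2\Big).
\]
The first term converges to $\Omega_0$ by the law of large numbers. For the second, use $|a^2-b^2|\le (a-b)^2+2|b||a-b|$ and Cauchy--Schwarz. This reduces the difference to $\big(\frac1N\sum_i(\psi(\hat\eta)-\psi(\eta_0))^2\big)^{1/2}$ multiplied by an $O_{\mathbb P}(1)$ factor. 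The fold-wise conditional expectation of the squared difference is $O(\rho_N^2)$, as computed for $R_{2,N}$, so Markov's inequality makes it $o_{\mathbb P}(1)$. Slutsky's lemma then yields $\mathrm{AR}_N(\beta_0)\xrightarrow{d}\mathcal N(0,1)$, and hence the stated coverage.

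The main obstacle is the careful second-order bias bound for $R_{1,N}$ at fixed $\beta_0$. One must verify that every cross term after orthogonality is a product of two distinct nuisance errors matching the pairings in Assumption~\ref{ass:ratesP}(ii). In particular, the $(X-\hat g)(\hat q-\hat l)$ correction must generate only $\Delta g\,\Delta q$ and $\Delta g\,\Delta l$-type terms, and the latter must cancel against terms from $(Y-\hat l)(\hat g-\hat m)$. I would take this directly from Lemma~\ref{lem:exact} rather than re-derive it. Relevance (Assumption~\ref{ass:relevance}) is not used for the coverage statement at all; it enters only to guarantee the set is bounded, which the proposition does not claim.
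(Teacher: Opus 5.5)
Your argument is correct and is essentially the paper's proof: the oracle average plus a conditional bias controlled by Lemma~\ref{lem:exact}, conditional Chebyshev for the centered fold fluctuations, the Step-6 type argument for $\hat\sigma^2(\beta_0)\xrightarrow{p}\Omega_0$, and Slutsky, with no division by $J_0$. One small correction is needed: because $Y$ and $X$ are only in $L_4$, cross terms such as $Y\,\Delta g_{X,k}$ obey only $\E[Y^2(\Delta g_{X,k})^2\mid\mathcal{D}_{-k}]\le\|Y\|_{\mathbb{P},4}^2\,\|\Delta g_{X,k}\|_\infty\,\|\Delta g_{X,k}\|$, so the conditional second moment of the score difference is $O_{\mathbb P}(\rho_N)$ rather than $O_{\mathbb P}(\rho_N^2)$ (this is why Lemma~\ref{lem:l2cont} carries square roots), but that still yields the $o_{\mathbb P}(1)$ you need.
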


\begin{remark}\label{rem:fieller}
  Writing $\psi_{Y,i}, \psi_{X,i}$ for the plug-in values of the moments~\eqref{eq:psiY}--\eqref{eq:psiX}, let $S_Y := \frac{1}{N}\sum_i \psi_{Y,i}$, $S_X := \frac{1}{N}\sum_i \psi_{X,i}$, $M_{YY} := \frac{1}{N}\sum_i \psi_{Y,i}^2$, $M_{XY} := \frac{1}{N}\sum_i \psi_{Y,i}\psi_{X,i}$, $M_{XX} := \frac{1}{N}\sum_i \psi_{X,i}^2$. Since $\bar{\psi}(\beta) = S_Y - \beta S_X$ and $\hat{\sigma}^2(\beta) = M_{YY} - 2\beta M_{XY} + \beta^2 M_{XX}$, the set is the solution of a quadratic inequality,
  \[
    \mathcal{C}_{1-\alpha}
    = \big\{\beta : A\beta^2 - 2B\beta + C \le 0\big\},
  \]
  with
  \[
    A = N S_X^2 - z^2 M_{XX},\quad
    B = N S_X S_Y - z^2 M_{XY},\quad
    C = N S_Y^2 - z^2 M_{YY},
  \]
  with $z = z_{1-\alpha/2}$. The set is an interval when $A > 0$, and the complement of an interval or the whole line when $A \le 0$ \citep{anderson1949estimation,fieller1954some}. Under strong identification, $J_0$ is bounded away from zero, $\mathcal{C}_{1-\alpha}$ is asymptotically equivalent to the Wald interval of Theorem~\ref{thm:an}.
\end{remark}

\section{Simulation}\label{sec:montecarlo}

This section examines the finite-sample performance of the naive and CRC robust estimators and of the accompanying diagnostics. The experiments have three aims: to show that the drift of Theorem~\ref{thm:driftvalid} distorts naive inference at conventional sample sizes; to assess the coverage and efficiency of the CRC robust estimator, including its behavior under homogeneity; and to show the finite-sample size and power of the diagnostics of Section~\ref{sec:diagnostics} in designs where the target is known.

The data generating process is a threshold-crossing CRC model, in which a binary treatment is selected on its own gain
\begin{equation}\label{eq:mcdgp}
  \begin{aligned}
    \nu    & = \alpha_0 + Z^\top\alpha_Z + W^\top\alpha_W,         & X    & = \mathbf{1}\{U \le \nu\},            \\
    \Delta & = a_0 + a_W\,(W^\top\gamma) + b\,U + \sigma_\xi\,\xi, & Y(0) & = c_0 + W^\top\gamma_Y + \sigma_e\,e,
  \end{aligned}
\end{equation}
with $Y = Y(0) + X\Delta$ as in equation~\eqref{eq:crc}. The shocks $(U,\xi,e)$ are standard normal, $Z\sim\mathcal{N}(0,I_{p_Z})$, $W\sim\mathcal{N}(0,I_{p_W})$, and all five are mutually independent, so Assumption~\ref{ass:exog} holds by construction. Because the selection shock $U$ enters both the treatment equation and the return, the coefficient $b$ measures essential heterogeneity in the sense of Definition~\ref{def:esshet}, with $b=0$ the homogeneous benchmark, and $a_W$ measures heterogeneity in observables through $\E[\Delta\mid W]$. All four nuisance functions are available in closed form. In particular, $g_X=\E[X\mid Z,W]=\Phi(\nu)$.

We consider two main configurations. The first, the oracle design, has $p_Z=50$ instruments whose first-stage coefficient vector $\alpha_Z$ is approximately sparse, $p_W=10$ covariates, strong essential heterogeneity ($b=2.4$), and no heterogeneity in observables ($a_W=0$). Its nuisances are evaluated at their closed-form values rather than estimated, which separates the behavior of the estimators from first-stage estimation error. A variant perturbs the signal by an error of known size and direction. The second, the learnable design, is lower-dimensional and less noisy ($p_Z=8$, $p_W=4$, $a_W=1.2$, $b=0.6$), and its nuisances are cross-fitted with the lasso and gradient boosting. The configuration is chosen so that the product-rate condition of Assumption~\ref{ass:ratesP} is plausible for both learners. The two configurations imply the same target, $\beta_0\approx0.50$. For each sample size $N\in\{500,1000,2000,4000,8000\}$ we generate $R=1000$ samples. Throughout, coverage refers to the frequency with which a nominal $95\%$ confidence interval covers the fixed target $\beta_0$, and the interval is the Wald interval unless stated otherwise. Two auxiliary designs, constructed to evaluate the diagnostics of Section~\ref{sec:diagnostics}, are described together with the corresponding results below.

Figure~\ref{fig:collapse} reports the behavior of naive inference when the first-stage error is aligned with the heterogeneity residual. Starting from the oracle design, we replace the signal with $\hat{g}_X = g_X + c_N\,\mathcal{H}$ under two schemes for $c_N$. The fast error, $c_N = 2\,N^{-1/3}$, is of the form in Corollary~\ref{cor:adv} and drives $\mu_N$ in Corollary~\ref{cor:cover} to infinity. The slow error, $c_N = 2\sqrt{\log p_Z/N}$, holds $\mu_N$ at a nonzero constant. The two schemes realize branches (c) and (b) of Corollary~\ref{cor:cover}, respectively. Under the fast error, coverage of the naive Wald interval falls from $0.847$ at $N=500$ to $0.681$ at $N=8000$, the scaled bias $\sqrt{N}(\hat\beta_{\mathrm{naive}}-\beta_0)$ grows from $4.69$ to $7.08$, and the alignment test of Section~\ref{sec:aligndiag} rejects in every replication. This pattern is consistent with the drift of Theorem~\ref{thm:driftvalid}. Under the slow error, coverage is near $0.90$ and does not improve as $N$ grows. The CRC robust interval has coverage between $0.918$ and $0.950$ in both cases (Theorem~\ref{thm:orth}).

\begin{figure}[!htbp]
  \centering
  \caption{Coverage of the fixed target under a contaminated signal}
  \label{fig:collapse}
  \begin{minipage}{\linewidth}
    \centering
    \includegraphics[width=0.82\textwidth]{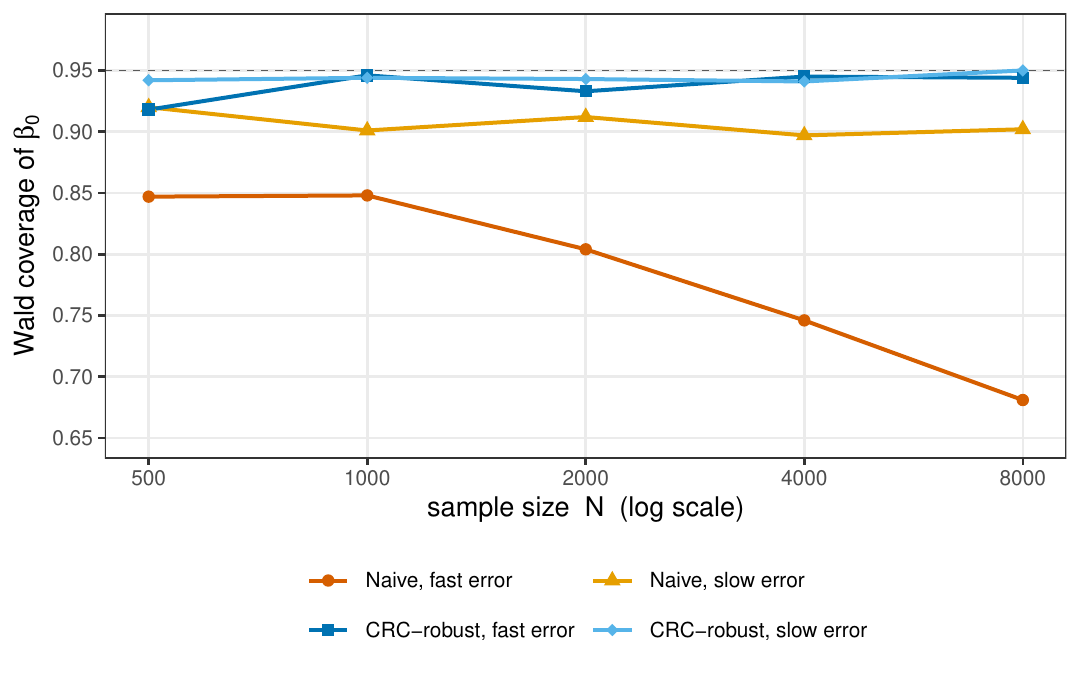}
    \par\vspace{0.4em}
    \footnotesize\setstretch{1}\raggedright
    \textit{Notes:} The figure plots the empirical coverage of nominal $95\%$ Wald intervals for the fixed target $\beta_0$ against the sample size $N$ (log scale) in the oracle design with $b=2.4$. The signal is contaminated along the heterogeneity residual, $\hat{g}_X=g_X+c_N\,\mathcal{H}$, with a fast error ($c_N=2\,N^{-1/3}$) and a slow error ($c_N=2\sqrt{\log p_Z/N}$). The naive interval realizes the two under-covering points of Corollary~\ref{cor:cover}. The CRC robust interval remains near the nominal level under both errors (Theorem~\ref{thm:orth}).
    \par
  \end{minipage}
\end{figure}

% Keep the coverage figure ahead of the simulation tables.
\FloatBarrier

\begin{table}[t]
  \centering
  \caption{CRC robust inference: no cost under homogeneity, efficiency under heterogeneity}
  \label{tab:mc_fix}
  \begin{minipage}{\linewidth}
    \centering
    \begin{tabular}{@{}ll ccccc@{}}
      \toprule
                      &                                                 & \multicolumn{5}{c}{Sample size $N$}                                         \\
      \cmidrule(lr){3-7}
      Design          & Statistic                                       & $500$                               & $1000$  & $2000$  & $4000$  & $8000$  \\
      \midrule
      \multirow{2}{*}{Homogeneous, $b=0$}
                      & Cov $\hat\beta_{\mathrm{naive}}$                & $0.947$                             & $0.954$ & $0.957$ & $0.949$ & $0.940$ \\
                      & Cov $\hat\beta_{\mathrm{CRC}}$                  & $0.944$                             & $0.957$ & $0.958$ & $0.949$ & $0.939$ \\
      \midrule
      Oracle, $b=2.4$ & $N\widehat{\mathrm{Var}}$ $(\to 20.63)$         & $21.17$                             & $20.67$ & $19.30$ & $19.46$ & $20.75$ \\
      \midrule
      \multirow{4}{*}{\shortstack[l]{Learnable                                                                                                        \\(cross-fitted)}}
                      & Lasso $N\widehat{\mathrm{Var}}$ $(\to 2.86)$    & $3.16$                              & $2.63$  & $3.00$  & $2.64$  & $2.75$  \\
                      & Boosting $N\widehat{\mathrm{Var}}$ $(\to 2.86)$ & $5.98$                              & $4.91$  & $4.16$  & $3.53$  & $3.74$  \\
                      & Lasso Cov $\beta_0$                             & $0.941$                             & $0.956$ & $0.949$ & $0.959$ & $0.952$ \\
                      & Boosting Cov $\beta_0$                          & $0.867$                             & $0.905$ & $0.925$ & $0.949$ & $0.947$ \\
      \bottomrule
    \end{tabular}
    \par\vspace{0.4em}
    \footnotesize\setstretch{1}\raggedright
    \textit{Notes:} Coverage (Cov) is the frequency of a nominal $95\%$ Wald interval covering the fixed target $\beta_0\approx0.50$. $N\widehat{\mathrm{Var}}$ denotes $N \times$ the mean squared error of $\hat\beta_{\mathrm{CRC}}$ about $\beta_0$ across replications. $b$ indexes the strength of essential heterogeneity. In the oracle design the nuisances are evaluated at their closed-form values; in the learnable design they are cross-fitted with the lasso and with gradient boosting.
    \par
  \end{minipage}
\end{table}

\begin{table}[t]
  \centering
  \caption{Identification robust inference under a weak residual signal}
  \label{tab:mc_diag}
  \begin{minipage}{\linewidth}
    \centering
    \begin{tabular}{@{}lccc@{}}
      \toprule
      $J_0$                         & Cov$_{\mathrm{AR/Fieller}}$ & Cov$_{\mathrm{Wald}}$ & Frac.\ unbounded \\
      \midrule
      $0.075$~{\scriptsize(strong)} & $0.944$                     & $0.949$               & $0.00$           \\
      $0.0013$~{\scriptsize(weak)}  & $0.947$                     & $1.000$               & $0.62$           \\
      \bottomrule
    \end{tabular}
    \par\vspace{0.4em}
    \footnotesize\setstretch{1}\raggedright
    \textit{Notes:} Scalar-instrument threshold model in which the instrument loading is scaled down so that the residual-signal strength $J_0$ falls. Cov$_{\mathrm{AR/Fieller}}$ is the coverage of the identification robust set of Proposition~\ref{thm:ar}, Cov$_{\mathrm{Wald}}$ the coverage of the plug-in Wald interval, and Frac.\ unbounded the fraction of replications in which the identification robust set is unbounded (Remark~\ref{rem:fieller}). Based on $R=1000$ replications.
    \par
  \end{minipage}
\end{table}

Table~\ref{tab:mc_fix} shows the result of the CRC robust estimator. In the first panel effects are homogeneous, so the naive and CRC robust scores coincide and the correction has no efficiency loss, and both coverage rates are close to the nominal rate. The second panel reports $N$ times the mean squared error of $\hat\beta_{\mathrm{CRC}}$ about $\beta_0$, denoted $N\widehat{\mathrm{Var}}$, in the oracle design. The semiparametric bound $J_0^{-2}\Omega_0$ equals $20.63$, and the simulated values lie between $19.30$ and $21.17$ across the sample-size grid, in line with the efficiency result of Theorem~\ref{thm:orth}(iii). The third panel repeats the exercise in the learnable design, where the bound equals $2.86$ and the nuisances are cross-fitted. With lasso nuisances the scaled variance is near the bound from moderate $N$ onward, between $2.63$ and $3.16$ across the grid, and coverage is close to nominal at every sample size, between $0.941$ and $0.959$. Because the design has a low-dimensional sparse linear index, the lasso recovers the nuisances at a near-parametric rate, so the second-order remainder that orthogonality leaves is already negligible at $N=500$ and the Wald interval is both centered and correctly scaled. With gradient boosting the scaled variance approaches the bound from above, from $5.98$ to $3.74$, and coverage rises from $0.867$ at $N=500$ to $0.947$ at $N=8000$.

Table~\ref{tab:mc_diag} shows the inference when partialling out $W$ leaves little residual signal. We scale down $J_0$ from $0.075$ to $0.0013$. The identification robust set of Section~\ref{sec:robustsets} has coverage $0.944$ and $0.947$ at the two values of $J_0$, whereas the plug-in Wald interval has coverage $0.949$ under the strong signal and $1.000$ under the weak one. In the weak-signal design the AR set is unbounded in $62\%$ of replications, which is the expected behavior of an identification robust set when $J_0$ is near zero (Proposition~\ref{thm:ar}, Remark~\ref{rem:fieller}).

\section{Application}\label{sec:application}

We revisit the quarter-of-birth (QOB) design for the returns to schooling, the natural experiment of \citet{angrist1991does} and the setting studied by \citet{angrist2022machine}, where the first stage is estimated by machine learning. The instruments are numerous and individually weak. The first stage is the high-dimensional problem of pooling many weak instruments into a single signal, and the returns to schooling are an example of essential heterogeneity, since individuals plausibly select schooling on their own unobserved return. The design is therefore a test of whether the estimand based on a machine-learned signal drifts with the first-stage learner.

\subsection{Data and design}

We use the \citet{angrist1991does} data from the 1980 U.S.\ Census\footnote{The data is downloaded from the Angrist Data Archive, https://economics.mit.edu/people/faculty/josh-angrist/angrist-data-archive}, restricted to men born 1930--1939, giving $N=329{,}509$. The outcome $Y$ is log weekly wage and the treatment $X$ is years of completed schooling. Following \citet{angrist2022machine}, the excluded instruments $Z$ are the three quarter-of-birth indicators fully interacted with year and state of birth, in total $186$ raw instruments. The covariates $W$ partialled out are a saturated set of year-of-birth and state-of-birth indicators together with race, marital status, and SMSA residence. The covariate projections $l_Y = \E[Y\mid W]$ and $m_X = \E[X\mid W]$ are estimated by saturated least squares on the discrete $W$ values, which is nonparametric and consistent because $W$ takes finitely many values. The signal $g_X = \E[X\mid Z,W]$ and the outcome projection $q_Y = \E[Y\mid Z,W]$ are the two functions estimated by the machine learner. All four nuisances are $5$-fold cross-fitted, and the procedure is repeated over $11$ independent random partitions of the sample; we report the median of the $11$ estimates with the median-adjusted variance, following the median aggregation of \citet{chernozhukov2018double}.

We estimate the signal with four first stages: an ordinary least-squares projection on the full instrument set (the ``optimal instrument'' 2SLS of \citet{chen2021mostly}), the lasso, random forest, and gradient boosting. For each we compute $\hat\beta_{\mathrm{naive}}$ and $\hat\beta_{\mathrm{CRC}}$, the alignment diagnostic pair $(T_N,\widehat\Xi_N)$, and the identification robust set $\mathcal{C}_{0.95}$.

\subsection{Results}

Table~\ref{tab:application} and Figure~\ref{fig:application} report the results. The naive $\hat\beta_{\mathrm{naive}}$ estimates a learner-dependent moving SWATE $\beta_{IV}(\hat g_X)$, while $\hat\beta_{\mathrm{CRC}}$ estimates the fixed, learner-invariant $\beta_0$, so the gap between the two estimates measures the drift of the naive estimand learner by learner. The alignment diagnostic $T_N$ tests, for each learner, whether the naive estimate targets $\beta_0$ or a drifted alternative.

The diagnostic behaves as Proposition~\ref{prop:align}(iv) predicts, which shows $T_N$ to be the studentized contrast between the naive and robust estimates. It rejects alignment for gradient boosting ($T_N=10.6$), whose naive and robust estimates differ by $0.044$, and does not reject for the lasso ($T_N=0.0$), whose two estimates coincide near $0.11$. The random forest lies just inside the band ($T_N=1.6$). The heterogeneity measure $\widehat\Xi_N$ is largest for the random forest, so heterogeneity along the signal is most detectable for the tree learners.

The greater cross-learner variation in the CRC point estimates reflects second-order error in the Jacobian, amplified by weak residual signal. To isolate this channel, condition on the training sample $\mathcal D_{-k}$ and set the outcome nuisances $(l_Y,q_Y)$ equal to their population values. Lemma~\ref{lem:exact} gives the fold-specific population root
\[
  \frac{\theta_Y}
  {J_0-\big\{\|\Delta g_{X,k}\|^2-\|\Delta m_{X,k}\|^2\big\}}
  =
  \beta_0
  \left[
    1-\frac{\|\Delta g_{X,k}\|^2-\|\Delta m_{X,k}\|^2}{J_0}
    \right]^{-1},
\]
provided the denominator is nonzero. Orthogonality eliminates first-order drift, but when $J_0$ is small, modest learner-specific differences in this second-order remainder can induce substantial cross-learner variation.

The same channel need not cause comparable variation in the naive estimand. By Theorem~\ref{thm:driftvalid}(ii), the numerator of its drift depends on $\Delta g_{X,k}$ only through the residual shape component $(I-\Pi_{\mathcal G})\Delta g_{X,k}$. Rescalings and $W$-measurable shifts belong to $\mathcal G$ and leave the target unchanged, provided the perturbed signal remains relevant. Learners that differ primarily along these invariant directions may therefore yield similar naive estimates.

The OLS-projection case is especially instructive. Its naive estimate, $0.083\,(0.024)$, coincides with the classic \citet{angrist1991does} two-stage least squares estimate. But partialling the saturated $W$ out of the $186$ weak and mechanically collinear QOB interactions leaves little residual signal with $\widehat J_0 = 0.012$, and the identification robust set is the whole real line. This is the weak residual signal regime of Proposition~\ref{thm:ar} and Remark~\ref{rem:fieller}, and it is the semiparametric counterpart of the \citet{bound1995problems} critique of the QOB design: the unbounded $\mathcal{C}_{0.95}$ reports that $\beta_0$ is not identified by unregularized pooling of these instruments, which the naive Wald interval does not reveal. The alignment statistic indicates the same degeneracy, with its large value $T_N=-4.7$ reflecting the distance between the naive and robust estimates ($0.083$ against $-0.008$).

\begin{table}[!htbp]
  \centering
  \caption{Returns to schooling: naive and CRC robust estimates}
  \label{tab:application}
  \begin{minipage}{\linewidth}
    \centering
    \setlength{\tabcolsep}{5.5pt}
    \begin{tabular}{lcccccc}
      \toprule
      First stage       & $\hat\beta_{\mathrm{naive}}$ & $\hat\beta_{\mathrm{CRC}}$   & $T_N$             & $\widehat\Xi_N$ & $\widehat J_0$ & $\mathcal{C}_{0.95}$ \\
      \midrule
      OLS projection    & $0.083$ $(0.024)$            & $-0.008$ $(0.044)$           & $-4.7$            & $0.003$         & $0.012$        & $(-\infty,\infty)$   \\
      Lasso             & $0.113$ $(0.123)$            & $\phantom{-}0.108$ $(0.012)$ & $\phantom{-}0.0$  & $0.004$         & $0.017$        & $[0.084,\,0.134]$    \\
      Random forest     & $0.114$ $(0.014)$            & $\phantom{-}0.132$ $(0.004)$ & $\phantom{-}1.6$  & $0.046$         & $0.143$        & $[0.118,\,0.147]$    \\
      Gradient boosting & $0.114$ $(0.006)$            & $\phantom{-}0.158$ $(0.007)$ & $\phantom{-}10.6$ & $0.031$         & $0.097$        & $[0.137,\,0.183]$    \\
      \bottomrule
    \end{tabular}
    \par\vspace{0.4em}
    \footnotesize\setstretch{1}\raggedright
    \textit{Notes:} The table compares naive and CRC robust estimates across first-stage learners in the quarter-of-birth design. Point estimates are medians over $11$ cross-fitting splits, with median-adjusted standard errors in parentheses. $T_N$ and $\widehat\Xi_N$ are the alignment diagnostic of Proposition~\ref{prop:align}, $\widehat J_0$ is the cross-fitted sample estimate of the residual-signal strength $J_0=\E[(g_X-m_X)^2]$, and $\mathcal{C}_{0.95}$ is the AR identification robust set of Proposition~\ref{thm:ar}.
    \par
  \end{minipage}
\end{table}

\begin{figure}[!htbp]
  \centering
  \caption{Naive and CRC robust estimates and the alignment diagnostic $T_N$}
  \label{fig:application}
  \begin{minipage}{\linewidth}
    \centering
    \includegraphics[width=\textwidth]{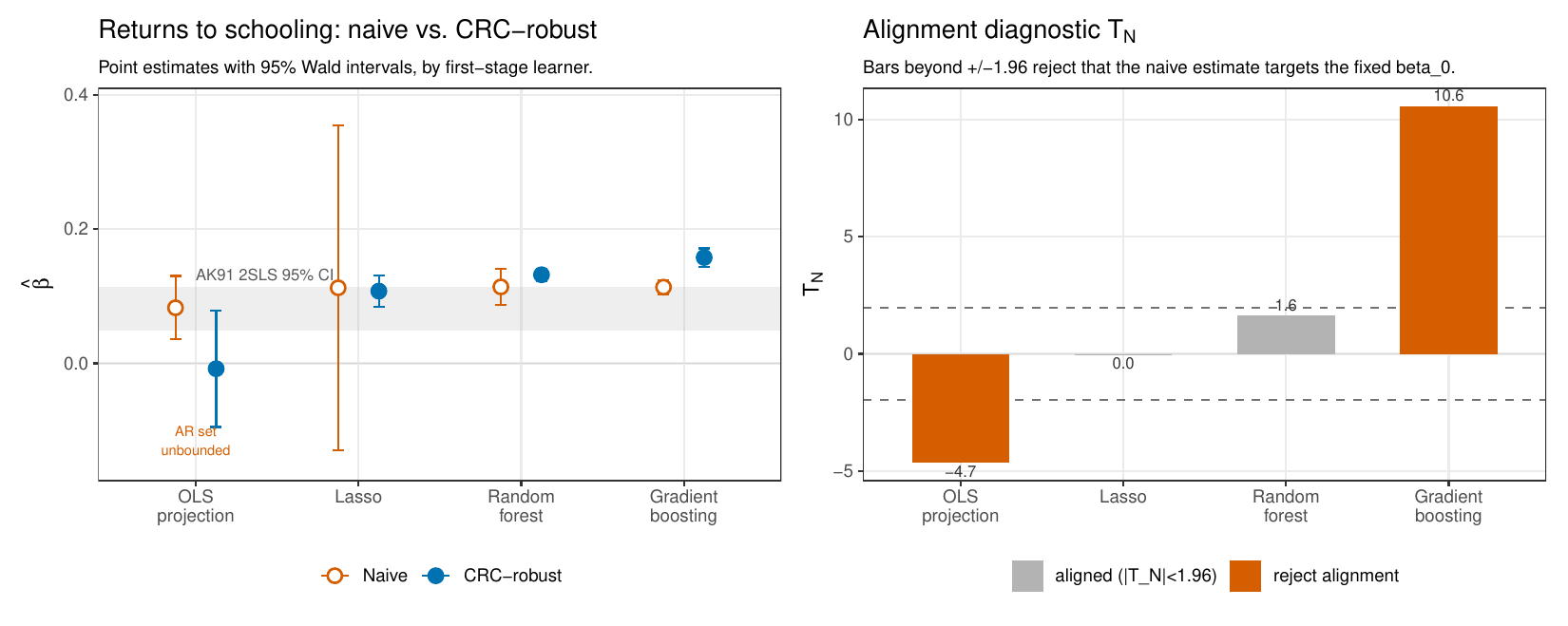}
    \par\vspace{0.4em}
    \footnotesize\setstretch{1}\raggedright
    \textit{Notes:} Left: naive $\hat\beta_{\mathrm{naive}}$ (hollow circles) and CRC robust $\hat\beta_{\mathrm{CRC}}$ (filled) with $95\%$ Wald intervals, quarter-of-birth returns to schooling for four first-stage learners. The shaded band is the classic \citet{angrist1991does} 2SLS $95\%$ confidence interval. Right: the alignment diagnostic $T_N$. Bars outside the $\pm 1.96$ band reject the hypothesis that the naive estimate targets.
    \par
  \end{minipage}
\end{figure}

Regularizing the first stage restores a bounded target. The lasso, random forest, and gradient boosting all yield bounded sets, though only the two tree learners deliver a strong residual signal. The lasso is an intermediate case. Its residual signal $\widehat J_0 = 0.017$ is barely above the OLS projection's, and its naive Wald interval is too wide to be informative, yet the CRC robust score returns a tight estimate, $0.108\,(0.012)$, inside the bounded set $[0.084,\,0.134]$. The orthogonal score and its identification robust set thus recover an interpretable, convex signal-weighted average of returns to schooling from a first stage whose naive interval would be uninformative.

The three reports give the empirical recommendation. We report $\hat\beta_{\mathrm{CRC}}$ as the estimate, a signal-weighted average of the heterogeneous returns to schooling that is convex under covariance monotonicity by Theorem~\ref{thm:representation}. We read $(T_N,\widehat\Xi_N)$ as interpretation rather than as a rule for selecting a preferred learner. A rejecting $T_N$ is an informative warning, but a non-rejecting $T_N$ does not certify naive validity, which cannot be verified without estimating $\mathcal{H}$. Where the residual signal is weak, we read the unbounded $\mathcal{C}_{0.95}$ as a statement about the interpretability and validity of the estimand.

% Keep all application floats before the conclusion.
\FloatBarrier
\section{Conclusion}\label{sec:conclusion}

In this paper, we give the IV estimand based on a machine-learned signal a structural meaning and a valid inference theory under correlated random coefficients. The partialled-out estimand built from any signal of the instruments is a signal-weighted average of the heterogeneous effects, convex under a covariance-monotonicity condition, which we microfound from vector monotonicity and positive association. When the signal is learned, the standard debiased moment is not Neyman orthogonal for the fixed target, and its first-order bias is a drift of the estimand. Therefore, naive DML validly answers a moving, learner-dependent question while inference on the fixed, learner-invariant target requires the heterogeneity robust orthogonal score we construct, which is the efficient influence function for the target.

Our recommended estimator is $\hat{\beta}_{\mathrm{CRC}}$, which stays valid where the naive one is inconsistent even when the point estimate is centered. Alongside it we report the pair $(T_N,\widehat{\Xi}_N)$, read as interpretation and never as a selector, since pre-testing distorts coverage. When the residual signal is weak, we add the identification robust set $\mathcal{C}_{1-\alpha}$, whose unbounded realizations indicate that the estimand's interpretability is questionable. Together, the estimator and its diagnostics allow a flexible machine-learned first stage to be used for instrumental variables estimation while the target remains a fixed, interpretable parameter on which valid inference is available.

%%%%%%%%%%%%%%%%%%%%%%%%%%%%%%%%%%%%%%%%%%%%%%%%%
\clearpage
\begin{singlespace}
  \bibliographystyle{ecta}
  \bibliography{references.bib}

@article{chernozhukov2018double,
  title     = {{Double/Debiased Machine Learning for Treatment and Structural Parameters}},
  author    = {Chernozhukov, Victor and Chetverikov, Denis and Demirer, Mert and Duflo, Esther and Hansen, Christian and Newey, Whitney and Robins, James},
  journal   = {The Econometrics Journal},
  volume    = {21},
  number    = {1},
  pages     = {C1--C68},
  year      = {2018},
  publisher = {Oxford University Press}
}

@article{wooldridge1997two,
  title     = {On two stage least squares estimation of the average treatment effect in a random coefficient model},
  author    = {Wooldridge, Jeffrey M.},
  journal   = {Economics Letters},
  volume    = {56},
  number    = {2},
  pages     = {129--133},
  year      = {1997},
  publisher = {Elsevier}
}

@article{heckman1998instrumental,
  title     = {Instrumental variables methods for the correlated random coefficient model: Estimating the average rate of return to schooling when the return is correlated with schooling},
  author    = {Heckman, James J. and Vytlacil, Edward J.},
  journal   = {Journal of Human Resources},
  volume    = {33},
  number    = {4},
  pages     = {974--987},
  year      = {1998},
  publisher = {JSTOR}
}

@article{angrist2000interpretation,
  title     = {The interpretation of instrumental variables estimators in simultaneous equations models with an application to the demand for fish},
  author    = {Angrist, Joshua D. and Graddy, Kathryn and Imbens, Guido W.},
  journal   = {Review of Economic Studies},
  volume    = {67},
  number    = {3},
  pages     = {499--527},
  year      = {2000},
  publisher = {Wiley-Blackwell}
}

@article{robinson1988root,
  title     = {Root-N-consistent semiparametric regression},
  author    = {Robinson, Peter M.},
  journal   = {Econometrica},
  volume    = {56},
  number    = {4},
  pages     = {931--954},
  year      = {1988},
  publisher = {JSTOR}
}

@article{imbens1994identification,
  title     = {Identification and estimation of local average treatment effects},
  author    = {Imbens, Guido W. and Angrist, Joshua D.},
  journal   = {Econometrica},
  volume    = {62},
  number    = {2},
  pages     = {467--475},
  year      = {1994},
  publisher = {JSTOR}
}

@article{heckman2005structural,
  title     = {Structural equations, treatment effects, and econometric policy evaluation},
  author    = {Heckman, James J. and Vytlacil, Edward},
  journal   = {Econometrica},
  volume    = {73},
  number    = {3},
  pages     = {669--738},
  year      = {2005},
  publisher = {Wiley Online Library}
}

@article{chernozhukov2022locally,
  title     = {{Locally Robust Semiparametric Estimation}},
  author    = {Chernozhukov, Victor and Escanciano, Juan Carlos and Ichimura, Hidehiko and Newey, Whitney K. and Robins, James M.},
  journal   = {Econometrica},
  volume    = {90},
  number    = {4},
  pages     = {1501--1535},
  year      = {2022},
  publisher = {Wiley Online Library}
}

@article{angrist1995two,
  title     = {Two-stage least squares estimation of average causal effects in models with variable treatment intensity},
  author    = {Angrist, Joshua D. and Imbens, Guido W.},
  journal   = {Journal of the American Statistical Association},
  volume    = {90},
  number    = {430},
  pages     = {431--442},
  year      = {1995},
  publisher = {Taylor \& Francis}
}

@article{mogstad2021causal,
  title     = {The causal interpretation of two-stage least squares with multiple instrumental variables},
  author    = {Mogstad, Magne and Torgovitsky, Alexander and Walters, Christopher R.},
  journal   = {American Economic Review},
  volume    = {111},
  number    = {11},
  pages     = {3663--3698},
  year      = {2021},
  publisher = {American Economic Association}
}

@article{esary1967association,
  title     = {Association of random variables, with applications},
  author    = {Esary, James D. and Proschan, Frank and Walkup, David W.},
  journal   = {The Annals of Mathematical Statistics},
  volume    = {38},
  number    = {5},
  pages     = {1466--1474},
  year      = {1967},
  publisher = {Institute of Mathematical Statistics}
}

@article{pitt1982positively,
  title     = {Positively correlated normal variables are associated},
  author    = {Pitt, Loren D.},
  journal   = {The Annals of Probability},
  volume    = {10},
  number    = {2},
  pages     = {496--499},
  year      = {1982},
  publisher = {Institute of Mathematical Statistics}
}

@article{milgrom1982theory,
  title     = {A theory of auctions and competitive bidding},
  author    = {Milgrom, Paul R. and Weber, Robert J.},
  journal   = {Econometrica},
  volume    = {50},
  number    = {5},
  pages     = {1089--1122},
  year      = {1982},
  publisher = {The Econometric Society}
}

@article{hausman1978specification,
  author  = {Hausman, Jerry A.},
  title   = {Specification Tests in Econometrics},
  journal = {Econometrica},
  year    = {1978}, volume = {46}, number = {6}, pages = {1251--1271}
}

@article{newey1994asymptotic,
  author  = {Newey, Whitney K.},
  title   = {The Asymptotic Variance of Semiparametric Estimators},
  journal = {Econometrica},
  year    = {1994}, volume = {62}, number = {6}, pages = {1349--1382}
}

@article{anderson1949estimation,
  author  = {Anderson, Theodore W. and Rubin, Herman},
  title   = {Estimation of the Parameters of a Single Equation in a
             Complete System of Stochastic Equations},
  journal = {Annals of Mathematical Statistics},
  year    = {1949}, volume = {20}, number = {1}, pages = {46--63}
}

@article{fieller1954some,
  author  = {Fieller, Edgar C.},
  title   = {Some Problems in Interval Estimation},
  journal = {Journal of the Royal Statistical Society. Series B},
  year    = {1954}, volume = {16}, number = {2}, pages = {175--185}
}

@article{dufour1997some,
  author  = {Dufour, Jean-Marie},
  title   = {Some Impossibility Theorems in Econometrics with
             Applications to Structural and Dynamic Models},
  journal = {Econometrica},
  year    = {1997}, volume = {65}, number = {6}, pages = {1365--1387}
}

@unpublished{kolesar2013estimation,
  author = {Koles{\'a}r, Michal},
  title  = {Estimation in an Instrumental Variables Model with
            Treatment Effect Heterogeneity},
  note   = {Working paper, Princeton University},
  year   = {2013}
}

@unpublished{sithole2026locally,
  author = {Sithole, Lonjezo},
  title  = {A Locally Robust Semiparametric Approach to Examiner {IV} Designs},
  note   = {Working paper, University of Michigan},
  year   = {2026},
  url    = {https://lonjezosithole.github.io/uploads/a-locally-robust-semiparametric-approach-to-examiner-iv-designs.pdf}
}

@unpublished{evdokimov2018inference,
  author = {Evdokimov, Kirill S. and Koles{\'a}r, Michal},
  title  = {Inference in Instrumental Variables Analysis with
            Heterogeneous Treatment Effects},
  note   = {Working paper, Princeton University},
  year   = {2018}
}

@article{lee2018consistent,
  author  = {Lee, Seojeong},
  title   = {A Consistent Variance Estimator for {2SLS} When Instruments
             Identify Different {LATEs}},
  journal = {Journal of Business \& Economic Statistics},
  year    = {2018}, volume = {36}, number = {3}, pages = {400--410}
}

@article{heckman2006understanding,
  author  = {Heckman, James J. and Urzua, Sergio and Vytlacil, Edward},
  title   = {Understanding Instrumental Variables in Models with Essential Heterogeneity},
  journal = {The Review of Economics and Statistics},
  year    = {2006}, volume = {88}, number = {3}, pages = {389--432}
}

@article{mogstad2024policy,
  author  = {Mogstad, Magne and Torgovitsky, Alexander and Walters, Christopher R.},
  title   = {Policy Evaluation with Multiple Instrumental Variables},
  journal = {Journal of Econometrics},
  year    = {2024}, volume = {243}, number = {1}, pages = {105718}
}

@article{goff2024vector,
  author  = {Goff, Leonard},
  title   = {A Vector Monotonicity Assumption for Multiple Instruments},
  journal = {Journal of Econometrics},
  year    = {2024}, volume = {241}, number = {1}, pages = {105735}
}

@techreport{blandhol2022tsls,
  author      = {Blandhol, Christine and Bonney, John and Mogstad, Magne and Torgovitsky, Alexander},
  title       = {When is {TSLS} Actually {LATE}?},
  institution = {National Bureau of Economic Research},
  type        = {NBER Working Paper}, number = {29709}, year = {2022},
  note        = {Forthcoming, Review of Economic Studies}
}

@article{sloczynski2024interpret,
  author  = {S{\l}oczy{\'n}ski, Tymon},
  title   = {When Should We (Not) Interpret Linear {IV} Estimands as {LATE}?},
  journal = {Review of Economic Studies}, year = {2024},
  note    = {Forthcoming}
}

@article{vytlacil2002independence,
  author  = {Vytlacil, Edward},
  title   = {Independence, Monotonicity, and Latent Index Models: An Equivalence Result},
  journal = {Econometrica},
  year    = {2002}, volume = {70}, number = {1}, pages = {331--341}
}

@article{huntingtonklein2020instruments,
  author  = {Huntington-Klein, Nick},
  title   = {Instruments with Heterogeneous Effects: Bias, Monotonicity, and Localness},
  journal = {Journal of Causal Inference}, volume = {8}, number = {1}, pages = {182--208}, year = {2020}
}

@article{garen1984,
  author  = {Garen, John},
  title   = {The Returns to Schooling: A Selectivity Bias Approach with a Continuous Choice Variable},
  journal = {Econometrica}, volume = {52}, number = {5}, pages = {1199--1218}, year = {1984}
}

@article{card2001estimating,
  author  = {Card, David},
  title   = {Estimating the Return to Schooling: Progress on Some Persistent Econometric Problems},
  journal = {Econometrica}, volume = {69}, number = {5}, pages = {1127--1160}, year = {2001}
}

@article{heckman2010testing,
  author  = {Heckman, James J. and Schmierer, Daniel and Urzua, Sergio},
  title   = {Testing the Correlated Random Coefficient Model},
  journal = {Journal of Econometrics}, volume = {158}, number = {2}, pages = {177--203}, year = {2010}
}

@article{newey1990semiparametric,
  author  = {Newey, Whitney K.},
  title   = {Semiparametric Efficiency Bounds},
  journal = {Journal of Applied Econometrics}, volume = {5}, number = {2}, pages = {99--135}, year = {1990}
}

@article{chamberlain1987asymptotic,
  author  = {Chamberlain, Gary},
  title   = {Asymptotic Efficiency in Estimation with Conditional Moment Restrictions},
  journal = {Journal of Econometrics}, volume = {34}, number = {3}, pages = {305--334}, year = {1987}
}

@article{belloni2012sparse,
  author  = {Belloni, Alexandre and Chen, Daniel and Chernozhukov, Victor and Hansen, Christian},
  title   = {Sparse Models and Methods for Optimal Instruments With an Application to Eminent Domain},
  journal = {Econometrica}, volume = {80}, number = {6}, pages = {2369--2429}, year = {2012}
}

@article{emmenegger2021regularizing,
  author  = {Emmenegger, Corinne and B\"{u}hlmann, Peter},
  title   = {Regularizing Double Machine Learning in Partially Linear Endogenous Models},
  journal = {Electronic Journal of Statistics}, volume = {15}, number = {2}, pages = {6461--6543}, year = {2021}
}

@article{scheidegger2025machine,
  author  = {Scheidegger, Cyrill and Guo, Zijian and B\"{u}hlmann, Peter},
  title   = {Inference for Heterogeneous Treatment Effects with Efficient Instruments and Machine Learning},
  journal = {Electronic Journal of Statistics}, volume = {20}, number = {1}, pages = {718--770}, year = {2026}
}

@article{angrist2022machine,
  author  = {Angrist, Joshua D. and Frandsen, Brigham},
  title   = {Machine Labor},
  journal = {Journal of Labor Economics}, volume = {40}, number = {S1}, pages = {S97--S140}, year = {2022}
}

@unpublished{chen2021mostly,
  author = {Chen, Jiafeng and Chen, Daniel L. and Lewis, Greg},
  title  = {Mostly Harmless Machine Learning: Learning Optimal Instruments in Linear {IV} Models},
  note   = {arXiv:2011.06158; NeurIPS 2020 Workshop on Machine Learning for Economic Policy},
  year   = {2021}
}

@article{staiger1997instrumental,
  author  = {Staiger, Douglas and Stock, James H.},
  title   = {Instrumental Variables Regression with Weak Instruments},
  journal = {Econometrica}, year = {1997}, volume = {65}, number = {3}, pages = {557--586}
}

@article{andrews2019weak,
  author  = {Andrews, Isaiah and Stock, James H. and Sun, Liyang},
  title   = {Weak Instruments in Instrumental Variables Regression: Theory and Practice},
  journal = {Annual Review of Economics}, year = {2019}, volume = {11}, pages = {727--753}
}

@article{mikusheva2022inference,
  author  = {Mikusheva, Anna and Sun, Liyang},
  title   = {Inference with Many Weak Instruments},
  journal = {The Review of Economic Studies}, year = {2022}, volume = {89}, number = {5}, pages = {2663--2686}
}

@book{bickel1993efficient,
  author    = {Bickel, Peter J. and Klaassen, Chris A. J. and Ritov, Ya'acov and Wellner, Jon A.},
  title     = {Efficient and Adaptive Estimation for Semiparametric Models},
  publisher = {Johns Hopkins University Press},
  address   = {Baltimore},
  year      = {1993}
}

@book{vandervaart1998asymptotic,
  author    = {van der Vaart, Aad W.},
  title     = {Asymptotic Statistics},
  publisher = {Cambridge University Press},
  address   = {Cambridge},
  year      = {1998}
}

@article{angrist1991does,
  author    = {Angrist, Joshua D. and Krueger, Alan B.},
  title     = {Does Compulsory School Attendance Affect Schooling and Earnings?},
  journal   = {The Quarterly Journal of Economics},
  volume    = {106},
  number    = {4},
  pages     = {979--1014},
  year      = {1991},
  publisher = {Oxford University Press}
}

@article{bound1995problems,
  author    = {Bound, John and Jaeger, David A. and Baker, Regina M.},
  title     = {Problems with Instrumental Variables Estimation When the Correlation Between the Instruments and the Endogenous Explanatory Variable Is Weak},
  journal   = {Journal of the American Statistical Association},
  volume    = {90},
  number    = {430},
  pages     = {443--450},
  year      = {1995},
  publisher = {Taylor \& Francis}
}
\end{singlespace}
%%%%%%%%%%%%%%%%%%%%%%%%%%%%%%%%%%%%%%%%%%%%%%%%%

%%%%%%%%%%%%%%%%%%%%%%%%%%%%%%%%%%%%%%%%%%%%%%%%%
%%%%% These commands start the appendix and change the Table & Figure numbering
\newpage
\appendix
\setcounter{table}{0}
\renewcommand{\tablename}{Appendix Table}
\renewcommand{\figurename}{Appendix Figure}
\renewcommand{\thetable}{A\arabic{table}}
\setcounter{figure}{0}
\renewcommand{\thefigure}{A\arabic{figure}}
%%%%%%%%%%%%%%%%%%%%%%%%%%%%%%%%%%%%%%%%%%%%%%%%%

\section{Proofs and supplementary results}\label{sec:appenda}

\numberwithin{equation}{section}
\renewcommand{\theequation}{\thesection\arabic{equation}}
\setcounter{equation}{0}

%%%%% Appendix "A" prefix (no dot) for every numbered theorem-like environment,
%%%%% matching the Table/Figure A-numbering above. Counters restart at the appendix;
%%%%% \numberwithin re-resets them at each further appendix section (B, C, ...).
%%%%% Placed inside the appendix, so main-text numbering (Theorem 1, Lemma 1, ...) is untouched.
\numberwithin{theorem}{section}     \renewcommand{\thetheorem}{\thesection\arabic{theorem}}         \setcounter{theorem}{0}
\numberwithin{lemma}{section}       \renewcommand{\thelemma}{\thesection\arabic{lemma}}             \setcounter{lemma}{0}
\numberwithin{proposition}{section} \renewcommand{\theproposition}{\thesection\arabic{proposition}} \setcounter{proposition}{0}
\numberwithin{corollary}{section}   \renewcommand{\thecorollary}{\thesection\arabic{corollary}}     \setcounter{corollary}{0}
\numberwithin{definition}{section}  \renewcommand{\thedefinition}{\thesection\arabic{definition}}   \setcounter{definition}{0}
\numberwithin{assumption}{section}  \renewcommand{\theassumption}{\thesection\arabic{assumption}}   \setcounter{assumption}{0}
\numberwithin{example}{section}     \renewcommand{\theexample}{\thesection\arabic{example}}         \setcounter{example}{0}
\numberwithin{remark}{section}      \renewcommand{\theremark}{\thesection\arabic{remark}}           \setcounter{remark}{0}
%%%%%%%%%%%%%%%%%%%%%%%%%%%%%%%%%%%%%%%%%%%%%%%%%

\subsection*{Proofs for Section~\ref{sec:microccm}: Covariance Monotonicity and its Microfoundation}

\begin{proof}[Proof of Proposition~\ref{prop:pmccm}]
  Step 1 (within-stratum covariance). Conditioning on $W$ and using $m_X(W) = \E[g_X \mid W]$,
  \[
    \E[\tilde{X}\,\tilde{g}_X \mid \Delta = \delta] = \E_{W \mid \Delta = \delta}\Big[\E\big[(X - m_X)(g_X - m_X) \mid W, \Delta = \delta\big]\Big].
  \]
  Since $g_X$ is a function of $(Z,W)$ and $Z \indep \Delta \mid W$ (a consequence of Assumption~\ref{ass:exogext}), we have $\E[g_X \mid W, \Delta = \delta] = \E[g_X \mid W] = m_X(W)$. The terms involving $m_X$ then collapse and the inner expectation reduces to $\Cov(X, g_X \mid W, \Delta = \delta)$, so
  \[
    \E[\tilde{X}\,\tilde{g}_X \mid \Delta = \delta] = \E_{W \mid \Delta = \delta}\big[\Cov(X, g_X \mid W, \Delta = \delta)\big].
  \]

  Step 2 (conditional response functions). Define $f_\delta(z,w) := \E[X \mid Z = z, W = w, \Delta = \delta]$. By Assumption~\ref{ass:exogext}, $Z \indep (\Delta, V) \mid W$, and weak union gives $Z \indep V \mid (W, \Delta)$. Hence the conditional law $F_{V \mid W, \Delta}(\cdot \mid w, \delta)$ does not depend on $z$, and
  \[
    f_\delta(z,w) = \int_{\mathcal{V}} h(z, w, v)\, dF_{V \mid W, \Delta}(v \mid w, \delta).
  \]
  Because $h(\cdot, w, v)$ is increasing in each coordinate under VM and the integrating measure is invariant to $z$, $f_\delta(\cdot,w)$ is increasing in each coordinate. The same argument with $F_{V \mid W}$, using $Z \indep V \mid W$, gives this property for $g_X(z, w) = \int_{\mathcal{V}} h(z, w, v)\, dF_{V \mid W}(v \mid w)$.

  Step 3 (association). By the law of total covariance and the $(Z,W)$-measurability of $g_X(Z,W)$,
  \[
    \Cov(X, g_X \mid W, \Delta = \delta) = \Cov\big(f_\delta(Z,W),\, g_X(Z,W) \mid W, \Delta = \delta\big).
  \]
  The integrand $f_\delta$ retains its dependence on $\delta$ through the law of $V$; only the distribution of $Z$ entering the covariance matters, and since $Z \indep \Delta \mid W$ that distribution is $F_{Z \mid W}$. Both $f_\delta(\cdot, w)$ and $g_X(\cdot, w)$ are coordinate-wise non-decreasing (Step 2), so PA yields
  \[
    \Cov\big(f_\delta(Z,W),\, g_X(Z,W) \mid W = w\big) \ge 0 \quad \text{a.s.}
  \]

  Step 4 (integration over $W$). The outer expectation in Step~1 preserves the inequality, so $\E[\tilde{X}\,\tilde{g}_X \mid \Delta = \delta] \ge 0$ a.s. Theorem~\ref{thm:representation} then gives non-negative weights and the convex-combination conclusion.
\end{proof}

\subsection*{Supplement to Section~\ref{sec:microccm}: A non-monotone example}

At the level of propensity rationalizability, Conditional Covariance Monotonicity is strictly weaker than the monotone selection of \citet{imbens1994identification}. Three instrument support points are the minimum needed for this separation.

\begin{example}[CCM does not require monotone selection]\label{ex:ccm-not-monotone}
  Work within a single stratum of $W$ (so $\tilde{A} = A - \E[A]$). Let the instrument have three points $Z\in\{0,1,2\}$, uniform ($\mathbb{P}(Z=z)=\tfrac13$), and let the heterogeneous effect take two values $\Delta\in\{a,b\}$ with $\mathbb{P}(\Delta=a)=\mathbb{P}(\Delta=b)=\tfrac12$, drawn independently of $Z$ (conditional exogeneity, $W$ trivial). Treatment $X\in\{0,1\}$ is Bernoulli given $(Z,\Delta)$ with propensity $p(z,\delta):=\mathbb{P}(X=1\mid Z=z,\Delta=\delta)$:
  \[
    \begin{array}{c|ccc}
                 & z=0 & z=1 & z=2                              \\\hline
      p(\cdot,a) & 0.1 & 0.5 & 0.9 \quad(\text{increasing})     \\
      p(\cdot,b) & 0.2 & 0.8 & 0.3 \quad(\text{single-peaked}).
    \end{array}
  \]
  Since selection depends on $\Delta$, essential heterogeneity is present. The signal is $g_X(z)=\E[X\mid Z=z]=\tfrac12\{p(z,a)+p(z,b)\}=(0.15,\,0.65,\,0.60)$, with $\mu:=\E[X]=\tfrac{7}{15}$ and centered values $g_X(z)-\mu=(-\tfrac{19}{60},\,\tfrac{11}{60},\,\tfrac{8}{60})$.

  (i) CCM holds. Because $Z\indep\Delta$, the within-stratum object reduces to a covariance across $Z$,
  \[
    \E[\tilde{X}\,\tilde{g}_X\mid\Delta=\delta]
    =\sum_{z}\tfrac13\big(p(z,\delta)-\mu\big)\big(g_X(z)-\mu\big)
    =\Cov_Z\!\big(p(\cdot,\delta),\,g_X\big),
  \]
  and direct computation gives $\E[\tilde{X}\tilde{g}_X\mid\Delta=a]=\tfrac{3}{50}=0.060>0$ and $\E[\tilde{X}\tilde{g}_X\mid\Delta=b]=\tfrac{37}{900}\approx0.041>0$. Thus CCM holds, and relevance follows from $\E[\tilde{X}\tilde{g}_X]=\tfrac{91}{1800}=\Var(g_X)>0$. The density weights are $\omega(a)=\tfrac{108}{91}$ and $\omega(b)=\tfrac{74}{91}$; after multiplying by the effect probabilities, the convex masses are $\tfrac{54}{91}$ and $\tfrac{37}{91}$. Hence $\beta_{IV}=\tfrac{54}{91}\,a+\tfrac{37}{91}\,b$ (Theorem~\ref{thm:representation}).

  (ii) No monotone single-index rule rationalizes the data. Any threshold rule $X=\mathbf{1}\{U\le\nu(Z)\}$ with $(U,\Delta)\indep Z$ (or the continuous analogue $X=h(\nu(Z),U)$, $h$ non-decreasing) yields $p(z,\delta)=F_{U\mid\Delta=\delta}(\nu(z))$, weakly increasing in $\nu(z)$ for every $\delta$; it thus requires a single ordering of $\{0,1,2\}$ making $z\mapsto p(z,\delta)$ weakly monotone for $\delta=a$ and $\delta=b$ simultaneously. None exists: $p(\cdot,a)$ is strict, so only the orders $0\!<\!1\!<\!2$ and $2\!<\!1\!<\!0$ render it monotone, and under both the single-peaked $p(\cdot,b)$ is non-monotone (a tie in $\nu$ is excluded, as it would force $p(z,a)=p(z',a)$). The two effect strata require incompatible orderings, so no monotone-selection model in the sense of \citet{imbens1994identification} can rationalize the propensity table. Yet the covariances stay non-negative: $g_X$ reorders $Z$ by predicted treatment ($0\!<\!2\!<\!1$), type $b$ is monotone in that ordering while type $a$ is monotone in the natural one, and each stratum's propensity still covaries positively with the aggregate signal.
\end{example}

\begin{remark}\label{rem:threepoints}
  The separation in Example~\ref{ex:ccm-not-monotone} cannot occur with a binary instrument at the level of propensity rationalizability. If $Z\in\{0,1\}$ then $\E[\tilde{X}\tilde{g}_X\mid\Delta=\delta]=\pi_0\pi_1\,(p(1,\delta)-p(0,\delta))\,D$ and $\E[\tilde{X}\tilde{g}_X]=\pi_0\pi_1 D^2$ with $D:=g_X(1)-g_X(0)$. CCM and relevance ($D\neq0$) then force $(p(1,\delta)-p(0,\delta))D\ge0$ for all $\delta$. After relabeling $Z$ if necessary, $p(1,\delta)\ge p(0,\delta)$ for every $\delta$, so the conditional propensities admit a common-order monotone threshold rationalization. This is an existence statement: without an additional restriction on the potential-treatment coupling, it does not imply that the true coupling contains no defiers. Hence $|\Supp(Z)|\ge 3$ is required for CCM to be incompatible with every such common-order monotone rationalization, and the example is minimal in that sense.
\end{remark}

\subsection*{Proofs for Section~\ref{sec:representation}: The representation theorem}

\begin{proof}[Proof of Theorem~\ref{thm:representation}, part~(i)]
  Step 1 (numerator). Substituting the CRC equation $Y = Y(0) + X\Delta$ into $\E[\tilde{Y}\tilde{g}]$ gives
  \[
    \tilde{Y} = Y - \E[Y\mid W] = \big(Y(0) - \E[Y(0)\mid W]\big) + \big(X\Delta - \E[X\Delta\mid W]\big).
  \]
  Thus, the numerator becomes
  \[
    \E[\tilde{Y}\tilde{g}] = \E\big[\big(Y(0) - \E[Y(0)\mid W]\big)\tilde{g}\big] + \E\big[\big(X\Delta - \E[X\Delta\mid W]\big)\tilde{g}\big].
  \]
  Assumption~\ref{ass:exog} gives $Z \indep Y(0) \mid W$. Because $\tilde{g}$ is a function of $(Z,W)$ and $\E[\tilde{g}\mid W]=0$, iterated expectations set the first term to zero. They also give $\E\big[\E[X\Delta\mid W]\,\tilde{g}\big] = \E\big[\E[X\Delta\mid W]\,\E[\tilde{g}\mid W]\big] = 0$. Therefore
  \[
    \E[\tilde{Y}\tilde{g}] = \E[X\Delta\tilde{g}].
  \]

  Step 2 (residualizing $X$). Substitute $X = \tilde{X} + \E[X\mid W]$:
  \[
    \E[X\Delta\tilde{g}] = \E\big[(\tilde{X} + \E[X\mid W])\Delta\tilde{g}\big] = \E[\tilde{X}\Delta\tilde{g}] + \E\big[\E[X\mid W]\,\Delta\tilde{g}\big].
  \]
  Assumption~\ref{ass:exog} also gives $Z \indep \Delta \mid W$, so $\E[\Delta\tilde{g}\mid W] = \E[\Delta\mid W]\,\E[\tilde{g}\mid W] = 0$. The second term is therefore zero, and
  \[
    \E[\tilde{Y}\tilde{g}] = \E[\tilde{X}\Delta\tilde{g}].
  \]

  Step 3 (conditioning on $\Delta$). Iterated expectations give
  \[
    \E[\tilde{X}\Delta\tilde{g}] = \E_\Delta\big[\E[\tilde{X}\Delta\tilde{g}\mid \Delta]\big] = \E_\Delta\big[\Delta\cdot\E[\tilde{X}\tilde{g}\mid \Delta]\big].
  \]

  Step 4 (forming the ratio). Dividing by the denominator gives
  \[
    \beta_{IV} = \frac{\E_\Delta\big[\Delta\cdot\E[\tilde{X}\tilde{g}\mid \Delta]\big]}{\E[\tilde{X}\tilde{g}]} = \E_\Delta\!\left[\Delta\cdot\left(\frac{\E[\tilde{X}\tilde{g}\mid \Delta]}{\E[\tilde{X}\tilde{g}]}\right)\right] \equiv \E_\Delta\big[\Delta\cdot\omega(\Delta)\big]. \qedhere
  \]
\end{proof}

\begin{proof}[Proof of Theorem~\ref{thm:representation}, parts~(ii)--(iii)]
  Taking the unconditional expectation of the conditional numerator in part~(i) gives
  \[
    \E_\Delta\big[\omega(\Delta)\big] = \frac{\E_\Delta\big[\E[\tilde{X}\tilde{g}\mid \Delta]\big]}{\E[\tilde{X}\tilde{g}]} = \frac{\E[\tilde{X}\tilde{g}]}{\E[\tilde{X}\tilde{g}]} = 1.
  \]
  Assumption~\ref{ass:ccm} gives $\E[\tilde{X}\tilde{g}\mid \Delta=\delta] \ge 0$ a.s., while Assumption~\ref{ass:relevance} and the maintained sign normalization make $\E[\tilde{X}\tilde{g}]$ strictly positive. Hence $\omega(\delta)\ge 0$ a.s. Together with part~(i), this makes $\beta_{IV}$ a convex combination of heterogeneous treatment effects.
\end{proof}

\subsection*{Proofs for Section~\ref{sec:multipleLATE}: Relation to the LATE Framework}

\begin{lemma}[Margin decomposition of the residual kernel]\label{lem:margin}
  Fix a covariate value $w$ outside the relevant null set and suppress $w$ from the notation. Assumption~\ref{ass:exogext} with $V=U$ then gives $g\indep(U,\Delta)$ in this conditional law. Under the threshold-crossing model, for almost every $\delta\in\Supp(\Delta)$,
  \[
    \E\big[\tilde X\,\tilde g \mid \Delta=\delta\big]
    \;=\; \sum_{\ell=2}^{L} S_\ell\,\pi_\ell(\delta),
    \qquad\text{and in particular}\qquad
    \E\big[\tilde X\,\tilde g\big] \;=\; \sum_{\ell=2}^{L} S_\ell\,\bar\pi_\ell .
  \]
\end{lemma}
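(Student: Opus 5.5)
Within the fixed stratum $w$, the residuals are $\tilde X = X-\E[X]$ and $\tilde g = g-\bar g$, so I start by writing
\[
  \E[\tilde X\tilde g\mid\Delta=\delta]=\E[X(g-\bar g)\mid\Delta=\delta]-\E[X\mid\Delta=\delta]\,\E[g-\bar g\mid\Delta=\delta].
\]
Because $g\indep\Delta$, the second factor $\E[g-\bar g\mid\Delta=\delta]$ equals $\E[g]-\bar g=0$. Hence the kernel reduces to $\E[X(g-\bar g)\mid\Delta=\delta]$, and no centering of $X$ is needed.

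The key step is a telescoping (Abel summation) representation of the threshold indicator along the ordered support of $g$. On the event $\{g=g_j\}$,
\[
  X=\mathbf 1\{U\le g_j\}=\mathbf 1\{U\le g_1\}+\sum_{\ell=2}^{j}\mathbf 1\{C_\ell\}.
\]
Since $g$ takes only the values $g_1<\dots<g_L$, this can be written globally as
\[
  X=\mathbf 1\{U\le g_1\}+\sum_{\ell=2}^{L}\mathbf 1\{C_\ell\}\,\mathbf 1\{g\ge g_\ell\}.
\]
The identity holds pointwise and requires no assumptions beyond the threshold model with discrete $g$.

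Next I multiply this expansion by $(g-\bar g)$ and take expectations conditional on $\Delta=\delta$. I use the joint independence $g\indep(U,\Delta)$, which gives $g\indep U\mid\Delta$, so each product factorizes.
\begin{itemize}
  \item The always-taker term becomes $\mathbb P(U\le g_1\mid\Delta=\delta)\,\E[g-\bar g]=0$.
  \item The $\ell$-th complier term becomes $\pi_\ell(\delta)\,\E[(g-\bar g)\mathbf 1\{g\ge g_\ell\}]=\pi_\ell(\delta)S_\ell$.
\end{itemize}
Summing the terms yields the first claim. The second claim follows by integrating over $F_{\Delta}$, which turns $\pi_\ell(\delta)$ into $\bar\pi_\ell$ by iterated expectations. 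Alternatively, the same computation can be run without conditioning on $\Delta$.

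The main obstacle is measure-theoretic bookkeeping rather than algebra, for two reasons.
\begin{itemize}
  \item The conditional-independence factorization holds only for almost every $\delta$, so I would state the identities in terms of versions of the conditional expectations, which is where the ``almost every'' qualifiers come from.
  \item Finite second moments, together with boundedness of $X$ and the finite support of $g$, justify interchanging the finite sum with the expectation.
\end{itemize}
I would also note that summing over all $\ell=2,\dots,L$ is harmless. Margins with $\bar\pi_\ell=0$ have $\pi_\ell(\delta)=0$ for almost every $\delta$, so restricting the sum to $\mathcal L_+$ later, as in Proposition~\ref{thm:multipleLATE}, changes nothing.
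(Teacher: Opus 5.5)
Your proof is correct and is essentially the paper's argument: both use $g\indep(U,\Delta)$ to factor the conditional expectation and then telescope the threshold indicator across the ordered support of $g$ (Abel summation), with the $\ell=1$ term killed because $S_1=\E[g-\bar g]=0$. The only difference is order: the paper conditions on $g$ first and telescopes $F^\delta(g_j)=\sum_{\ell\le j}\pi_\ell(\delta)$, while you telescope $X$ pointwise before taking expectations.

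One harmless slip in your first display: since $\tilde X=X-\E[X]$, the subtracted term should carry the constant $\E[X]$ rather than $\E[X\mid\Delta=\delta]$. It vanishes either way because $\E[g-\bar g\mid\Delta=\delta]=0$.
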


\begin{proof}[Proof of Lemma~\ref{lem:margin}]
  Joint independence $g\indep(U,\Delta)$ implies both $\E[g\mid\Delta=\delta]=\bar g$ and $U\indep g\mid\Delta$. Hence centering $X$ does not change its conditional covariance with $\tilde g$, and
  \[
    \E[\tilde X\,\tilde g\mid\Delta=\delta]
    =\Cov\big(X,g\mid\Delta=\delta\big)
    =\E\big[(g-\bar g)\,\mathbb{P}(U\le g\mid \Delta=\delta)\big].
  \]
  Let $F^{\delta}(t):=\mathbb{P}(U\le t\mid\Delta=\delta)$ denote the conditional c.d.f.\ of $U$. Then
  \[
    \E\big[\tilde X\,\tilde g\mid\Delta=\delta\big]
    \;=\; \sum_{j=1}^{L} p_j\,(g_j-\bar g)\,F^{\delta}(g_j).
  \]
  Write $F^{\delta}(g_j)=\sum_{\ell\le j}\pi_\ell(\delta)$ with $\pi_1(\delta):=\mathbb{P}(U\le g_1\mid\Delta=\delta)$ and $\pi_\ell(\delta)=F^{\delta}(g_\ell)-F^{\delta}(g_{\ell-1})$ for $\ell\ge2$. Substituting and exchanging the order of summation (Abel/summation-by-parts),
  \[
    \sum_{j=1}^{L} p_j(g_j-\bar g)\sum_{\ell\le j}\pi_\ell(\delta)
    \;=\; \sum_{\ell=1}^{L}\pi_\ell(\delta)\underbrace{\sum_{j\ge \ell} p_j(g_j-\bar g)}_{=\,S_\ell}
    \;=\; \sum_{\ell=1}^{L} S_\ell\,\pi_\ell(\delta).
  \]
  The $\ell=1$ term vanishes because $S_1=\sum_{j\ge1}p_j(g_j-\bar g)=\E[g-\bar g]=0$; equivalently, always-takers ($U\le g_1$) have $X\equiv1$ and contribute nothing to a covariance with $\tilde g$. Hence $\E[\tilde X\,\tilde g\mid\Delta=\delta]=\sum_{\ell=2}^{L}S_\ell\,\pi_\ell(\delta)$. Taking $\E_\Delta$ and using $\E[\pi_\ell(\Delta)]=\bar\pi_\ell$ (law of iterated expectations) gives
  \[
    D(w)=\E[\tilde X\,\tilde g\mid W=w]=\sum_{\ell=2}^{L}S_\ell\,\bar\pi_\ell.
  \]
  Proposition~\ref{thm:multipleLATE} assumes this within-stratum denominator is strictly positive when it forms $\beta_{IV}(w)$.
\end{proof}

\begin{proof}[Proof of Proposition~\ref{thm:multipleLATE}, part~(i)]
  Work in the conditional law given $W=w$. Assumption~\ref{ass:exogext} with $V=U$ supplies the signal exogeneity required by Theorem~\ref{thm:cpt}, while $D(w)>0$ supplies relevance in this stratum. Applying that theorem conditionally and inserting Lemma~\ref{lem:margin} gives
  \[
    \begin{aligned}
      \E_\Delta\big[\Delta\cdot\E[\tilde X\,\tilde g\mid\Delta]\big]
       & =\sum_{\ell\in\mathcal L_+} S_\ell\,\E\big[\Delta\,\pi_\ell(\Delta)\big]
      =\sum_{\ell\in\mathcal L_+} S_\ell\int \delta\,\mathbb{P}(C_\ell\mid\Delta=\delta)\,dF_\Delta(\delta) \\
       & =\sum_{\ell\in\mathcal L_+} S_\ell\,\bar\pi_\ell\,\mathrm{LATE}_\ell,
    \end{aligned}
  \]
  where the last equality follows from $\E[\Delta\,\pi_\ell(\Delta)]=\bar\pi_\ell\,\mathrm{LATE}_\ell$ by iterated expectations, equivalently by the complier-density Bayes identity in part~(ii). Dividing by $D(w)=\sum_{\ell'\in\mathcal L_+}S_{\ell'}\bar\pi_{\ell'}$ gives $\beta_{IV}(w)=\sum_\ell\lambda_\ell\,\mathrm{LATE}_\ell$. Because $S_\ell\ge0$ and $\bar\pi_\ell=\mathbb{P}(C_\ell)>0$, the weights are non-negative and sum to one.
\end{proof}

\begin{proof}[Proof of Proposition~\ref{thm:multipleLATE}, part (ii)]
  By the conditional version of the weighting function in Theorem~\ref{thm:cpt} and Lemma~\ref{lem:margin},
  \[
    \omega_w(\delta)
    \;=\; \frac{\E[\tilde X\,\tilde g\mid\Delta=\delta,W=w]}{D(w)}
    \;=\; \frac{\sum_{\ell\in\mathcal L_+} S_\ell\,\pi_\ell(\delta)}
    {\sum_{\ell'\in\mathcal L_+} S_{\ell'}\,\bar\pi_{\ell'}}.
  \]
  Apply Bayes' rule to each $\ell\in\mathcal L_+$:
  \[
    \begin{aligned}
      \pi_\ell(\delta)\,dF_\Delta(\delta)
       & = \mathbb{P}(C_\ell\mid\Delta=\delta)\,dF_\Delta(\delta)
      = \bar\pi_\ell\,dF_{\Delta\mid C_\ell}(\delta),             \\
       & \Longrightarrow\quad
      \frac{\pi_\ell(\delta)}{\bar\pi_\ell}=\frac{dF_{\Delta\mid C_\ell}(\delta)}{dF_\Delta(\delta)}
      \quad\text{a.e.}
    \end{aligned}
  \]
  Substituting $\pi_\ell(\delta)=\bar\pi_\ell\,dF_{\Delta\mid C_\ell}/dF_\Delta$ into the numerator,
  \[
    \omega_w(\delta)
    \;=\; \frac{\sum_{\ell\in\mathcal L_+} S_\ell\,\bar\pi_\ell\,
      \big(dF_{\Delta\mid C_\ell}(\delta)/dF_\Delta(\delta)\big)}
    {\sum_{\ell'\in\mathcal L_+} S_{\ell'}\,\bar\pi_{\ell'}}
    \;=\; \sum_{\ell\in\mathcal L_+} \lambda_\ell\,
    \frac{dF_{\Delta\mid C_\ell}(\delta)}{dF_\Delta(\delta)},
  \]
  which is~\eqref{eq:complierdensity}. This identity is taken under the conditional law of $(\Delta,U,g)$ given $W=w$. Each likelihood ratio integrates to one under $F_\Delta$, so $\E_\Delta[\omega_w(\Delta)]=\sum_\ell\lambda_\ell=1$, as required by Theorem~\ref{thm:representation}.
\end{proof}

\begin{proof}[Proof of Corollary~\ref{cor:late-unconditional}]
  For $D(w)>0$, Proposition~\ref{thm:multipleLATE} gives
  \[
    \E[\tilde Y\tilde g\mid W=w]
    =D(w)\beta_{IV}(w)
    =D(w)\sum_{\ell\in\mathcal L_+(w)}\lambda_\ell(w)\mathrm{LATE}_\ell(w).
  \]
  The margin calculation in Lemma~\ref{lem:margin} also shows that both conditional moments vanish when $D(w)=0$. Iterated expectations therefore give~\eqref{eq:unconditional-late}. Non-negativity follows from $D(w)\ge0$ and $\lambda_\ell(w)\ge0$, while normalization follows from
  \[
    \int_{\{w:\,D(w)>0\}}\sum_{\ell\in\mathcal L_+(w)}
    \frac{D(w)\lambda_\ell(w)}{\E[D(W)]}\,dF_W(w)=1.
  \]
\end{proof}

\subsection*{Proofs for Section~\ref{sec:estimation}: The Moving Target}

Throughout this part, $\eta_0 = (l_Y, m_X, q_Y, g_X)$ denotes the true nuisance functions of Section~\ref{sec:nuisances}, $\mathcal{H} = \mathcal{H}(Z,W)$ is the heterogeneity residual in~\eqref{eq:Hdef}, and perturbation directions inherit the measurability of the corresponding nuisance. Thus directions for $l_Y,m_X$ are functions of $W$, whereas directions for $q_Y,g_X$ are functions of $(Z,W)$. The CRC equation and Assumption~\ref{ass:exog} imply the following identity, in which the $Y(0)$ term cancels:
\begin{equation}\label{eq:qminusl}
  q_Y - l_Y = \E[Y\mid Z,W] - \E[Y\mid W] = \E[X\Delta\mid Z,W] - \E[X\Delta\mid W].
\end{equation}

\begin{proof}[Proof of Proposition~\ref{prop:nonorth}]
  The moment depends on the signal coordinate only through the factor $(g_X - m_X)$. Perturbing $g_X \mapsto g_X + t\,\delta_g$ and differentiating at $t = 0$,
  \[
    \partial_{g_X}\,\E\big[\psi_{\mathrm{naive}}(O;\beta_0,\eta_0)\big][\delta_g] = \E\big[\big(Y - l_Y - \beta_0(X - m_X)\big)\,\delta_g\big].
  \]
  Since $\delta_g$ is $(Z,W)$-measurable, condition the first factor on $(Z,W)$, using $\E[Y\mid Z,W] = q_Y$ and $\E[X\mid Z,W] = g_X$:
  \[
    = \E\big[\big(q_Y - l_Y - \beta_0(g_X - m_X)\big)\,\delta_g\big].
  \]
  By~\eqref{eq:qminusl} the bracket equals $\mathcal{H}$, so the derivative is $\E[\mathcal{H}\,\delta_g]$. By the heterogeneity condition (Definition~\ref{def:esshet}), $\mathbb{P}(\mathcal{H}\neq 0) > 0$; taking $\delta_g = \mathcal{H}$ gives $\E[\mathcal{H}^2] > 0$, so the derivative does not vanish in all directions and Neyman orthogonality fails.
\end{proof}

\begin{proof}[Derivation of the expansion following Theorem~\ref{thm:driftvalid}]
  Write $U = Y - l_Y - \beta_0(X - m_X)$ and $V = g_X - m_X$. By a standard cross-fitted $Z$-estimation expansion, $\hat{\beta}_{\mathrm{naive}} - \beta_0 = J_0^{-1}\,\tfrac1N\sum_i \psi_{\mathrm{naive}}(O_i;\beta_0,\hat{\eta}) + o_{\mathbb{P}}(N^{-1/2})$, the Jacobian being
  \begin{equation}\label{eq:jacobian}
    J_0 = -\,\partial_\beta\,\E[\psi_{\mathrm{naive}}(O;\beta_0,\eta_0)] = \E[(X - m_X)(g_X - m_X)] = \E[(g_X - m_X)^2],
  \end{equation}
  where the last equality uses $\E[X - m_X\mid Z,W] = g_X - m_X$ under the specialization $g = g_X$. Because $\E[\psi_{\mathrm{naive}}(O;\beta_0,\eta_0)] = 0$, cross-fitting gives
  \[
    \tfrac1N\textstyle\sum_i \psi_{\mathrm{naive}}(O_i;\beta_0,\hat{\eta}) = \tfrac1N\textstyle\sum_i \psi_{\mathrm{naive}}(O_i;\beta_0,\eta_0) + \E\big[\psi_{\mathrm{naive}}(O;\beta_0,\hat{\eta})\big] + o_{\mathbb{P}}(N^{-1/2}),
  \]
  with the middle term evaluated at the independent, out-of-fold $\hat{\eta} = \eta_0 + \Delta\eta$. Substituting into~\eqref{eq:psinaive},
  \[
    \psi_{\mathrm{naive}}(O;\beta_0,\hat{\eta}) = \big(U - \Delta l_Y + \beta_0\,\Delta m_X\big)\big(V + \Delta g_X - \Delta m_X\big).
  \]
  Take expectations term by term. The constant term $\E[UV]$ is zero by the definition of $\beta_0$. Terms linear in the $W$-measurable errors vanish because $\E[V\mid W]=0$ and $\E[U\mid W]=0$. The remaining linear term is
  \[
    \E[U\,\Delta g_X] = \E\big[\Delta g_X\,\E[U\mid Z,W]\big] = \E\big[\Delta g_X\,\big(q_Y - l_Y - \beta_0(g_X - m_X)\big)\big] = \E[\mathcal{H}\,\Delta g_X].
  \]
  Cauchy--Schwarz bounds the quadratic term $\E\big[(-\Delta l_Y + \beta_0\Delta m_X)(\Delta g_X - \Delta m_X)\big]$ by sums of products of $L_2(\mathbb{P})$ nuisance-error norms. Assumption~\ref{ass:ratesP}(ii) makes each product $o_{\mathbb{P}}(N^{-1/2})$. Thus $\E[\psi_{\mathrm{naive}}(O;\beta_0,\hat{\eta})] = \E[\mathcal{H}\,\Delta g_X] + o_{\mathbb{P}}(N^{-1/2})$, and multiplication by $J_0^{-1}$ gives the stated expansion. If $\Delta \equiv \beta_0$, then~\eqref{eq:qminusl} gives $q_Y - l_Y = \beta_0(g_X - m_X)$, so $\mathcal{H}\equiv 0$ and the bias is zero.
\end{proof}

\subsection*{Proofs for Section~\ref{sec:fixedtarget}: The Fixed Target}

\begin{proof}[Proof of Theorem~\ref{thm:fixedtarget}, part~(i)]
  Identification. At $\eta_0$ each augmentation term carries the factor $(X - g_X)$, with $\E[X - g_X\mid Z,W] = 0$, multiplied by a $(Z,W)$-measurable factor ($q_Y - l_Y$ in $\psi_Y$; $g_X - m_X$ in $\psi_X$). Conditioning on $(Z,W)$, both vanish in expectation, leaving $\E[\psi_Y(O;\eta_0)] = \E[(Y - l_Y)(g_X - m_X)]$ and $\E[\psi_X(O;\eta_0)] = \E[(X - m_X)(g_X - m_X)]$, whose ratio is $\beta_0$ by~\eqref{eq:beta0}. Hence $\E[\psi_{\mathrm{CRC}}(O;\beta_0,\eta_0)] = 0$.

  Orthogonality. Differentiate $\E[\psi_{\mathrm{CRC}}] = \E[\psi_Y] - \beta_0\,\E[\psi_X]$ in each coordinate at $\eta_0$, in a direction $h$ of the appropriate measurability.

  (i) Direction $l_Y$ (appears only in $\psi_Y$):
  \[
    \partial_{l_Y}\E[\psi_Y][h] = \E\big[-h(g_X - m_X) - h(X - g_X)\big] = \E[-h(X - m_X)] = \E\big[-h\,\E[X - m_X\mid W]\big] = 0.
  \]

  (ii) Direction $q_Y$ (appears only in $\psi_Y$):
  \[
    \partial_{q_Y}\E[\psi_Y][h] = \E[h(X - g_X)] = \E\big[h\,\E[X - g_X\mid Z,W]\big] = 0.
  \]

  (iii) Direction $m_X$. In $\psi_Y$, $\partial_{m_X}\E[\psi_Y][h] = \E[-h(Y - l_Y)] = \E[-h\,\E[Y - l_Y\mid W]] = 0$. In $\psi_X$ the parameter appears three times, and
  \[
    \partial_{m_X}\E[\psi_X][h] = \E\big[-h(g_X - m_X) - h(X - m_X) - h(X - g_X)\big] = \E[-2h(X - m_X)] = 0,
  \]
  again by $\E[X - m_X\mid W] = 0$. The total derivative is $0 - \beta_0\cdot 0 = 0$.

  (iv) Direction $g_X$. In $\psi_Y$,
  \[
    \partial_{g_X}\E[\psi_Y][h] = \E\big[h(Y - l_Y) - h(q_Y - l_Y)\big] = \E[h(Y - q_Y)] = \E\big[h\,\E[Y - q_Y\mid Z,W]\big] = 0.
  \]
  In $\psi_X$ (three occurrences),
  \[
    \partial_{g_X}\E[\psi_X][h] = \E\big[h(X - m_X) - h(g_X - m_X) + h(X - g_X)\big] = \E[2h(X - g_X)] = 0.
  \]
  The total derivative is again $0 - \beta_0\cdot 0 = 0$. All four coordinate derivatives vanish, so $\psi_{\mathrm{CRC}}$ is Neyman orthogonal at $\eta_0$.
\end{proof}

\begin{proof}[Proof of Theorem~\ref{thm:fixedtarget}, part~(ii)]
  Since $\psi_{\mathrm{CRC}}$ is affine in $\beta$, the Jacobian is exact: $-\partial_\beta\E[\psi_{\mathrm{CRC}}(O;\beta_0,\eta_0)] = \E[\psi_X(O;\eta_0)] = \E[(g_X - m_X)^2] = J_0$, strictly positive by Assumption~\ref{ass:relevance}. The cross-fitted $Z$-estimator admits the expansion
  \[
    \sqrt{N}\,(\hat{\beta} - \beta_0) = J_0^{-1}\,\frac{1}{\sqrt{N}}\sum_{i=1}^N \psi_{\mathrm{CRC}}(O_i;\beta_0,\eta_0) + \sqrt{N}\,J_0^{-1} R_N + o_{\mathbb{P}}(1),
  \]
  where $R_N = \E[\psi_{\mathrm{CRC}}(O;\beta_0,\hat{\eta}) - \psi_{\mathrm{CRC}}(O;\beta_0,\eta_0)]$. The coefficient is $+J_0^{-1}$ because the first-order condition and $\partial_\beta\E[\psi_{\mathrm{CRC}}] = -J_0$ give $\hat{\beta} - \beta_0 = J_0^{-1}\tfrac1N\sum_i\psi_{\mathrm{CRC}}(O_i;\beta_0,\hat{\eta}) + o_{\mathbb{P}}(N^{-1/2})$. Each of $\psi_Y$ and $\psi_X$ is quadratic in the nuisances, so its Taylor expansion stops at second order. Theorem~\ref{thm:fixedtarget}(i) removes the linear terms. After the products $\pm\Delta l_Y\,\Delta g_X$ cancel, the remainder is
  \[
    R_Y = \E\big[\Delta l_Y\,\Delta m_X - \Delta g_X\,\Delta q_Y\big], \qquad R_X = \E\big[(\Delta m_X)^2 - (\Delta g_X)^2\big], \qquad R_N = R_Y - \beta_0 R_X.
  \]
  By Cauchy--Schwarz,
  \[
    |R_N| \le \|\Delta l_Y\|_{\mathbb{P},2}\|\Delta m_X\|_{\mathbb{P},2} + \|\Delta g_X\|_{\mathbb{P},2}\|\Delta q_Y\|_{\mathbb{P},2} + |\beta_0|\big(\|\Delta m_X\|_{\mathbb{P},2}^2 + \|\Delta g_X\|_{\mathbb{P},2}^2\big),
  \]
  and each product is $o_{\mathbb{P}}(N^{-1/2})$ by Assumption~\ref{ass:ratesP}(ii), so $\sqrt{N}\,R_N \xrightarrow{p} 0$. The leading term is a sample average of the mean-zero, finite-variance score $\psi_{\mathrm{CRC}}(O;\beta_0,\eta_0)$; the Lindeberg--L\'evy central limit theorem and Slutsky's theorem yield $\sqrt{N}(\hat{\beta} - \beta_0)\xrightarrow{d}\mathcal{N}(0,\,J_0^{-2}\Omega_0)$ with $\Omega_0 = \E[\psi_{\mathrm{CRC}}(O;\beta_0,\eta_0)^2]$.
\end{proof}

\begin{proof}[Proof of Theorem~\ref{thm:fixedtarget}, part~(iii)]
  Throughout, expectations are under the true $P$; write $V = g_X - m_X$, $e = X - g_X$, and note the identities
  \begin{equation}\label{eq:effz}
    \E[V\mid W] = \E[g_X\mid W] - m_X = 0, \qquad
    \E[e\mid Z,W] = 0, \qquad \E[Y - l_Y\mid W] = 0 .
  \end{equation}

  Step 0 (tangent space and uniqueness). The efficiency calculation uses the unrestricted observed-law model in which the four conditional expectations are functionals and finite moments are the only restrictions. Scores of regular one-dimensional submodels therefore have closed linear span $L_2^0(P)$. Any gradient $\phi$ satisfying $\tfrac{d}{dt}\beta(P_t)|_{t=0} = \E[\phi\,s]$ for every score $s$ is unique. Indeed, if $\phi_1$ and $\phi_2$ are gradients, then $\E[(\phi_1 - \phi_2)s] = 0$ for every $s\in L_2^0(P)$; taking $s = \phi_1 - \phi_2$ gives $\phi_1 = \phi_2$ a.s. This unique gradient is the efficient influence function \citep{bickel1993efficient,vandervaart1998asymptotic}. The CRC and exogeneity assumptions supply the causal interpretation of the observed-law functional.

  Step 1 (pathwise derivatives). For $A\in L_2$ and a coordinate $\sigma$-field $\mathcal{C}$, along a submodel with bounded score $s$,
  \begin{equation}\label{eq:effcm}
    \tfrac{d}{dt}\E_t[A\mid\mathcal{C}]\big|_{t=0}
    = \E\big[(A - \E[A\mid\mathcal{C}])\,s\mid\mathcal{C}\big],
  \end{equation}
  so, writing $\dot{(\cdot)}$ for the $t$-derivative at $0$, $\dot{g}_X = \E[e\,s\mid Z,W]$, $\dot{m}_X = \E[(X - m_X)s\mid W]$, $\dot{l}_Y = \E[(Y - l_Y)s\mid W]$; and $\tfrac{d}{dt}\E_t[h_t]|_{t=0} = \E[\dot{h}] + \E[h\,s]$.

  Step 2 (EIF of $\theta_Y$). Differentiating $\theta_Y(P_t) = \E_t[(Y - l_{Y,t})(g_{X,t} - m_{X,t})]$ in its four channels,
  \[
    \dot{\theta}_Y = \E[(Y - l_Y)V\,s] - \E[\dot{l}_Y V]
    + \E[(Y - l_Y)\dot{g}_X] - \E[(Y - l_Y)\dot{m}_X].
  \]
  By~\eqref{eq:effz} and $W$-measurability, $\E[\dot{l}_Y V] = 0$ and $\E[(Y - l_Y)\dot{m}_X] = 0$. As $\dot{g}_X$ is $(Z,W)$-measurable, projecting the outer factor gives $\E[(Y - l_Y)\dot{g}_X] = \E[(q_Y - l_Y)\,\E[e s\mid Z,W]] = \E[(q_Y - l_Y)e\,s]$. Thus $q_Y$ enters when differentiation of the embedded $g_X = \E[X\mid Z,W]$ reprojects $Y$ onto $(Z,W)$. Hence $\mathrm{EIF}(\theta_Y) = (Y - l_Y)V + (q_Y - l_Y)e - \theta_Y$, which is mean zero since $\E[(q_Y - l_Y)e] = 0$, and equals $\psi_Y(O;\eta_0) - \theta_Y$.

  Step 3 (EIF of $J_0$). From $J_0(P_t) = \E_t[V_t^2]$, $\dot{J}_0 = \E[V^2 s] + 2\E[V\dot{g}_X] - 2\E[V\dot{m}_X]$; the last term is zero by~\eqref{eq:effz}, and $\E[V\dot{g}_X] = \E[V e\,s]$ since $V$ is $(Z,W)$-measurable, so $\mathrm{EIF}(J_0) = V^2 + 2Ve - J_0 = V(2X - g_X - m_X) - J_0 = \psi_X(O;\eta_0) - J_0$.

  Step 4 (ratio and identification). As $J_0 > 0$, the map $(\theta,J)\mapsto\theta/J$ is Hadamard differentiable, so $\beta_0 = \theta_Y/J_0$ has gradient $\phi_{\beta_0} = J_0^{-1}(\mathrm{EIF}(\theta_Y) - \beta_0\,\mathrm{EIF}(J_0))$, and the constant $\theta_Y - \beta_0 J_0 = \E[UV] = 0$ cancels. Expanding $UV + e\mathcal{H}$ from the definitions ($U = Y - l_Y - \beta_0(X - m_X)$, $\mathcal{H} = (q_Y - l_Y) - \beta_0 V$),
  \[
    UV + e\mathcal{H}
    = (Y - l_Y)V + (q_Y - l_Y)e - \beta_0 V(2X - g_X - m_X)
    = J_0\,\phi_{\beta_0},
  \]
  so $\phi_{\beta_0} = J_0^{-1}(UV + e\mathcal{H}) = J_0^{-1}\psi_{\mathrm{CRC}}(O;\beta_0,\eta_0)$. By Step~0 this is the efficient influence function; the bound is $\Var(\phi_{\beta_0}) = J_0^{-2}\E[(UV + e\mathcal{H})^2] = J_0^{-2}\Omega_0$, attained by $\hat{\beta}_{\mathrm{CRC}}$ (Theorem~\ref{thm:an}).
\end{proof}

\begin{proof}[Proof of Theorem~\ref{thm:fixedtarget}, part~(iv)]
  If effects are homogeneous, $\Delta\equiv\beta_0$ and hence $\mathcal H=0$. Lemma~\ref{lem:scoreid} then gives $\psi_{\mathrm{CRC}}(\cdot;\beta_0,\eta_0)=\psi_{\mathrm{naive}}(\cdot;\beta_0,\eta_0)=UV$, so $\Omega_0=\Omega_0^{\mathrm{nv}}$.
\end{proof}

\subsection*{Conventions for the proofs of
  Sections~\ref{sec:estimation}--\ref{sec:diagnostics}} \label{app:conventions}

Unless stated otherwise, the following conventions use Assumption~\ref{ass:regB}.

Population residuals at the truth. Write
\[
  U := Y - l_Y - \beta_0(X - m_X), \qquad
  V := g_X - m_X, \qquad
  e := X - g_X .
\]
By Assumption~\ref{ass:regB}(i) and Cauchy--Schwarz, $U, V, e \in L_2(\mathbb{P})$, and $UV, ea \in L_2(\mathbb{P})$ for bounded $a$. Three conditional-mean identities recur. By $l_Y = \E[Y\mid W]$, $m_X = \E[X \mid W]$ and the tower property,
\begin{equation}\label{eq:UW}
  \E[U \mid W] = \E[Y\mid W] - l_Y - \beta_0(\E[X\mid W] - m_X) = 0 .
\end{equation}
By the identity~\eqref{eq:qminusl} and the definition~\eqref{eq:Hdef}, $q_Y - l_Y = \E[X\Delta\mid Z,W] - \E[X\Delta\mid W]$ and $\mathcal{H} = (q_Y - l_Y) - \beta_0 V$; since $\E[Y\mid Z,W] = q_Y$, $\E[X\mid Z,W] = g_X$ and hence $\E[X - m_X\mid Z,W] = g_X - m_X = V$,
\begin{equation}\label{eq:UZW}
  \E[U \mid Z, W]
  = (q_Y - l_Y) - \beta_0 V = \mathcal{H}(Z,W).
\end{equation}
Taking a further expectation and using~\eqref{eq:UW} gives $\E[\mathcal{H}\mid W] = 0$ (Lemma~\ref{lem:hgeom}(i)). Finally, by definition of $g_X = \E[X\mid Z,W]$, for every $(Z,W)$-measurable $f \in L_2$,
\begin{equation}\label{eq:eorth}
  \E[e\,f \mid Z,W] = 0, \qquad \text{hence}\quad \E[e\,f] = 0 .
\end{equation}

Cross-fitting notation. For a score coordinate $\phi(O;\beta,\eta)$ set $\Delta\eta_{j,k} := \hat{\eta}_{j,k} - \eta_{0,j}$, $\rho_{j,k} := \|\Delta\eta_{j,k}\|_{\mathbb{P},2}$, $\rho_N := \max_{j,k}\rho_{j,k}$, as in the cross-fitting construction of Section~\ref{sec:nuisances}. Conditional expectations $\E[\,\cdot \mid \mathcal{D}_{-k}]$ hold the $\mathcal{D}_{-k}$-measurable estimate $\hat{\eta}_k$ fixed and integrate a fresh $O \perp \mathcal{D}_{-k}$; thus $A_k, B_k, \beta_k^*$ of~\eqref{eq:pseudotrue} and the loadings $\E[\mathcal{H}\,\Delta g_{X,k}\mid\mathcal{D}_{-k}]$ are $\mathcal{D}_{-k}$-measurable random variables. Finite constants depending only on $(\bar{C}, C_4, |\beta_0|)$ are written $L, c, C$ and may change from line to line. We abbreviate $\|\cdot\| := \|\cdot\|_{\mathbb{P},2}$.

Two moment inequalities. For $(Z,W)$- or $W$-measurable $f$ with $\|f\|_\infty < \infty$ and $S \in L_4$,
\begin{equation}\label{eq:fS}
  \E[f^2 S^2] \le \|f\|_\infty^2\,\E[S^2],
  \qquad
  \big|\E[f\,S^2]\big| \le \|f\|\,\sqrt{\E[S^4]} .
\end{equation}
For bounded $b$ ($\|b\|_\infty<\infty$), the interpolation bound is $\|b\|_{\mathbb{P},4}^2 = (\E[b^4])^{1/2} \le (\|b\|_\infty^2\,\E[b^2])^{1/2} = \|b\|_\infty\,\|b\|$.

\begin{lemma}[Modulus of continuity of the score coordinates]\label{lem:l2cont}
  Let Assumption~\ref{ass:regB} hold, and consider $\phi \in \{\psi_{\mathrm{naive}}, \psi_Y, \psi_X, \psi_{\mathrm{CRC}}, \phi_{\mathrm{diag}}, j\}$, where $\phi_{\mathrm{diag}}(O;\beta,\eta) := (X - g)\{(q - l) - \beta(g - m)\}$ and $j(O;\eta) := (X - m)(g - m)$. There is $L = L(\bar{C}, C_4, |\beta_0|) < \infty$ such that, for all admissible $(\beta,\eta), (\beta',\eta')$ with nuisance coordinates bounded by $\bar{C}$ in sup-norm and $|\beta|,|\beta'| \le \bar{C}$,
  \[
    \big\| \phi(\,\cdot\,;\beta',\eta') - \phi(\,\cdot\,;\beta,\eta) \big\|
    \;\le\; L\,\Big( |\beta' - \beta|
    \;+\; \sum_{j} \|\eta'_j - \eta_j\|^{1/2} \Big).
  \]
  In particular, if $|\beta'-\beta| \to 0$ and $\max_j \|\eta'_j - \eta_j\| \to 0$, then $\|\phi(\,\cdot\,;\beta',\eta') - \phi(\,\cdot\,;\beta,\eta)\| \to 0$; and if $\phi$ does not depend on $\beta$ (the kernel $j$), the $|\beta'-\beta|$ term is absent.
\end{lemma}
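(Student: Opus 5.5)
The plan is to treat every score coordinate as a finite sum of products of at most two "factors." Each factor is either an observed variable ($Y$ or $X$) minus a nuisance coordinate, or a difference of two nuisance coordinates. The scalar $\beta$ enters only as a multiplier. For instance, $\psi_{\mathrm{naive}} = (Y - l - \beta(X-m))(g-m)$. $\psi_Y$ and $\psi_X$ are sums of two such products, $\phi_{\mathrm{diag}} = (X-g)\{(q-l)-\beta(g-m)\}$, and $j=(X-m)(g-m)$. $\psi_{\mathrm{CRC}} = \psi_Y - \beta\psi_X$ then follows by the triangle inequality once $\psi_Y$ and $\psi_X$ are handled.

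For a generic product $AB$, write
\[
A'B' - AB = (A'-A)B' + A(B'-B).
\]
Each term is one of two types. In the first type, the difference factor is a difference of nuisances and so is bounded, with sup-norm at most $2\bar C$. The other factor may be unbounded, e.g.\ $Y - l$. In the second type, the difference factor involves $\beta'-\beta$ multiplying an unbounded variable such as $X-m$.

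For the first type, apply the second inequality of \eqref{eq:fS} in squared form:
\[
\E[f^2S^2]\le \|f^2\|\,\|S\|_{\mathbb{P},4}^2 .
\]
Here $f$ is the nuisance difference and $S$ is the other factor. By Assumption~\ref{ass:regB}(i) and boundedness, $\|S\|_{\mathbb{P},4}\le C$. By the interpolation bound, $\|f^2\| = \|f\|_{\mathbb{P},4}^2 \le \|f\|_\infty\|f\| \le 2\bar C\|f\|$. Hence
\[
\|fS\| \le C\|f\|^{1/2}.
\]
This is exactly where the square root in the statement comes from: the fourth moments of $Y,X$ only allow an $L_4$–$L_4$ Hölder split, and this costs half a power of the $L_2$ distance.

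For the second type, the term has the form $(\beta'-\beta)(X-m)\,B$ with $B$ a bounded nuisance difference. If $B$ is instead an unbounded factor, as in $-\beta(X-g)(g-m)$ expanded, it is a product of $X$ terms. The bound $\|(X-m)B\|\le C$ follows from $\E[X^4]<\infty$ together with $|\beta|,|\beta'|\le\bar C$, so these terms contribute $L|\beta'-\beta|$. The same splitting also handles the cross terms where $\beta$ multiplies a nuisance difference, e.g.\ $\beta(g'-g)$; there the $\beta$ factor is bounded by $\bar C$.

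Collecting all terms gives the bound with a constant $L$ depending only on $(\bar C, C_4, |\beta_0|)$. The "in particular" clause is then immediate, and the kernel $j$ has no $\beta$ term.

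The main obstacle is the bookkeeping to ensure that every term has at most one unbounded factor multiplying a nuisance difference. Terms like $(X-g')(g'-m') - (X-g)(g-m)$ must be split as $-(g'-g)(g'-m') + (X-g)\big[(g'-g)-(m'-m)\big]$. Once this is verified for each of the listed scores, the Hölder–interpolation step above applies uniformly and concludes the proof.
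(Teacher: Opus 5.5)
Your proposal is correct and follows essentially the same route as the paper: you telescope each bilinear term as $(F_1'-F_1)F_2'+F_1(F_2'-F_2)$, bound the nuisance-difference pieces by an $L_4$--$L_4$ Cauchy--Schwarz step combined with the interpolation bound $\|b\|_{\mathbb{P},4}^2\le\|b\|_\infty\|b\|$ (exactly the source of the square root), and bound the $\beta$-increment by $C|\beta'-\beta|$ using the fourth moments of $X,Y$ and $|\beta|,|\beta'|\le\bar C$. The differences are cosmetic. The paper splits off the $\beta$-increment first via affinity in $\beta$, and it bounds bounded-times-bounded terms linearly before weakening to a square root, whereas you fold $\beta$ into the factor telescoping and apply the square-root bound to every term.
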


\begin{proof}
  Every listed kernel is a finite sum of bilinear terms $F = F_1 F_2$ in which each factor is affine in $(\beta,\eta)$; the coefficients are either constants, bounded nuisance coordinates, or the components $Y, X \in L_4$ (Assumption~\ref{ass:regB}(i)). It suffices to bound the $L_2$ increment of a representative term of each of the two shapes that occur: (a) both factors bounded, and (b) one factor an $L_4$ component ($Y$ or $X$) and the other bounded. We use the telescoping identity $F' - F = (F_1' - F_1)F_2' + F_1(F_2' - F_2)$.

  Shape (a): $F=b_1b_2$, where both $b_1$ and $b_2$ are bounded affine factors. Each increment $b_i'-b_i$ is a bounded nuisance difference. By the first inequality in~\eqref{eq:fS}, $\|(b_1'-b_1)b_2'\| \le \|b_2'\|_\infty\,\|b_1'-b_1\| \le \bar{C}'\,\|b_1'-b_1\|$, and similarly for the second summand, where $\bar{C}'$ is a constant multiple of $\bar{C}$. Hence such a term contributes at most $C\sum_j\|\eta_j'-\eta_j\|$. On the relevant regime this is at most $C\sum_j\|\eta_j'-\eta_j\|^{1/2}$ because deviations are bounded by $2\bar C$ and $t \le (2\bar C)^{1/2} t^{1/2}$.

  Shape (b): $F = A\cdot b$ with $A \in \{Y, X\}$ (up to bounded shifts, themselves of shape (a)) and $b$ a bounded affine factor. The increment is $A(b' - b)$ plus shape-(a) remainders. Now
  \[
    \|A(b'-b)\|^2 = \E[A^2 (b'-b)^2]
    \;\overset{\text{C--S}}{\le}\; \|A\|_{\mathbb{P},4}^2\,\|b'-b\|_{\mathbb{P},4}^2
    \;\le\; \|A\|_{\mathbb{P},4}^2\,\|b'-b\|_\infty\,\|b'-b\|
    \;\le\; C\,\|b'-b\|,
  \]
  using the interpolation bound and $\|b'-b\|_\infty \le 2\bar C$, $\|A\|_{\mathbb{P},4}^2 \le \sqrt{C_4}$. As $b'-b$ is an affine function of the nuisance differences, $\|b'-b\| \le \sum_j\|\eta_j'-\eta_j\|$, so this term contributes $\le C\big(\sum_j\|\eta_j'-\eta_j\|\big)^{1/2} \le C\sum_j\|\eta_j'-\eta_j\|^{1/2}$.

  The $\beta$-increment. Each kernel is affine in $\beta$ with slope a bounded$\,\times\,L_4$ bilinear form $r(O;\eta)$ (e.g.\ $-j$ for $\psi_{\mathrm{naive}}$ and $-(X-g)(g-m)$ for $\phi_{\mathrm{diag}}$), so $\|\phi(\cdot;\beta',\eta') - \phi(\cdot;\beta,\eta')\| = |\beta'-\beta|\, \|r(\cdot;\eta')\| \le C\,|\beta'-\beta|$, since $\|r(\cdot;\eta')\| \le C$ by the $L_4$ bound and boundedness of the nuisances.

  Summing the finitely many terms and combining the $\beta$-increment (bounding $\phi(\beta',\eta')-\phi(\beta,\eta)$ by $\|\phi(\beta',\eta')-\phi(\beta,\eta')\| + \|\phi(\beta,\eta')-\phi(\beta,\eta)\|$) gives the stated bound with a suitable $L$. The two convergence claims follow from this bound.
\end{proof}

\subsection*{Proofs for Section~\ref{sec:geometry}}

\begin{proof}[Proof of Lemma~\ref{lem:hgeom}, parts~(i)--(iii)]
  Part (i) is~\eqref{eq:UW}--\eqref{eq:UZW} in the Conventions: $\E[\mathcal{H}\mid W] = \E[\E[U\mid Z,W]\mid W] = \E[U\mid W] = 0$.

  (ii) Using~\eqref{eq:UZW}, $\tilde{g}_X = g_X - \E[g_X\mid W]$, and that $\tilde g_X$ is $(Z,W)$-measurable,
  \[
    \E[\mathcal{H}\,\tilde{g}_X]
    = \E\big[\E[U\mid Z,W]\,\tilde{g}_X\big]
    = \E[U\,\tilde{g}_X].
  \]
  Now $V = g_X - m_X = \tilde g_X + (\E[g_X\mid W] - m_X)$, and the second summand is $W$-measurable, so by~\eqref{eq:UW}, $\E[U(\E[g_X\mid W]-m_X)] = \E[(\E[g_X\mid W]-m_X)\E[U\mid W]] = 0$; hence $\E[U\tilde g_X] = \E[UV]$. By the definition~\eqref{eq:beta0} of $\beta_0$ as the least-squares coefficient of $Y - l_Y$ on $V$,
  \[
    \E[UV] = \E\big[(Y - l_Y)V\big] - \beta_0\,\E\big[(X - m_X)V\big]
    = \beta_0 J_0 - \beta_0 J_0 = 0,
  \]
  because $\E[(Y-l_Y)V] = \beta_0\,\E[V^2] = \beta_0 J_0$ by~\eqref{eq:beta0} and $\E[(X-m_X)V] = \E[V\cdot V] + \E[(X-g_X)V]$; the last term is $\E[eV]=0$ by~\eqref{eq:eorth} since $V$ is $(Z,W)$-measurable, so $\E[(X-m_X)V]=\E[V^2]=J_0$. Thus $\E[\mathcal H\tilde g_X]=\E[UV]=0$.

  (iii) $\mathcal{G} = \mathbb{R}\tilde{g}_X \oplus L_2(\sigma(W))$ is the sum of a one-dimensional subspace and the closed subspace $L_2(\sigma(W))\subset L_2(\sigma(Z,W))$. The sum is $L_2$-orthogonal: for $d \in L_2(\sigma(W))$, $\E[\tilde g_X d] = \E[d\,\E[\tilde g_X\mid W]] = 0$ since $\E[\tilde g_X \mid W] = 0$. An orthogonal sum of a finite-dimensional subspace and a closed subspace is closed, so $\mathcal G$ is closed and $\Pi_{\mathcal G}$ is well defined. By (i), $\mathcal H \perp L_2(\sigma(W))$ (take $d=\E[\mathcal H\mid W]$-projections: $\E[\mathcal H d]=\E[d\E[\mathcal H\mid W]]=0$); by (ii), $\mathcal H\perp\tilde g_X$. Hence $\mathcal H\perp\mathcal G$, so $\E[\mathcal H\Pi_{\mathcal G}\delta_g]=0$ and $\E[\mathcal H\delta_g]=\E[\mathcal H(I-\Pi_{\mathcal G})\delta_g]$; the bound is Cauchy--Schwarz.
\end{proof}

The next lemma decomposes the heterogeneity residual into the observable and selection-on-gains components used in Section~\ref{sec:geometry}.

\begin{lemma}[Sources of the Heterogeneity Residual]\label{lem:hdecomp}
  Let Assumptions~\ref{ass:exog} and~\ref{ass:relevance} hold and suppose $\E[\Delta^2] + \E[X^2\Delta^2] < \infty$. Define $\bar{\Delta}_W := \E[\Delta \mid W]$ and the selection-on-gains kernel $\kappa(Z,W) := \Cov(X, \Delta \mid Z, W)$, with $\tilde{\kappa} := \kappa - \E[\kappa \mid W]$. Then:
  \begin{itemize}
    \item[(i)] $\displaystyle \mathcal{H}
            \;=\; \big(\bar{\Delta}_W - \beta_0\big)\,\tilde{g}_X \;+\; \tilde{\kappa}$;
    \item[(ii)] if $\kappa = 0$ a.s.\ (in particular if
          $X \indep \Delta \mid (Z,W)$), then, with $v(W) := \E[\tilde{g}_X^2 \mid W]$,
          \[
            \beta_0 \;=\; \frac{\E\big[v(W)\,\bar{\Delta}_W\big]}{\E[v(W)]},
            \qquad
            \E[\mathcal{H}^2] \;=\; \E\big[v(W)\,\big(\bar{\Delta}_W - \beta_0\big)^2\big];
          \]
    \item[(iii)] consequently, the condition of Definition~\ref{def:esshet} holds, and the
          naive score is non-orthogonal, whenever the conditional average effect $\bar{\Delta}_W$ varies across covariate strata that the instrument moves ($v(W) > 0$), even in the complete absence of selection on gains.
  \end{itemize}
\end{lemma}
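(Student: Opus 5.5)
The plan is to compute $\E[X\Delta\mid Z,W]$ directly and then residualize on $W$. By definition of $\kappa$,
\[
\E[X\Delta\mid Z,W]=g_X\,\E[\Delta\mid Z,W]+\kappa(Z,W).
\]
Assumption~\ref{ass:exog} gives $Z\indep\Delta\mid W$, so $\E[\Delta\mid Z,W]=\bar\Delta_W$. Hence $\E[X\Delta\mid Z,W]=\bar\Delta_W\,g_X+\kappa$. The moment condition $\E[X^2\Delta^2]<\infty$ makes these conditional moments well defined.

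Taking $\E[\cdot\mid W]$ and using $\E[g_X\mid W]=m_X$ gives $\E[X\Delta\mid W]=\bar\Delta_W m_X+\E[\kappa\mid W]$. Subtracting,
\[
\E[X\Delta\mid Z,W]-\E[X\Delta\mid W]=\bar\Delta_W\,\tilde g_X+\tilde\kappa,
\]
where $\tilde g_X=g_X-m_X=V$. Substituting into the definition \eqref{eq:Hdef} yields $\mathcal H=(\bar\Delta_W-\beta_0)\tilde g_X+\tilde\kappa$, which is part (i).

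For part (ii), set $\kappa=0$, so that $\mathcal H=(\bar\Delta_W-\beta_0)\tilde g_X$. Lemma~\ref{lem:hgeom}(ii) gives $\E[\mathcal H\tilde g_X]=0$. Conditioning on $W$, this reads
\[
\E\big[(\bar\Delta_W-\beta_0)\,v(W)\big]=0,
\]
which rearranges to the weighted-average formula for $\beta_0$. Its denominator is $\E[v(W)]=J_0>0$ by relevance. One can cross-check this against \eqref{eq:beta0}, since $\E[(Y-l_Y)V]=\E[(q_Y-l_Y)V]=\E[\bar\Delta_W V^2]$. Next, $\E[\mathcal H^2]=\E[(\bar\Delta_W-\beta_0)^2\tilde g_X^2]$, and conditioning on $W$ gives $\E[v(W)(\bar\Delta_W-\beta_0)^2]$. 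Integrability of $\bar\Delta_W^2\tilde g_X^2$ follows from $\E[X^2\Delta^2]<\infty$ via the identity $\bar\Delta_W\tilde g_X=q_Y-l_Y$, which lies in $L_2$.

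For part (iii), the formula in (ii) shows that $\E[\mathcal H^2]>0$ whenever $\bar\Delta_W\neq\beta_0$ on a $W$-set of positive probability where $v(W)>0$. Since $\beta_0$ is a weighted mean of $\bar\Delta_W$ with weights proportional to $v(W)$, this happens exactly when $\bar\Delta_W$ is non-constant on $\{v>0\}$ (weighted by $v$). Then $\mathbb P(\mathcal H\neq0)>0$, so Definition~\ref{def:esshet} holds, and Proposition~\ref{prop:nonorth} gives non-orthogonality.

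No step here is a real obstacle. The only points needing care are the integrability bookkeeping and the precise reading of ``varies across strata the instrument moves'': the accurate statement is that $\bar\Delta_W$ is not $v$-a.s. constant on $\{v>0\}$.
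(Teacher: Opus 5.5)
Your proof is correct and follows the paper's argument. Part (i) writes $\E[X\Delta\mid Z,W]=\bar\Delta_W g_X+\kappa$ and then conditions on $W$; your direct use of the tower property is a slightly cleaner version of the paper's total-covariance step. Part (ii) conditions Lemma~\ref{lem:hgeom}(ii) on $W$, and part (iii) reads positivity off the resulting formula for $\E[\mathcal{H}^2]$, with ``$\bar\Delta_W$ not a.s.\ constant on $\{v>0\}$'' being exactly the paper's own formulation.
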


\begin{proof}[Proof of Lemma~\ref{lem:hdecomp}]
  (i) By Assumption~\ref{ass:exog} ($Z \indep (Y(0),\Delta)\mid W$), for any $(Z,W)$-measurable conditioning, $\E[\Delta \mid Z,W] = \E[\Delta\mid W] = \bar\Delta_W$. Decompose the reduced-form residual using~\eqref{eq:qminusl}:
  \[
    q_Y - l_Y
    = \E[X\Delta\mid Z,W] - \E[X\Delta\mid W].
  \]
  Write $\E[X\Delta\mid Z,W] = \E[X\mid Z,W]\,\E[\Delta\mid Z,W] + \Cov(X,\Delta\mid Z,W) = g_X\,\bar\Delta_W + \kappa$, and $\E[X\Delta\mid W] = \E[X\mid W]\bar\Delta_W + \Cov(X,\Delta\mid W) = m_X\bar\Delta_W + \E[\kappa\mid W] + \Cov(\E[X\mid Z,W],\bar\Delta_W\mid W)$. Because $\bar\Delta_W$ is $W$-measurable, $\Cov(X,\Delta\mid W) = \E[\Cov(X,\Delta\mid Z,W)\mid W] + \Cov(g_X,\bar\Delta_W\mid W) = \E[\kappa\mid W] + \bar\Delta_W\cdot 0$; more directly, subtracting,
  \[
    q_Y - l_Y = (g_X - m_X)\,\bar\Delta_W + \big(\kappa - \E[\kappa\mid W]\big)
    = \bar\Delta_W\,V + \tilde\kappa .
  \]
  Then $\mathcal H = (q_Y - l_Y) - \beta_0 V = (\bar\Delta_W - \beta_0)V + \tilde\kappa$. Since $m_X = \E[X\mid W] = \E[g_X\mid W]$ by the tower property, $V = g_X - m_X = g_X - \E[g_X\mid W] = \tilde g_X$ exactly, so $\mathcal H = (\bar\Delta_W - \beta_0)\tilde g_X + \tilde\kappa$, the stated identity.

  (ii) If $\kappa \equiv 0$ then $\tilde\kappa = 0$ and $\mathcal H = (\bar\Delta_W-\beta_0)\tilde g_X$. By Lemma~\ref{lem:hgeom}(ii), $0 = \E[\mathcal H\tilde g_X] = \E[(\bar\Delta_W-\beta_0)\tilde g_X^2] = \E[(\bar\Delta_W-\beta_0)v(W)]$, using $\E[\tilde g_X^2\mid W] = v(W)$ and $W$-measurability of $\bar\Delta_W-\beta_0$. Solving gives $\beta_0 = \E[v(W)\bar\Delta_W]/\E[v(W)]$. Substitution into $\mathcal H$ gives $\E[\mathcal H^2] = \E[(\bar\Delta_W-\beta_0)^2\tilde g_X^2] = \E[v(W)(\bar\Delta_W-\beta_0)^2]$.

  (iii) By (ii), $\E[\mathcal H^2] = \E[v(W)(\bar\Delta_W-\beta_0)^2] > 0$ whenever $\bar\Delta_W$ is non-constant on $\{v(W)>0\}$. Definition~\ref{def:esshet} states $\mathbb P(\mathcal H\neq 0)>0$, equivalently $\E[\mathcal H^2]>0$, and this can therefore occur with $\kappa\equiv0$.
\end{proof}

\begin{proof}[Proof of Lemma~\ref{lem:hgeom}, part~(iv)]
  For $g' = ag + d$ with $d$ $W$-measurable, $\widetilde{g'} = g' - \E[g'\mid W] = a(g - \E[g\mid W]) + (d - d) = a\tilde g$. Hence $\E[\tilde X\widetilde{g'}] = a\E[\tilde X\tilde g]$ and $\E[\tilde Y\widetilde{g'}] = a\E[\tilde Y\tilde g]$, so $\beta_{IV}(g') = \E[\tilde Y\widetilde{g'}]/\E[\tilde X\widetilde{g'}] = (a\E[\tilde Y\tilde g])/(a\E[\tilde X\tilde g]) = \beta_{IV}(g)$, the factor $a\neq0$ cancelling. In particular $\E[\tilde X\widetilde{g'}] = a\E[\tilde X\tilde g]\neq0$.
\end{proof}

\begin{proof}[Proof of Theorem~\ref{thm:driftvalid}, parts~(i)--(ii)]
  Write $\delta := \delta_g$ and recall $\beta_{IV}(g) = \E[\tilde Y\tilde g]/\E[\tilde X\tilde g]$ with $\tilde\cdot$ the $W$-demeaning. For $g_X + \delta$, linearity of demeaning gives $\widetilde{g_X+\delta} = \tilde g_X + \tilde\delta$, and, arguing as in~\eqref{eq:eorth} (numerator/denominator inner products depend on $\delta$ only through its $(Z,W)$-measurable self),
  \[
    \E[\tilde X\,\widetilde{(g_X+\delta)}] = \E[\tilde X\tilde g_X] +
    \E[\tilde X\tilde\delta] = J_0 + \E[\tilde g_X\delta],
  \]
  since $\E[\tilde X\tilde g_X] = \E[(X-m_X)\tilde g_X] = \E[V\tilde g_X] + \E[e\tilde g_X] = \E[\tilde g_X^2] = J_0$ (the cross term $\E[e\tilde g_X]=0$ by~\eqref{eq:eorth}, and $\E[V\tilde g_X]=\E[\tilde g_X^2]=J_0$ as in the proof of Lemma~\ref{lem:hgeom}(ii)); and $\E[\tilde X\tilde\delta] = \E[\tilde g_X\delta]$ likewise. For the numerator, $\E[\tilde Y\,\widetilde{(g_X+\delta)}] = \E[\tilde Y\tilde g_X] + \E[\tilde Y\tilde\delta]$. Now $\E[\tilde Y\tilde g_X] = \beta_0 J_0$ by definition of $\beta_0$ (equivalently $\beta_{IV}(g_X)=\beta_0$), and
  \[
    \E[\tilde Y\tilde\delta] = \E[(Y-l_Y)\delta] =
    \E[q_Y\delta - l_Y\delta] = \E[(q_Y - l_Y)\delta]
    = \E[(\mathcal H + \beta_0 V)\delta] = \E[\mathcal H\delta] +
    \beta_0\E[\tilde g_X\delta],
  \]
  using $\E[(Y-l_Y)\delta]=\E[(\E[Y\mid Z,W]-l_Y)\delta]=\E[(q_Y-l_Y)\delta]$ for $(Z,W)$-measurable $\delta$, $\mathcal H = (q_Y-l_Y)-\beta_0 V$, and $\E[V\delta]=\E[\tilde g_X\delta]$ (same demeaning argument). Therefore
  \[
    \beta_{IV}(g_X+\delta) = \frac{\beta_0 J_0 + \E[\mathcal H\delta] +
      \beta_0\E[\tilde g_X\delta]}{J_0 + \E[\tilde g_X\delta]}
    = \beta_0 + \frac{\E[\mathcal H\delta]}{J_0 + \E[\tilde g_X\delta]},
  \]
  which is~\eqref{eq:driftexact}, proving (i).

  (ii) Let $D := J_0 + \E[\tilde g_X\delta]$. By Cauchy--Schwarz $|\E[\tilde g_X\delta]| \le \|\tilde g_X\|\,\|\delta\|$, so if $\|\delta\| < J_0/\|\tilde g_X\|$ then $D \ge J_0 - \|\tilde g_X\|\|\delta\| > 0$. Subtracting the linear term,
  \[
    \beta_{IV}(g_X+\delta) - \beta_0 - J_0^{-1}\E[\mathcal H\delta]
    = \E[\mathcal H\delta]\Big(\frac1D - \frac1{J_0}\Big)
    = -\,\frac{\E[\mathcal H\delta]\,\E[\tilde g_X\delta]}{J_0\,D}.
  \]
  Bounding $|\E[\mathcal H\delta]| \le \|\mathcal H\|\|\delta\|$, $|\E[\tilde g_X\delta]| \le \|\tilde g_X\|\|\delta\|$ and $D \ge J_0 - \|\tilde g_X\|\|\delta\|$ gives, whenever $\|\delta\| < J_0/\|\tilde g_X\|$, the second-order remainder bound
  \begin{equation}\label{eq:driftremainder}
    \Big| \beta_{IV}(g_X + \delta) - \beta_0 - J_0^{-1}\E[\mathcal H\delta] \Big|
    \;\le\; \frac{\|\mathcal H\|\,\|\tilde g_X\|}{J_0\big(J_0 - \|\tilde g_X\|\,\|\delta\|\big)}\;\|\delta\|^2 ;
  \end{equation}
  since this is $O(\|\delta\|^2)$, $\beta_{IV}$ is Fr\'echet differentiable at $g_X$ with derivative $\delta\mapsto J_0^{-1}\E[\mathcal H\delta]$ (a bounded linear functional by Cauchy--Schwarz).

  The null-space form in~(ii) follows by substituting Lemma~\ref{lem:hgeom}(iii), $\E[\mathcal H\delta]=\E[\mathcal H(I-\Pi_{\mathcal G})\delta]$, into \eqref{eq:driftexact}. More explicitly, write $\delta=r+c\tilde g_X+d$, where $r:=(I-\Pi_{\mathcal G})\delta$, $c:=\E[\tilde g_X\delta]/J_0$, and $d\in L_2(\sigma(W))$. Then
  \[
    \beta_{IV}(g_X+\delta)-\beta_0
    =\frac{\E[\mathcal H r]}{(1+c)J_0},
    \qquad c\neq-1.
  \]
  Thus the numerator and the Fr\'echet derivative depend only on $r$, while the exact finite drift also depends on the retained signal scale $1+c$. For $\delta\in\mathcal G$, $r=0$, so the drift vanishes whenever $c\neq-1$; when $c=-1$, the perturbed denominator is zero and $\beta_{IV}(g_X+\delta)$ is undefined.
\end{proof}

\subsection*{The exact second-order expansions (Lemma~\ref{lem:exact})}

\begin{lemma}[Exact Second-Order Moment Expansions]\label{lem:exact}
  Let Assumptions~\ref{ass:exog} and~\ref{ass:relevance} hold with $g = g_X$. For any admissible nuisance value $\eta = (l, m, q, g)$ with square-integrable coordinates of the correct measurability, writing $\Delta\eta := \eta - \eta_0$ componentwise,
  \begin{align*}
    \E\big[\psi_{\mathrm{naive}}(O;\beta_0,\eta)\big]
     & \;=\; \E[\mathcal{H}\,\Delta g_X]
    \;+\; \E\big[(-\Delta l_Y + \beta_0\,\Delta m_X)(\Delta g_X - \Delta m_X)\big], \\
    \E\big[\psi_{\mathrm{CRC}}(O;\beta_0,\eta)\big]
     & \;=\; \E\big[\Delta l_Y\,\Delta m_X - \Delta g_X\,\Delta q_Y\big]
    \;-\; \beta_0\,\E\big[(\Delta m_X)^2 - (\Delta g_X)^2\big],
  \end{align*}
  Both identities hold for every admissible $\eta$, not only near $\eta_0$. The naive expansion contains the linear term $\E[\mathcal{H}\,\Delta g_X]$ from Proposition~\ref{prop:nonorth}; the CRC expansion is quadratic in $\eta-\eta_0$, as Theorem~\ref{thm:fixedtarget}(i) requires.
\end{lemma}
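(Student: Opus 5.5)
The plan is a direct algebraic expansion of each moment. Write every nuisance as truth plus error, $\hat l=l_Y+\Delta l_Y$, and so on. Expand the products exactly, which is possible because both scores are at most bilinear in the nuisances. Then kill terms using the conditional-mean identities from the Conventions: \eqref{eq:UW}, \eqref{eq:UZW}, \eqref{eq:eorth}, and $\E[V\mid W]=0$.

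Error directions inherit measurability. $\Delta l_Y$ and $\Delta m_X$ are $W$-measurable, while $\Delta q_Y$ and $\Delta g_X$ are $(Z,W)$-measurable. No smallness is used anywhere, so the identities hold globally.

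\textbf{Naive moment.} Write $\psi_{\mathrm{naive}}(O;\beta_0,\eta)=(U-\Delta l_Y+\beta_0\Delta m_X)(V+\Delta g_X-\Delta m_X)$.
\begin{itemize}
\item[(a)] $\E[UV]=0$ by the definition of $\beta_0$, as in the proof of Lemma~\ref{lem:hgeom}(ii).
\item[(b)] $\E[U\Delta m_X]=0$ by \eqref{eq:UW}.
\item[(c)] $\E[U\Delta g_X]=\E[\mathcal H\Delta g_X]$ by conditioning on $(Z,W)$ and using \eqref{eq:UZW}.
\item[(d)] $\E[(-\Delta l_Y+\beta_0\Delta m_X)V]=0$, since the first factor is $W$-measurable and $\E[V\mid W]=0$.
\end{itemize}
What remains is exactly the quadratic product term. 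This reproduces the computation already sketched after Theorem~\ref{thm:driftvalid}, now without discarding the remainder.

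\textbf{CRC moment.} Handle $\psi_Y$ and $\psi_X$ separately.

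For $\psi_Y$, expand $(Y-l_Y-\Delta l_Y)(V+\Delta g_X-\Delta m_X)+(e-\Delta g_X)(q_Y-l_Y+\Delta q_Y-\Delta l_Y)$.
\begin{itemize}
\item[(a)] The zeroth-order terms give $\theta_Y+0$, since $\E[e(q_Y-l_Y)]=0$.
\item[(b)] The linear terms cancel pairwise. $\E[(Y-l_Y)\Delta g_X]=\E[(q_Y-l_Y)\Delta g_X]$ cancels against $-\E[\Delta g_X(q_Y-l_Y)]$. The terms $\E[(Y-l_Y)\Delta m_X]$, $\E[\Delta l_Y V]$, $\E[e\Delta q_Y]$ and $\E[e\Delta l_Y]$ each vanish by \eqref{eq:UW}-type identities or by \eqref{eq:eorth}.
\item[(c)] The quadratic terms are $-\E[\Delta l_Y\Delta g_X]+\E[\Delta l_Y\Delta m_X]-\E[\Delta g_X\Delta q_Y]+\E[\Delta g_X\Delta l_Y]$. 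The $\Delta l_Y\Delta g_X$ products cancel, leaving $\E[\Delta l_Y\Delta m_X-\Delta g_X\Delta q_Y]$.
\end{itemize}

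For $\psi_X=(X-m)(g-m)+(X-g)(g-m)=(g-m)(2X-g-m)$, substitute the errors.
\begin{itemize}
\item[(a)] The zeroth-order term is $J_0$.
\item[(b)] The linear terms vanish, matching the orthogonality calculation of Theorem~\ref{thm:fixedtarget}(i).
\item[(c)] The quadratic part is $-(\Delta g_X-\Delta m_X)(\Delta g_X+\Delta m_X)=(\Delta m_X)^2-(\Delta g_X)^2$.
\end{itemize}
Combining with $\theta_Y-\beta_0J_0=0$ gives the stated CRC identity.

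\textbf{Main obstacle.} The main obstacle is bookkeeping rather than ideas. The cross terms in $\psi_Y$, where $\pm\Delta l_Y\Delta g_X$ cancels, and the sign pattern in $\psi_X$ are easy to get wrong. I would verify the linear terms against the Gâteaux derivatives already computed in Proposition~\ref{prop:nonorth} and Theorem~\ref{thm:fixedtarget}(i), since an exact quadratic expansion must agree with them at first order.

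Integrability of every product follows from Assumption-type square integrability of the coordinates together with $Y,X\in L_2$, using Cauchy--Schwarz.
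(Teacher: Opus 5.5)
Your proposal is correct and follows the paper's proof essentially line by line. You use the same truth-plus-error substitution and the same four-term expansion of the naive score, with terms removed by $\E[UV]=0$, \eqref{eq:UW}, \eqref{eq:UZW} and $\E[V\mid W]=0$. You also expand $\psi_Y$ and $\psi_X=(g-m)(2X-g-m)$ separately, with the same cancellations of $\pm\E[(q_Y-l_Y)\Delta g_X]$ and $\pm\E[\Delta l_Y\Delta g_X]$. The only difference is cosmetic: you group terms by order (zeroth, linear, quadratic) rather than by inner product, which makes the match with the G\^{a}teaux derivatives of Proposition~\ref{prop:nonorth} and Theorem~\ref{thm:fixedtarget}(i) explicit without any circularity.
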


\begin{proof}[Proof of Lemma~\ref{lem:exact}]
  Write $\Delta l := \Delta l_Y$, $\Delta m := \Delta m_X$, $\Delta q := \Delta q_Y$, $\Delta g := \Delta g_X$ for the coordinates of $\eta - \eta_0$, and recall $U, V, e$ and the residuals $R := Y - l_Y$ (with $\E[R\mid W]=0$, $\E[R\mid Z,W] = q_Y - l_Y =: P$) and $\mathcal H = P - \beta_0 V$.

  Naive score. At $\beta_0$,
  \[
    Y - l - \beta_0(X - m) = U - \Delta l + \beta_0\Delta m,
    \qquad
    g - m = V + \Delta g - \Delta m,
  \]
  so $\psi_{\mathrm{naive}}(O;\beta_0,\eta) = (U - \Delta l + \beta_0\Delta m)(V + \Delta g - \Delta m)$. Take expectations and expand into four inner products:
  \begin{itemize}
    \item $\E[UV] = 0$ (proof of Lemma~\ref{lem:hgeom}(ii));
    \item $\E[U(\Delta g - \Delta m)] = \E[\mathcal H\,\Delta g]$: indeed
          $\E[U\Delta g] = \E[\E[U\mid Z,W]\Delta g] = \E[\mathcal H\Delta g]$ by~\eqref{eq:UZW}, and $\E[U\Delta m] = \E[\E[U\mid W]\Delta m] = 0$ by~\eqref{eq:UW};
    \item $\E[(-\Delta l + \beta_0\Delta m)V] =
            \E[(-\Delta l + \beta_0\Delta m)\,\E[V\mid W]] = 0$, since $\E[V\mid W] = \E[g_X\mid W] - m_X = 0$ (as $m_X = \E[X\mid W] = \E[\E[X\mid Z,W]\mid W] = \E[g_X\mid W]$) and $-\Delta l+\beta_0\Delta m$ is $W$-measurable;
    \item the remaining product $\E[(-\Delta l + \beta_0\Delta m)
              (\Delta g - \Delta m)]$ is retained.
  \end{itemize}
  Summing the four terms proves the first identity.

  CRC score. We compute $\E[\psi_Y(O;\eta)]$ and $\E[\psi_X(O;\eta)]$ separately (neither depends on $\beta$). Using $Y - l = R - \Delta l$, $g - m = V + \Delta g - \Delta m$, $X - g = e - \Delta g$, $q - l = P + \Delta q - \Delta l$:
  \begin{align*}
    \E[(Y-l)(g-m)]
     & = \E[(R-\Delta l)(V + \Delta g - \Delta m)]                                  \\
     & = \underbrace{\E[RV]}_{=\beta_0 J_0}
    + \underbrace{\E[R(\Delta g - \Delta m)]}_{=\E[P\Delta g]}
    - \underbrace{\E[\Delta l\,V]}_{=0}
    - \E[\Delta l(\Delta g - \Delta m)]                                             \\
     & = \beta_0 J_0 + \E[P\Delta g] - \E[\Delta l\Delta g] + \E[\Delta l\Delta m],
  \end{align*}
  where $\E[RV] = \E[(Y-l_Y)V] = \beta_0 J_0$ by~\eqref{eq:beta0}; $\E[R\Delta g] = \E[\E[R\mid Z,W]\Delta g] = \E[P\Delta g]$ and $\E[R\Delta m] = \E[\E[R\mid W]\Delta m] = 0$; and $\E[\Delta l V] = \E[\Delta l\,\E[V\mid W]] = 0$. Next,
  \begin{align*}
    \E[(X-g)(q-l)]
     & = \E[(e - \Delta g)(P + \Delta q - \Delta l)]                             \\
     & = \underbrace{\E[eP]}_{=0} + \underbrace{\E[e(\Delta q - \Delta l)]}_{=0}
    - \E[\Delta g\,P] - \E[\Delta g(\Delta q - \Delta l)]                        \\
     & = -\E[P\Delta g] - \E[\Delta g\Delta q] + \E[\Delta g\Delta l],
  \end{align*}
  the first two terms vanishing by~\eqref{eq:eorth} ($P, \Delta q, \Delta l$ are $(Z,W)$-measurable). Adding, the $\E[P\Delta g]$ terms cancel and $-\E[\Delta l\Delta g] + \E[\Delta g\Delta l] = 0$, leaving
  \[
    \E[\psi_Y(O;\eta)] = \beta_0 J_0 + \E[\Delta l\Delta m - \Delta g\Delta q].
  \]
  For $\psi_X = (g-m)(2X - m - g)$, use $2X - m - g = 2e + V - \Delta m - \Delta g$ (since $2X - m_X - g_X = 2e + V$):
  \begin{align*}
    \E[\psi_X(O;\eta)]
     & = \E[(V + \Delta g - \Delta m)(2e + V - \Delta m - \Delta g)] \\
     & = 2\,\underbrace{\E[e(V + \Delta g - \Delta m)]}_{=0}
    + \E[(V + \Delta g - \Delta m)(V - \Delta m - \Delta g)]         \\
     & = \E[V^2] - 2\,\underbrace{\E[V\Delta m]}_{=0}
    - \E[(\Delta g)^2] + \E[(\Delta m)^2]
    = J_0 + \E[(\Delta m)^2 - (\Delta g)^2],
  \end{align*}
  where $\E[e(\cdot)] = 0$ by~\eqref{eq:eorth} ($V,\Delta g,\Delta m$ are $(Z,W)$-measurable), the cross terms $-\E[V\Delta g] + \E[\Delta g V]$ and $-\E[\Delta m V]-\E[V\Delta m]$ combine as shown, and $\E[V\Delta m] = \E[\Delta m\,\E[V\mid W]] = 0$. Therefore
  \[
    \E[\psi_{\mathrm{CRC}}(O;\beta_0,\eta)]
    = \E[\psi_Y] - \beta_0\E[\psi_X]
    = \E[\Delta l\Delta m - \Delta g\Delta q]
    - \beta_0\,\E[(\Delta m)^2 - (\Delta g)^2],
  \]
  which proves the second identity.
\end{proof}

\begin{proof}[Proof of Theorem~\ref{thm:driftvalid}, parts~(iii)--(iv)]
  Write $\psi := \psi_{\mathrm{naive}}$ and let $j(O;\eta) := (X-m)(g-m)$ be its negative $\beta$-slope. Affinity gives $\psi(O;\beta,\eta) = \psi(O;\beta',\eta) - (\beta-\beta')\,j(O;\eta)$ for all $\beta,\beta'$. Also, since $m_X = \E[g_X\mid W]$,
  \begin{equation}\label{eq:Vtildeg}
    V = g_X - m_X = g_X - \E[g_X\mid W] = \tilde g_X,
  \end{equation}
  so $\E[V\Delta g_k] = \E[\tilde g_X\Delta g_k]$. The estimator solves $N^{-1}\sum_k\sum_{i\in\mathcal I_k}\psi(O_i;\hat\beta_{\mathrm{naive}}, \hat\eta_k)=0$; by affinity,
  \begin{equation}\label{eq:betahat}
    \hat\beta_{\mathrm{naive}} = \frac{\hat A}{\hat J_{\mathrm{nv}}},\quad
    \hat A := \frac1N\sum_k\sum_{i\in\mathcal I_k}
    (Y_i-\hat l_{k,i})(\hat g_{k,i}-\hat m_{k,i}),\quad
    \hat J_{\mathrm{nv}} := \frac1N\sum_k\sum_{i\in\mathcal I_k}
    (X_i-\hat m_{k,i})(\hat g_{k,i}-\hat m_{k,i}).
  \end{equation}

  Step 1 (pseudo-true values by fold and after aggregation). Condition on $\mathcal D_{-k}$ and apply Lemma~\ref{lem:exact} to a new observation. This is valid because $\hat\eta_k$ is $\mathcal D_{-k}$-measurable and the identity holds for every fixed admissible $\eta$. To shorten the fold notation, set $\Delta l_k:=\hat l_{Y,k}-l_Y$, $\Delta m_k:=\hat m_{X,k}-m_X$, $\Delta q_k:=\hat q_{Y,k}-q_Y$, and $\Delta g_k:=\hat g_{X,k}-g_X$. Write $D_k := \E[\mathcal H\,\Delta g_k\mid\mathcal D_{-k}]$ and $r_{2,k} := \E[(-\Delta l_k+\beta_0\Delta m_k)(\Delta g_k-\Delta m_k) \mid\mathcal D_{-k}]$. The analogous calculation for $j$ in the proof of Lemma~\ref{lem:exact} gives
  \[
    B_k = J_0 + s_k, \qquad
    s_k := \E[\tilde g_X\Delta g_k\mid\mathcal D_{-k}]
    - \E[\Delta m_k(\Delta g_k-\Delta m_k)\mid\mathcal D_{-k}],
  \]
  and the conditional identities are
  \[
    A_k - \beta_0 B_k
    = \E[\psi(O;\beta_0,\hat\eta_k)\mid\mathcal D_{-k}]
    = D_k + r_{2,k},
    \qquad
    \beta_k^* - \beta_0 = \frac{D_k + r_{2,k}}{J_0 + s_k}.
  \]
  By Cauchy--Schwarz, $|D_k|\le\|\mathcal H\|\,\rho_{g,k}$, $|s_k|\le\|\tilde g_X\|\,\rho_{g,k}+\rho_{m,k}(\rho_{g,k}+\rho_{m,k})$, and $|r_{2,k}|\le(\rho_{l,k}+|\beta_0|\rho_{m,k})(\rho_{g,k}+\rho_{m,k})$. Assumption~\ref{ass:ratesP} gives $\max_k|s_k| = O_{\mathbb P}(\rho_N)= o_{\mathbb P}(1)$. The arithmetic--geometric inequality $\rho_m\rho_g\le\tfrac12(\rho_m^2+\rho_g^2)$ and the bound $\rho_{l,k}(\rho_{g,k}+\rho_{m,k})\le(\rho_{l,k}+\rho_{q,k}) (\rho_{m,k}+\rho_{g,k})$ give
  \begin{equation}\label{eq:r2bound}
    \max_k|r_{2,k}| \le (1+|\beta_0|)\Big[(\rho_{l,k}+\rho_{q,k})
      (\rho_{m,k}+\rho_{g,k})+\tfrac32(\rho_{m,k}^2+\rho_{g,k}^2)\Big]_{\max k}
    = o_{\mathbb P}(N^{-1/2}).
  \end{equation}
  On the event $\mathcal E_N := \{\max_k|s_k|\le J_0/2\}$, which has $\mathbb P(\mathcal E_N)\to1$, each $B_k\ge J_0/2>0$, so $\beta_k^*$ and the weights $w_k = n_kB_k/\sum_j n_jB_j$ are well defined, and
  \[
    \beta_k^* - \beta_0 - J_0^{-1}D_k
    = \frac{r_{2,k}}{J_0+s_k} - \frac{D_k\,s_k}{J_0(J_0+s_k)} .
  \]
  On $\mathcal E_N$ the first term is $\le (2/J_0)|r_{2,k}|= o_{\mathbb P}(N^{-1/2})$; the second is, using $|D_ks_k|\le \|\mathcal H\|\rho_{g,k}(\|\tilde g_X\|\rho_{g,k}+\rho_{m,k}\rho_{g,k}+ \rho_{m,k}^2)\le C(\rho_{g,k}^2+\rho_{m,k}^2)$ (leading term $\|\mathcal H\|\|\tilde g_X\|\rho_{g,k}^2$, remainder higher order), $\le(2/J_0^2)|D_ks_k|= O_{\mathbb P}(\rho_{m,k}^2+\rho_{g,k}^2) = o_{\mathbb P}(N^{-1/2})$. Hence
  \begin{equation}\label{eq:foldpt}
    \max_k\big|\beta_k^* - \beta_0 - J_0^{-1}D_k\big|
    = o_{\mathbb P}(N^{-1/2}),
    \qquad
    \max_k|\beta_k^*-\beta_0| = O_{\mathbb P}(\rho_{g,N})+o_{\mathbb P}(N^{-1/2}),
  \end{equation}
  with $\rho_{g,N}:=\max_k\rho_{g,k}=o_{\mathbb P}(N^{-1/4})$. Since $\sum_kw_k=1$, $\beta_N^*-\beta_0=\sum_kw_k(\beta_k^*-\beta_0) =J_0^{-1}\sum_kw_kD_k+o_{\mathbb P}(N^{-1/2})$. With equal folds $n_k=N/K$, $w_k=B_k/\sum_jB_j$ and $w_k-\tfrac1K = (Ks_k-\sum_j s_j)/(K\sum_jB_j)=O_{\mathbb P}(\rho_N)$ on $\mathcal E_N$; as $|D_k|\le\|\mathcal H\|\rho_{g,k}$ and $(w_k-\tfrac1K)$ depends only on $\{s_j\}$ (hence only on $\rho_{g},\rho_{m}$), $\sum_k(w_k-\tfrac1K)D_k = O_{\mathbb P} (\rho_{g,N}^2+\rho_{m,N}\rho_{g,N})=o_{\mathbb P}(N^{-1/2})$. Therefore
  \begin{equation}\label{eq:crossfitpt}
    \beta_N^*-\beta_0 = J_0^{-1}\bar D_N + o_{\mathbb P}(N^{-1/2}),
    \qquad
    |\bar D_N|\le\|\mathcal H\|\,\rho_{g,N}=o_{\mathbb P}(N^{-1/4}),
  \end{equation}
  which is the first display of part (iii). For the second display of (iii), apply Theorem~\ref{thm:driftvalid}(ii) to $\hat g_k$ (true $l_Y,m_X,q_Y$, only $g$ perturbed): $\beta_{IV}(\hat g_k)-\beta_0 = J_0^{-1}D_k+O_{\mathbb P} (\rho_{g,k}^2)=J_0^{-1}D_k+o_{\mathbb P}(N^{-1/2})$, so $\beta_{IV}(\hat g_k)$ and $\beta_k^*$ share the expansion~\eqref{eq:foldpt}; taking $w$-weighted averages, $\sum_kw_k\beta_{IV}(\hat g_k)=\beta_0+J_0^{-1}\bar D_N+o_{\mathbb P}(N^{-1/2}) =\beta_N^*+o_{\mathbb P}(N^{-1/2})$.

  Step 2 (exact affine identity for the estimator). By affinity and $\sum_k\sum_i\psi(O_i;\hat\beta_{\mathrm{naive}},\hat\eta_k)=0$,
  \begin{equation}\label{eq:step2}
    \hat\beta_{\mathrm{naive}}-\beta_N^*
    = \hat J_{\mathrm{nv}}^{-1}\,\frac1N\sum_{k}\sum_{i\in\mathcal I_k}
    \psi(O_i;\beta_N^*,\hat\eta_k),
  \end{equation}
  on $\{\hat J_{\mathrm{nv}}\neq0\}$.

  Step 3 (recentering fold by fold). Using affinity again, $\psi(O_i;\beta_N^*,\hat\eta_k)=\psi(O_i;\beta_k^*,\hat\eta_k)- (\beta_N^*-\beta_k^*)j(O_i;\hat\eta_k)$. Let $\bar j_k:=\tfrac1{n_k}\sum_{i\in\mathcal I_k}j(O_i;\hat\eta_k)$, so $\E[\bar j_k\mid\mathcal D_{-k}]=B_k$ and, since $\|j(\cdot;\hat\eta_k)\|= O_{\mathbb P}(1)$ (finite $\|j(\cdot;\eta_0)\|$ by the $L_4$ bound on $X$, plus Lemma~\ref{lem:l2cont}), $\mathrm{Var}(\bar j_k\mid\mathcal D_{-k})= O_{\mathbb P}(n_k^{-1})$, whence $\bar j_k-B_k=O_{\mathbb P}(N^{-1/2})$ by conditional Chebyshev. Then
  \[
    \frac1N\sum_k\sum_i\psi(O_i;\beta_N^*,\hat\eta_k)
    = \frac1N\sum_k\sum_i\psi(O_i;\beta_k^*,\hat\eta_k)
    - \sum_k\frac{n_k}{N}(\beta_N^*-\beta_k^*)\bar j_k,
  \]
  and the last sum equals $\sum_k\tfrac{n_k}{N}(\beta_N^*-\beta_k^*)B_k +\sum_k\tfrac{n_k}{N}(\beta_N^*-\beta_k^*)(\bar j_k-B_k)$. The first piece is $\tfrac1N\sum_kn_kB_k(\beta_N^*-\beta_k^*)=0$, because $\sum_kn_kB_k\beta_k^*=\sum_kn_kA_k=(\sum_jn_jB_j)\beta_N^*$ by the definition of $\beta_N^*$. The second piece is bounded by $\max_k|\beta_N^*-\beta_k^*|\cdot\max_k|\bar j_k-B_k|= (O_{\mathbb P}(\rho_{g,N})+o_{\mathbb P}(N^{-1/2}))\cdot O_{\mathbb P}(N^{-1/2}) =o_{\mathbb P}(N^{-1/2})$, using~\eqref{eq:foldpt} and $\rho_{g,N}=o_{\mathbb P}(1)$. Hence
  \begin{equation}\label{eq:step3}
    \frac1N\sum_k\sum_i\psi(O_i;\beta_N^*,\hat\eta_k)
    = \frac1N\sum_k\sum_i\psi(O_i;\beta_k^*,\hat\eta_k) + o_{\mathbb P}(N^{-1/2}).
  \end{equation}

  Step 4 (stochastic equicontinuity). For $i\in\mathcal I_k$ set $D_{ik}:=\psi(O_i;\beta_k^*,\hat\eta_k)-\psi(O_i;\beta_0,\eta_0)$; note $\psi(O_i;\beta_0,\eta_0)=U_iV_i$. Conditional on $\mathcal D_{-k}$ the $\{O_i\}_{i\in\mathcal I_k}$ are i.i.d.\ and $(\beta_k^*,\hat\eta_k)$ is fixed, so $\E[D_{ik}\mid\mathcal D_{-k}]=\E[\psi(O;\beta_k^*,\hat\eta_k)\mid \mathcal D_{-k}]-\E[UV]=(A_k-\beta_k^*B_k)-0=0$ by definition of $\beta_k^*$. By Lemma~\ref{lem:l2cont}, $\E[D_{ik}^2\mid\mathcal D_{-k}]=\|\psi(\cdot;\beta_k^*,\hat\eta_k)- \psi(\cdot;\beta_0,\eta_0)\|^2\le L^2(|\beta_k^*-\beta_0|+\sum_j \rho_{j,k}^{1/2})^2=:\delta_{N,k}^2=o_{\mathbb P}(1)$. Consequently $\mathrm{Var}\big(N^{-1/2}\sum_{i\in\mathcal I_k}D_{ik}\mid\mathcal D_{-k}\big) =\tfrac{n_k}{N}\mathrm{Var}(D_{ik}\mid\mathcal D_{-k})\le \tfrac1K\delta_{N,k}^2=o_{\mathbb P}(1)$, and conditional Chebyshev followed by bounded convergence gives $N^{-1/2}\sum_{i\in\mathcal I_k}D_{ik}= o_{\mathbb P}(1)$; summing over the $K$ fixed folds,
  \begin{equation}\label{eq:step4}
    \frac1N\sum_k\sum_{i\in\mathcal I_k}\psi(O_i;\beta_k^*,\hat\eta_k)
    = \frac1N\sum_{i=1}^N U_iV_i + o_{\mathbb P}(N^{-1/2}).
  \end{equation}

  Step 5 (central limit theorem and Jacobian). Chaining \eqref{eq:step2}--\eqref{eq:step4},
  \[
    \sqrt N\big(\hat\beta_{\mathrm{naive}}-\beta_N^*\big)
    = \hat J_{\mathrm{nv}}^{-1}\Big(N^{-1/2}\sum_{i=1}^N U_iV_i + o_{\mathbb P}(1)\Big).
  \]
  The summands $U_iV_i$ are i.i.d.\ with $\E[UV]=0$ and variance $\Omega_0^{\mathrm{nv}}$. Assumption~\ref{ass:regB}(i) gives finiteness because $\E[U^2V^2]\le\|U\|_{\mathbb P,4}^2\|V\|_{\mathbb P,4}^2<\infty$, and part~(iv) imposes $\Omega_0^{\mathrm{nv}}>0$. The Lindeberg--L\'evy theorem therefore gives $N^{-1/2}\sum_iU_iV_i\xrightarrow{d}\mathcal N(0,\Omega_0^{\mathrm{nv}})$. For the Jacobian, conditional on $\mathcal D_{-k}$, $\E[\bar j_k\mid\mathcal D_{-k}]=B_k=J_0+o_{\mathbb P}(1)$ and $\mathrm{Var}(\bar j_k\mid\mathcal D_{-k})=o_{\mathbb P}(1)$. Hence $\bar j_k=J_0+o_{\mathbb P}(1)$ and $\hat J_{\mathrm{nv}}=\sum_k\tfrac1K\bar j_k \xrightarrow{p}J_0$. Slutsky gives $\sqrt N(\hat\beta_{\mathrm{naive}}-\beta_N^*)\xrightarrow{d} \mathcal N(0,J_0^{-2}\Omega_0^{\mathrm{nv}})$, the claim in part~(iv).

  Step 6 (variance estimator; part~(iv)). With $\hat\psi_i:=\psi(O_i;\hat\beta_{\mathrm{naive}},\hat\eta_k)$ and $\psi_i^0:=U_iV_i$, Cauchy--Schwarz gives $\big|\tfrac1N\sum_i\hat\psi_i^2-\tfrac1N\sum_i(\psi_i^0)^2\big|\le (\tfrac1N\sum_i(\hat\psi_i-\psi_i^0)^2)^{1/2} (\tfrac1N\sum_i(\hat\psi_i+\psi_i^0)^2)^{1/2}$. For the first factor, split $\hat\psi_i-\psi_i^0=-(\hat\beta_{\mathrm{naive}}-\beta_k^*)j(O_i;\hat\eta_k) +D_{ik}$. The contribution of the first term is $\sum_k\tfrac{n_k}{N}(\hat\beta_{\mathrm{naive}}-\beta_k^*)^2\cdot \tfrac1{n_k}\sum_{i\in\mathcal I_k}j(O_i;\hat\eta_k)^2$; here $\hat\beta_{\mathrm{naive}}-\beta_k^*=(\hat\beta_{\mathrm{naive}}-\beta_N^*) +(\beta_N^*-\beta_k^*)=O_{\mathbb P}(N^{-1/2})+O_{\mathbb P}(\rho_{g,N}) =o_{\mathbb P}(1)$ and $\tfrac1{n_k}\sum_{i\in\mathcal I_k} j(O_i;\hat\eta_k)^2=O_{\mathbb P}(1)$, so the contribution is $o_{\mathbb P}(1)$. The contribution of $D_{ik}$ is $\tfrac1N\sum_k\sum_iD_{ik}^2$, whose conditional mean is $\sum_k\tfrac{n_k}{N}\E[D_{ik}^2\mid\mathcal D_{-k}]\le\tfrac1K\sum_k \delta_{N,k}^2=o_{\mathbb P}(1)$, hence $o_{\mathbb P}(1)$ by conditional Markov. Thus the first factor is $o_{\mathbb P}(1)$; the second factor is $O_{\mathbb P}(1)$ because $\tfrac1N\sum_i(\psi_i^0)^2\xrightarrow{p} \Omega_0^{\mathrm{nv}}$ (law of large numbers) and the cross term is controlled by the first factor. Therefore the plug-in variance estimator
  \begin{equation}\label{eq:naivevar}
    \hat\Omega^{\mathrm{nv}} := \frac1N\sum_i\hat\psi_i^2
    = \frac1N\sum_k\sum_{i\in\mathcal I_k}\psi_{\mathrm{naive}}(O_i;\hat\beta_{\mathrm{naive}},\hat\eta_k)^2
  \end{equation}
  satisfies $\hat\Omega^{\mathrm{nv}}\xrightarrow{p}\Omega_0^{\mathrm{nv}}$. With $\hat J_{\mathrm{nv}}\xrightarrow{p}J_0$, the studentized statistic $\sqrt N(\hat\beta_{\mathrm{naive}}-\beta_N^*)/(\hat J_{\mathrm{nv}}^{-1} \sqrt{\hat\Omega^{\mathrm{nv}}})\xrightarrow{d}\mathcal N(0,1)$, so the nominal interval covers $\beta_N^*$ with probability tending to $1-\alpha$.
\end{proof}

\begin{proof}[Proof of Corollary~\ref{cor:cover}]
  By Theorem~\ref{thm:driftvalid}(iii)--(iv) and~\eqref{eq:crossfitpt},
  \[
    \frac{\sqrt N(\hat\beta_{\mathrm{naive}}-\beta_0)}{\sigma_{\mathrm{nv}}}
    = \frac{\sqrt N(\hat\beta_{\mathrm{naive}}-\beta_N^*)}{\sigma_{\mathrm{nv}}}
    + \frac{\sqrt N\,J_0^{-1}\bar D_N}{\sigma_{\mathrm{nv}}} + o_{\mathbb P}(1)
    =: Z_N + \mu_N + o_{\mathbb P}(1),
    \qquad Z_N\xrightarrow{d}\mathcal N(0,1).
  \]
  Let $\hat\sigma_{\mathrm{nv}}^2:=\hat J_{\mathrm{nv}}^{-2}\hat\Omega^{\mathrm{nv}} \xrightarrow{p}\sigma_{\mathrm{nv}}^2$ (Step 6). Coverage of the nominal interval for $\beta_0$ equals $\mathbb P(|T_N^0|\le z_{1-\alpha/2})$ with $T_N^0:=\sqrt N(\hat\beta_{\mathrm{naive}}-\beta_0)/ \hat\sigma_{\mathrm{nv}}=(Z_N+\mu_N)(\sigma_{\mathrm{nv}}/ \hat\sigma_{\mathrm{nv}})+o_{\mathbb P}(1)$.

  (a) If $\mu_N\xrightarrow{p}0$, then $T_N^0\xrightarrow{d} \mathcal N(0,1)$ by Slutsky, and coverage $\to\mathbb P(|\mathcal N(0,1)|\le z_{1-\alpha/2})=1-\alpha$.

  (b) If $\mu_N\xrightarrow{p}\mu$ finite, then $\sigma_{\mathrm{nv}}/\hat\sigma_{\mathrm{nv}}\xrightarrow{p}1$ and $T_N^0=Z_N+\mu_N+o_{\mathbb P}(1)\xrightarrow{d}\mathcal N(\mu,1)$, so coverage $\to\mathbb P(-z_{1-\alpha/2}\le\mathcal N(\mu,1)\le z_{1-\alpha/2}) =\Phi(z_{1-\alpha/2}-\mu)-\Phi(-z_{1-\alpha/2}-\mu)=:m(\mu)$. Writing $m(\mu)=\int_{-z_{1-\alpha/2}}^{z_{1-\alpha/2}}\phi(t-\mu)\,dt$, the mass a standard normal places on a fixed-width interval centered at $-\mu$ is, by symmetry and unimodality of $\phi$, strictly maximized at $\mu=0$; hence $m(\mu)<m(0)=1-\alpha$ for every $\mu\neq0$.

  (c) If $|\mu_N|\xrightarrow{p}\infty$, then $|T_N^0|\ge (|\mu_N|-|Z_N|-o_{\mathbb P}(1))\cdot(\sigma_{\mathrm{nv}}/ \hat\sigma_{\mathrm{nv}})\xrightarrow{p}\infty$ since $Z_N= O_{\mathbb P}(1)$, so coverage $\to0$.

  Finally, $\mu_N\xrightarrow{p}0\iff\sqrt N\bar D_N \xrightarrow{p}0$ because $J_0,\sigma_{\mathrm{nv}}$ are fixed and positive. This condition makes the average realized loading $\bar D_N=\sum_k(n_k/N)\E[\mathcal H\Delta g_k\mid \mathcal D_{-k}]$ equal to $o_{\mathbb P}(N^{-1/2})$.
\end{proof}

\begin{proof}[Empirical-process supplement for Theorems~\ref{thm:driftvalid} and~\ref{thm:fixedtarget}] If every $\rho_{j,k}=o_{\mathbb P}(N^{-1/4})$, then $\rho_{m,k}^2+\rho_{g,k}^2=o_{\mathbb P}(N^{-1/2})$ and $(\rho_{l,k}+\rho_{q,k})(\rho_{m,k}+\rho_{g,k})=o_{\mathbb P}(N^{-1/2})$. Thus the single rate implies Assumption~\ref{ass:ratesP}.

  Expansion following Theorem~\ref{thm:driftvalid}. Combining parts~(iii)--(iv), \eqref{eq:crossfitpt}, and Step~5 gives $\hat\beta_{\mathrm{naive}}-\beta_0=(\hat\beta_{\mathrm{naive}}-\beta_N^*)+(\beta_N^*-\beta_0)=J_0^{-1}N^{-1}\sum_iU_iV_i+J_0^{-1}\bar D_N+o_{\mathbb P}(N^{-1/2})$. Here $\bar D_N=\sum_k(n_k/N)\E[\mathcal H\Delta g_k\mid\mathcal D_{-k}]$ is the cross-fitted counterpart of $\E[\mathcal H\Delta g_X]$ and $\psi_{\mathrm{naive}}(O_i;\beta_0,\eta_0)=U_iV_i$. This gives the expansion following the theorem, with Step~4 supplying the empirical-process argument.

  Fixed-target estimator. Apply Steps~1--6 with $\psi_{\mathrm{CRC}}$ in place of $\psi_{\mathrm{naive}}$, the negative $\beta$-slope $j_{\mathrm{CRC}}(O;\eta):=(X-m)(g-m)+(X-g)(g-m)=\psi_X(O;\eta)$, and the fold pseudo-true values $\beta_k^{\mathrm{crc},*}$. Lemma~\ref{lem:exact} gives $\E[\psi_{\mathrm{CRC}}(O;\beta_0,\hat\eta_k)\mid \mathcal D_{-k}]=R_{Y,k}-\beta_0R_{X,k}$, where $R_{Y,k}=\E[\Delta l_k\Delta m_k-\Delta g_k\Delta q_k\mid\mathcal D_{-k}]$ and $R_{X,k}=\E[(\Delta m_k)^2-(\Delta g_k)^2\mid\mathcal D_{-k}]$. Hence $\beta_k^{\mathrm{crc},*}-\beta_0=(R_{Y,k}-\beta_0R_{X,k})/(J_0+R_{X,k})$. Cauchy--Schwarz gives $|R_{Y,k}|\le\rho_{l,k}\rho_{m,k}+\rho_{g,k}\rho_{q,k}$ and $|R_{X,k}|\le\rho_{m,k}^2+\rho_{g,k}^2$. The bound $\rho_{l,k}\rho_{m,k}+\rho_{g,k}\rho_{q,k}\le(\rho_{l,k}+\rho_{q,k})(\rho_{m,k}+\rho_{g,k})$ and Assumption~\ref{ass:ratesP} make both remainders $o_{\mathbb P}(N^{-1/2})$. Therefore $\beta_k^{\mathrm{crc},*}-\beta_0=o_{\mathbb P}(N^{-1/2})$ for every fold and $\beta_N^{\mathrm{crc},*}-\beta_0=o_{\mathbb P}(N^{-1/2})$. There is no linear drift because $\psi_{\mathrm{CRC}}$ is Neyman orthogonal. Steps~2--6 then give $\sqrt N(\hat\beta_{\mathrm{CRC}}-\beta_0)\xrightarrow{d}\mathcal N(0,J_0^{-2}\Omega_0)$, where $\Omega_0=\E[\psi_{\mathrm{CRC}}(O;\beta_0,\eta_0)^2]$, and $\hat\Omega\xrightarrow{p}\Omega_0$. The empirical CRC score slope used in this argument converges to $J_0$. The estimator $\hat J=N^{-1}\sum_i(\hat g_{X,i}-\hat m_{X,i})^2$ stated in Theorem~\ref{thm:fixedtarget}(ii) has the same limit by the cross-fitted law of large numbers and $L_2$ consistency of $\hat g_X-\hat m_X$. Finally, $R_N=R_Y-\beta_0R_X$ is the population counterpart of $R_{Y,k}-\beta_0R_{X,k}$.
\end{proof}

\subsection*{Proofs for Section~\ref{sec:diagnosticsbias}: Worst-case bias over a realization ball}

\begin{proof}[Proof of Proposition~\ref{prop:worst}]
  (i) Upper bound. By Lemma~\ref{lem:exact} and Cauchy--Schwarz, for $\eta\in\mathcal T(r)$,
  \[
    \big|\E[\psi_{\mathrm{naive}}(O;\beta_0,\eta)]\big|
    \le |\E[\mathcal H\Delta g]| + |\E[(-\Delta l+\beta_0\Delta m)
      (\Delta g-\Delta m)]|
    \le \|\mathcal H\|\,r_g + (r_l+|\beta_0|r_m)(r_g+r_m),
  \]
  using $\|\Delta g\|\le r_g$, $\|{-\Delta l+\beta_0\Delta m}\|\le r_l+|\beta_0| r_m$ and $\|\Delta g-\Delta m\|\le r_g+r_m$.

  Lower bound. Take $\eta^\star=(l_Y,m_X,q_Y,g_X+r_g\mathcal H/ \|\mathcal H\|)$, so $\Delta l=\Delta m=\Delta q=0$ and $\Delta g=r_g\mathcal H/\|\mathcal H\|$. Then $\|\Delta g\|=r_g\le r_g$ and $\|g_X+\Delta g\|_\infty\le\|g_X\|_\infty+r_g\|\mathcal H\|_\infty/ \|\mathcal H\|\le\|g_X\|_\infty+\|\mathcal H\|_\infty/\|\mathcal H\|\le\bar C$ (as $r_g\le1$), so $\eta^\star\in\mathcal T(r)$. By Lemma~\ref{lem:exact}, the second term vanishes and $\E[\psi_{\mathrm{naive}}(O;\beta_0,\eta^\star)]=\E[\mathcal H\Delta g]= (r_g/\|\mathcal H\|)\E[\mathcal H^2]=r_g\|\mathcal H\|$. Hence the supremum is at least $r_g\|\mathcal H\|$.

  (ii) By Lemma~\ref{lem:exact} and Cauchy--Schwarz, $|\E[\psi_{\mathrm{CRC}}(O;\beta_0,\eta)]|\le|\E[\Delta l\Delta m]|+ |\E[\Delta g\Delta q]|+|\beta_0|(|\E[(\Delta m)^2]|+|\E[(\Delta g)^2]|) \le r_lr_m+r_gr_q+|\beta_0|(r_m^2+r_g^2)$.

  Finally, along a sequence with $\sqrt N r_g\to\infty$, the naive bound $r_g\|\mathcal H\|$ satisfies $\sqrt N r_g\|\mathcal H\|\to\infty$ whenever $\|\mathcal H\|>0$ (Definition~\ref{def:esshet}), while under Assumption~\ref{ass:ratesP} the CRC bound is $o(N^{-1/2})$.
\end{proof}

Corollary~\ref{cor:adv}, referenced from Section~\ref{sec:diagnosticsbias}, realizes the worst case with an explicit adversarial sequence.

\begin{corollary}[An Admissible Adversarial Sequence]\label{cor:adv}
  Fix $\gamma \in (1/4, 1/2)$ and define the deterministic first-stage sequence $\hat{g}_X^{(N)} := g_X + c_N\,\mathcal{H}$ with $c_N := N^{-\gamma}$, setting $(\hat{l}_Y, \hat{m}_X, \hat{q}_Y) := (l_Y, m_X, q_Y)$. This sequence satisfies Assumptions~\ref{ass:regB}--\ref{ass:ratesP} after enlarging $\bar{C}$. Cross-fitting is automatic because the sequence is data-independent. Along it, let Assumptions~\ref{ass:exog}, \ref{ass:relevance}, and~\ref{ass:regB} hold, impose the heterogeneity condition $\E[\mathcal{H}^2] > 0$ from Definition~\ref{def:esshet}, and suppose $\Omega_0^{\mathrm{nv}}>0$ and $\Omega_0>0$. Then:
  \begin{itemize}
    \item[(i)] $\beta_N^* - \beta_0 = c_N\,\E[\mathcal{H}^2]/J_0$ exactly, for
          every $N$;
    \item[(ii)] $\sqrt{N}\,\big|\hat{\beta}_{\mathrm{naive}} - \beta_0\big|
            \xrightarrow{p} \infty$ and the coverage of the naive interval for $\beta_0$ tends to $0$;
    \item[(iii)] $\sqrt{N}\,\big(\hat{\beta}_{\mathrm{CRC}} - \beta_0\big)
            \xrightarrow{d} \mathcal{N}\big(0, J_0^{-2}\Omega_0\big)$ and the CRC robust interval's coverage tends to $1-\alpha$.
  \end{itemize}
\end{corollary}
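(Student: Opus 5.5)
The plan is to verify the hypotheses of Theorem~\ref{thm:driftvalid}, Corollary~\ref{cor:cover} and Theorem~\ref{thm:fixedtarget} along the deterministic sequence, and then to compute the drift exactly.

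\emph{Admissibility.} Only the signal is perturbed, so $\rho_{l,k}=\rho_{m,k}=\rho_{q,k}=0$ and $\rho_{g,k}=c_N\|\mathcal H\|$. This rate is $o(1)$, and it is $o(N^{-1/4})$ because $\gamma>1/4$. Hence $\rho_{g,k}^2=o(N^{-1/2})$, and the product terms in Assumption~\ref{ass:ratesP}(ii) vanish identically. For the sup-norm bound I would take $\|\mathcal H\|_\infty<\infty$, which holds because $\mathcal H=(q_Y-l_Y)-\beta_0V$ is a combination of bounded nuisances. Then $\|\hat g_X^{(N)}\|_\infty\le\|g_X\|_\infty+\|\mathcal H\|_\infty$, so Assumption~\ref{ass:regB}(ii) holds after enlarging $\bar C$. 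The sequence does not depend on the data, so it is trivially independent of every fold.

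\emph{Part (i).} The nuisances are deterministic, so every $A_k$ and $B_k$ is the same population quantity. I would evaluate them with Lemma~\ref{lem:exact} applied to the direction $\Delta g=c_N\mathcal H$, with all other errors zero. The numerator-side identity gives
\[
A-\beta_0B=\E[\mathcal H\,c_N\mathcal H]=c_N\E[\mathcal H^2].
\]
For the denominator, $B=J_0+\E[\tilde g_X\,c_N\mathcal H]=J_0$ by Lemma~\ref{lem:hgeom}(ii); equivalently, this is the formula for $s_k$ with $\Delta m=0$. Dividing gives $\beta_N^*-\beta_0=c_N\E[\mathcal H^2]/J_0$ exactly, which is also formula \eqref{eq:driftexact} with $\delta_g=c_N\mathcal H$.

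\emph{Part (ii).} Here $\bar D_N=c_N\E[\mathcal H^2]$, so $\mu_N=\sqrt N\,N^{-\gamma}\E[\mathcal H^2]/(J_0\sigma_{\mathrm{nv}})$. Since $\gamma<1/2$, $\mu_N\to\infty$ deterministically, and $\Omega_0^{\mathrm{nv}}>0$ makes $\sigma_{\mathrm{nv}}$ positive and finite. Theorem~\ref{thm:driftvalid}(iv) gives $\sqrt N(\hat\beta_{\mathrm{naive}}-\beta_N^*)=O_{\mathbb P}(1)$. Adding $\sqrt N(\beta_N^*-\beta_0)\to\infty$ yields divergence, and Corollary~\ref{cor:cover}(c) then gives coverage tending to zero.

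\emph{Part (iii).} Theorem~\ref{thm:fixedtarget}(ii) applies directly, given admissibility and $\Omega_0>0$. As a check, the CRC remainder from Lemma~\ref{lem:exact} is $\beta_0c_N^2\E[\mathcal H^2]=o(N^{-1/2})$, again because $\gamma>1/4$. Consistency of $\hat J^{-2}\hat\Omega$, together with Slutsky's theorem, gives nominal coverage.

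\emph{Main obstacle.} The delicate point is the sup-norm bound on $\mathcal H$ needed for Assumption~\ref{ass:regB}(ii) after enlarging $\bar C$. I would derive it from the boundedness of $\eta_0$, which covers $q_Y$, $l_Y$, $g_X$ and $m_X$, rather than from any moment condition on $\Delta$.
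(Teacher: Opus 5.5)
Your proposal is correct and follows the paper's proof essentially line by line. It uses the same rate check, the same exact computations $B_k=J_0$ (via Lemma~\ref{lem:hgeom}(ii)) and $A_k-\beta_0B_k=c_N\E[\mathcal H^2]$ (via Lemma~\ref{lem:exact}), the same appeal to Theorem~\ref{thm:driftvalid}(iv) together with Corollary~\ref{cor:cover}(c), and the same use of Theorem~\ref{thm:fixedtarget}(ii) backed by the $O(c_N^2)$ CRC remainder. Your explicit derivation of $\|\mathcal H\|_\infty<\infty$ from the boundedness of $\eta_0$ fills in a step the paper leaves implicit when it enlarges $\bar C$.
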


\begin{proof}[Proof of Corollary~\ref{cor:adv}]
  The sequence $\hat g_X^{(N)}=g_X+c_N\mathcal H$ has $\rho_{g,k}=\|c_N\mathcal H\|=N^{-\gamma}\|\mathcal H\|=o(N^{-1/4})$ because $\gamma>1/4$, and $\rho_{l,k}=\rho_{m,k}=\rho_{q,k}=0$, so Assumption~\ref{ass:ratesP} holds. Boundedness holds after enlarging $\bar C$ to $\|g_X\|_\infty+\|\mathcal H\|_\infty$. Since the sequence is deterministic, all conditional expectations $\E[\cdot\mid\mathcal D_{-k}]$ reduce to ordinary expectations.

  (i) With $\Delta l=\Delta m=\Delta q=0$ and $\Delta g=c_N\mathcal H$, the quantities of Step~1 of the proof of Theorem~\ref{thm:driftvalid} are, for every fold, $r_{2,k}=0$, $s_k=\E[\tilde g_X\,c_N\mathcal H]=c_N\E[\mathcal H \tilde g_X]=0$ by Lemma~\ref{lem:hgeom}(ii), and $D_k=\E[\mathcal H\,c_N \mathcal H]=c_N\E[\mathcal H^2]$. Hence $B_k=J_0$ and $\beta_k^*-\beta_0= D_k/J_0=c_N\E[\mathcal H^2]/J_0$ identically across folds, so $\beta_N^*-\beta_0=c_N\E[\mathcal H^2]/J_0$ exactly for every $N$.

  (ii) By Theorem~\ref{thm:driftvalid}(iv), $\hat\beta_{\mathrm{naive}}- \beta_N^*=O_{\mathbb P}(N^{-1/2})$, so $\sqrt N(\hat\beta_{\mathrm{naive}}- \beta_0)=O_{\mathbb P}(1)+N^{1/2-\gamma}\E[\mathcal H^2]/J_0$; since $\gamma<1/2$ and $\E[\mathcal H^2]>0$, this diverges in probability. Because $\mu_N=N^{1/2-\gamma}J_0^{-1}\E[\mathcal H^2]/\sigma_{\mathrm{nv}}\to\infty$, Corollary~\ref{cor:cover}(c) gives naive coverage $\to0$.

  (iii) For the CRC score, $R_{Y}=\E[\Delta l\Delta m-\Delta g\Delta q] =0$ and $R_{X}=\E[(\Delta m)^2-(\Delta g)^2]=-c_N^2\E[\mathcal H^2]$, so the CRC pseudo-true value satisfies $\beta_N^{\mathrm{crc},*}-\beta_0= (0-\beta_0R_X)/(J_0+R_X)=\beta_0c_N^2\E[\mathcal H^2]/(J_0-c_N^2 \E[\mathcal H^2])$. As $2\gamma>1/2$, $\sqrt N c_N^2=N^{1/2-2\gamma}\to0$, so $\sqrt N(\beta_N^{\mathrm{crc},*}-\beta_0)\to0$. Theorem~\ref{thm:fixedtarget}(ii) then gives $\sqrt N (\hat\beta_{\mathrm{CRC}}-\beta_0)\xrightarrow{d}\mathcal N(0,J_0^{-2} \Omega_0)$ and asymptotic coverage $1-\alpha$.
\end{proof}

\subsection*{Proof for Section~\ref{sec:crcscore}: The score-difference identity}

\begin{proof}[Proof of Lemma~\ref{lem:scoreid}]
  From~\eqref{eq:psinaive}, $\psi_{\mathrm{naive}}(O;\beta,\eta)=(Y-l)(g-m)- \beta(X-m)(g-m)$; from~\eqref{eq:psiY}--\eqref{eq:psicrc}, $\psi_{\mathrm{CRC}}(O;\beta,\eta)=(Y-l)(g-m)+(X-g)(q-l)-\beta[(X-m)(g-m)+ (X-g)(g-m)]$. Subtracting,
  \[
    \psi_{\mathrm{CRC}}-\psi_{\mathrm{naive}}=(X-g)(q-l)-\beta(X-g)(g-m)
    =(X-g)\big[(q-l)-\beta(g-m)\big].
  \]
  At $(\beta_0,\eta_0)$ this is $e\big[(q_Y-l_Y)-\beta_0(g_X-m_X)\big]= e\,\mathcal H$ by definition of $\mathcal H$. For the zero-mean property, the bracket $f:=(q_Y-l_Y)-\beta(g_X-m_X)$ is $(Z,W)$-measurable, so $\E[(X-g_X)f]=\E[ef]=0$ by~\eqref{eq:eorth}, for every $\beta$. Finally $\E[(e\mathcal H)^2]=\E[e^2\mathcal H^2]=\E[\E[e^2\mid Z,W]\mathcal H^2]= \E[\sigma_X^2(Z,W)\mathcal H^2]$, so $e\mathcal H=0$ a.s.\ iff $\sigma_X^2\mathcal H^2=0$ a.s.
\end{proof}

\subsection*{Proof for Section~\ref{sec:aligndiag}: The Alignment Diagnostic}

\begin{proof}[Proof of Proposition~\ref{prop:align}]
  Write $e_i:=X_i-g_{X,i}$, $\Delta g_{k}:=\hat g_{X,k}-g_X$, and $\Delta_{H,k}:=\hat{\mathcal H}_k^0-\mathcal H$, where $\hat{\mathcal H}_k^0:=(\hat q_{Y,k}-\hat l_{Y,k})-\beta_0(\hat g_{X,k}- \hat m_{X,k})$ is the plug-in of $\mathcal H$ at the true $\beta_0$; thus $\Delta_{H,k}=(\Delta q_k-\Delta l_k)-\beta_0(\Delta g_k-\Delta m_k)$ is an affine combination of nuisance errors with $\|\Delta_{H,k}\|\le\rho_{q,k}+ \rho_{l,k}+|\beta_0|(\rho_{g,k}+\rho_{m,k})$, and $\mathcal H,\Delta_{H,k}$ are bounded by a constant $C(\bar C,|\beta_0|)$ (Assumption~\ref{ass:regB}(ii)). Since $\hat{\mathcal H}_k=\hat{\mathcal H}_k^0-(\hat\beta_{\mathrm{CRC}}- \beta_0)(\hat g_{X,k}-\hat m_{X,k})$,
  \begin{equation}\label{eq:alignsplit}
    \widehat{\mathcal A}_N
    = \underbrace{\frac1N\sum_k\sum_{i\in\mathcal I_k}(X_i-\hat g_{X,k,i})
      \hat{\mathcal H}_{k,i}^0}_{=:\,\mathrm{A}}
    \;-\;(\hat\beta_{\mathrm{CRC}}-\beta_0)\,
    \underbrace{\frac1N\sum_k\sum_{i\in\mathcal I_k}(X_i-\hat g_{X,k,i})
      (\hat g_{X,k,i}-\hat m_{X,k,i})}_{=:\,\mathrm{B}} .
  \end{equation}

  Term $\mathrm{B}$ (the $\beta$-leakage). The conditional mean of a summand is $\E[(e-\Delta g_k)(V+\Delta g_k-\Delta m_k)\mid\mathcal D_{-k}]= -\E[\Delta g_k V]-\E[(\Delta g_k)^2]+\E[\Delta g_k\Delta m_k]=O_{\mathbb P} (\rho_{g,k})$ (the $e$-terms vanish by~\eqref{eq:eorth}); the empirical fluctuation about it is $O_{\mathbb P}(N^{-1/2})$ by bounded conditional variance (finite via the $L_4$ bound on $X$). Hence $\mathrm{B}= O_{\mathbb P}(\rho_{g,N})+O_{\mathbb P}(N^{-1/2})=o_{\mathbb P}(1)$, and with $\hat\beta_{\mathrm{CRC}}-\beta_0=O_{\mathbb P}(N^{-1/2})$ (Theorem~\ref{thm:an}), the second term of~\eqref{eq:alignsplit} is $O_{\mathbb P}(N^{-1/2})\cdot o_{\mathbb P}(1)=o_{\mathbb P}(N^{-1/2})$.

  Term $\mathrm{A}$. Substituting $X_i-\hat g_{X,k,i}=e_i-\Delta g_{k,i}$ and $\hat{\mathcal H}_{k,i}^0=\mathcal H_i+\Delta_{H,k,i}$ and expanding,
  \[
    \mathrm{A}=\underbrace{\frac1N\sum_i e_i\mathcal H_i}_{\mathrm{A}_1}
    +\underbrace{\frac1N\sum_k\sum_i e_i\Delta_{H,k,i}}_{\mathrm{A}_2}
    -\underbrace{\frac1N\sum_k\sum_i \Delta g_{k,i}\mathcal H_i}_{\mathrm{A}_3}
    -\underbrace{\frac1N\sum_k\sum_i \Delta g_{k,i}\Delta_{H,k,i}}_{\mathrm{A}_4}.
  \]
  We multiply each by $\sqrt N$. $\mathrm{A}_1$: $\sqrt N\,\mathrm{A}_1= N^{-1/2}\sum_i e_i\mathcal H_i$, the stated leading term ($\E[e\mathcal H]=0$ by~\eqref{eq:eorth}). $\mathrm{A}_3$: for $i\in\mathcal I_k$, $\E[\Delta g_{k,i}\mathcal H_i\mid\mathcal D_{-k}]=\E[\Delta g_k\mathcal H]=D_k$, and the within-fold fluctuation has conditional variance $\le n_k^{-1} \|\mathcal H\|_\infty^2\rho_{g,k}^2$, so $\tfrac1{n_k}\sum_{i\in\mathcal I_k} \Delta g_{k,i}\mathcal H_i=D_k+O_{\mathbb P}(\rho_{g,k}N^{-1/2})$; averaging, $\sqrt N\,\mathrm{A}_3=\sqrt N\bar D_N+O_{\mathbb P}(\rho_{g,N})=\sqrt N\bar D_N +o_{\mathbb P}(1)$. $\mathrm{A}_2$: $\Delta_{H,k}$ is $(Z,W)$-measurable, so $\E[e_i\Delta_{H,k,i}\mid\mathcal D_{-k}]=\E[\E[e\mid Z,W]\Delta_{H,k}]=0$ by~\eqref{eq:eorth}; the conditional variance of $N^{-1/2}\sum_{i\in \mathcal I_k}e_i\Delta_{H,k,i}$ is $\le\tfrac1K\E[e^2\Delta_{H,k}^2\mid \mathcal D_{-k}]\le\tfrac1K\|e\|_{\mathbb P,4}^2\|\Delta_{H,k}\|_\infty \|\Delta_{H,k}\|=o_{\mathbb P}(1)$ (interpolation bound), so $\sqrt N\, \mathrm{A}_2=o_{\mathbb P}(1)$ by conditional Chebyshev. $\mathrm{A}_4$: $\E[\Delta g_{k}\Delta_{H,k}\mid\mathcal D_{-k}]\le\rho_{g,k}\|\Delta_{H,k}\| =O_{\mathbb P}\big(\rho_{g,k}(\rho_{l,k}+\rho_{q,k})+\rho_{g,k}^2+\rho_{g,k} \rho_{m,k}\big)=o_{\mathbb P}(N^{-1/2})$ under Assumption~\ref{ass:ratesP} (the first summand is dominated by $(\rho_l+\rho_q)(\rho_m+\rho_g)$, the rest by $\tfrac32(\rho_g^2+\rho_m^2)$), and the fluctuation has conditional variance $\le\tfrac1K \|\Delta_{H,k}\|_\infty^2\rho_{g,k}^2=o_{\mathbb P}(1)$, so $\sqrt N\, \mathrm{A}_4=o_{\mathbb P}(1)$. Collecting,
  \[
    \sqrt N\,\mathrm{A}=N^{-1/2}\sum_i e_i\mathcal H_i-\sqrt N\bar D_N
    +o_{\mathbb P}(1),
  \]
  and combining with the $o_{\mathbb P}(1)$ contribution of Term $\mathrm{B}$ gives part (i)'s first display. For $\widehat\Xi_N$: it equals $\tfrac1N\sum_i\phi_{\mathrm{diag}}(O_i;\hat\beta_{\mathrm{CRC}},\hat\eta_k)^2$ with $\phi_{\mathrm{diag}}(O_i;\beta_0,\eta_0)=e_i\mathcal H_i$; by the argument of Step~6 of Theorem~\ref{thm:driftvalid} applied to $\phi_{\mathrm{diag}}$ (which Lemma~\ref{lem:l2cont} covers, splitting off the $\hat\beta_{\mathrm{CRC}}$-dependence through affinity as there), $\widehat \Xi_N\xrightarrow{p}\E[(e\mathcal H)^2]=\Xi_0$.

  (ii) Under $H_0$, part (i) gives $\sqrt N\widehat{\mathcal A}_N= N^{-1/2}\sum_i e_i\mathcal H_i+o_{\mathbb P}(1)\xrightarrow{d}\mathcal N(0, \Xi_0)$ by Lindeberg--L\'evy ($\E[e\mathcal H]=0$, $\Var(e\mathcal H)=\Xi_0\in (0,\infty)$); with $\widehat\Xi_N\xrightarrow{p}\Xi_0>0$, Slutsky yields $T_N\xrightarrow{d}\mathcal N(0,1)$, so the size is $\alpha$.

  (iii) If $\sqrt N|\bar D_N|\xrightarrow{p}\infty$, the term $-\sqrt N\bar D_N$ dominates the $O_{\mathbb P}(1)$ term $N^{-1/2}\sum_i e_i\mathcal H_i$, so $|\sqrt N\widehat{\mathcal A}_N|\xrightarrow{p}\infty$; as $\widehat\Xi_N\xrightarrow{p}\Xi_0<\infty$, $|T_N| \xrightarrow{p}\infty$.

  (iv) By the CRC influence expansion (Theorem~\ref{thm:an}) and Lemma~\ref{lem:scoreid} ($\psi_{\mathrm{CRC}}(O;\beta_0,\eta_0)=UV+e\mathcal H$), $\sqrt N(\hat\beta_{\mathrm{CRC}}-\beta_0)=J_0^{-1}N^{-1/2}\sum_i(U_iV_i+ e_i\mathcal H_i)+o_{\mathbb P}(1)$; by the naive expansion of Theorem~\ref{thm:driftvalid}, $\sqrt N (\hat\beta_{\mathrm{naive}}-\beta_0)=J_0^{-1}[N^{-1/2}\sum_i U_iV_i+\sqrt N \bar D_N]+o_{\mathbb P}(1)$. Subtracting and multiplying by $J_0$,
  \[
    J_0\sqrt N(\hat\beta_{\mathrm{CRC}}-\hat\beta_{\mathrm{naive}})
    = N^{-1/2}\sum_i e_i\mathcal H_i-\sqrt N\bar D_N+o_{\mathbb P}(1)
    = \sqrt N\widehat{\mathcal A}_N+o_{\mathbb P}(1)
  \]
  by part (i). This is the Hausman form; $T_N=J_0\sqrt N (\hat\beta_{\mathrm{CRC}}-\hat\beta_{\mathrm{naive}})/\widehat\Xi_N^{1/2}+ o_{\mathbb P}(1)$, the studentized contrast.

  (Remark~\ref{rem:degen}.) If $\mathcal H\equiv0$ then $\Xi_0=0$ and $\bar D_N=0$; part (i) gives $\sqrt N\widehat{\mathcal A}_N=o_{\mathbb P}(1)$ and part (iv) gives $\hat\beta_{\mathrm{naive}}-\hat\beta_{\mathrm{CRC}}= o_{\mathbb P}(N^{-1/2})$.
\end{proof}

\subsection*{Supplementary results for Section~\ref{sec:robustsets}: The standard error channel}

A first stage can center the naive point estimator at $\beta_0$ while leaving its reported standard error inconsistent. The next assumption and proposition state this distinction.

\begin{assumption}[First-Stage Asymptotic Linearity along $\mathcal{H}$]\label{ass:fsal}
  There exists a bounded, $(Z,W)$-measurable function $a$ with $\E[a^2] < \infty$ such that, for each fold $k$,
  \begin{equation}\label{eq:condS}
    \E\big[\mathcal{H}\,\Delta g_{X,k} \mid \mathcal{D}_{-k}\big]
    \;=\; \frac{1}{N - n_k}\sum_{j \notin \mathcal{I}_k}
    \big(X_j - g_X(Z_j, W_j)\big)\,a(Z_j, W_j)
    \;+\; \varepsilon_{k,N},
    \qquad
    \sqrt{N}\,\max_k |\varepsilon_{k,N}| \xrightarrow{p} 0 .
  \end{equation}
\end{assumption}

\begin{proposition}[The Standard Error Channel]\label{prop:sech}
  Let Assumptions~\ref{ass:exog}, \ref{ass:relevance}, \ref{ass:regB}, \ref{ass:ratesP}, and~\ref{ass:fsal} hold, write $e := X - g_X$, and suppose $\Omega_S := \E\big[(UV + e\,a)^2\big] > 0$. Then:
  \begin{itemize}
    \item[(i)] $\sqrt{N}\,\big(\hat{\beta}_{\mathrm{naive}} - \beta_0\big)
            \xrightarrow{d} \mathcal{N}\big(0,\; J_0^{-2}\,\Omega_S\big)$, with $\Omega_S = \Omega_0^{\mathrm{nv}} + 2\,\E[U V e a] + \E[e^2 a^2]$;
    \item[(ii)] the naive plug-in variance estimator satisfies
          $\hat{J}_{\mathrm{nv}}^{-2}\hat{\Omega}^{\mathrm{nv}} \xrightarrow{p} J_0^{-2}\,\Omega_0^{\mathrm{nv}}$, so the true asymptotic coverage of the nominal $(1-\alpha)$ naive interval for $\beta_0$ is $2\,\Phi\big(z_{1-\alpha/2}\sqrt{\Omega_0^{\mathrm{nv}}/\Omega_S}\big) - 1$. If $\Cov(U, X \mid Z, W) = 0$ a.s., then $\E[UVea] = 0$, hence $\Omega_S = \Omega_0^{\mathrm{nv}} + \E[e^2 a^2] > \Omega_0^{\mathrm{nv}}$ whenever $\E[e^2 a^2] > 0$, and the naive interval strictly under-covers even though its centering is correct;
    \item[(iii)] if $a = \mathcal{H}$, then $\Omega_S = \Omega_0$, the
          CRC robust variance of Theorem~\ref{thm:fixedtarget}, and $\hat{\beta}_{\mathrm{naive}} - \hat{\beta}_{\mathrm{CRC}} = o_{\mathbb{P}}(N^{-1/2})$, so the first stage orthogonalizes implicitly, yet the naive sandwich still reports $J_0^{-2}\Omega_0^{\mathrm{nv}}$.
  \end{itemize}
\end{proposition}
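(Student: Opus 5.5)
The plan reuses the expansion already derived in the proof of Theorem~\ref{thm:driftvalid}(iii)--(iv), which gives
\[
\sqrt N(\hat\beta_{\mathrm{naive}}-\beta_0)=J_0^{-1}\Big[N^{-1/2}\sum_i U_iV_i+\sqrt N\,\bar D_N\Big]+o_{\mathbb P}(1).
\]
Here $\bar D_N=\sum_k(n_k/N)\E[\mathcal H\Delta g_{X,k}\mid\mathcal D_{-k}]$. Assumption~\ref{ass:fsal} turns the drift into a sample average. Substituting \eqref{eq:condS} gives
\[
\bar D_N=\sum_k\frac{n_k}{N}\,\frac{1}{N-n_k}\sum_{j\notin\mathcal I_k}e_ja_j+o_{\mathbb P}(N^{-1/2}).
\]
With equal folds, each observation $j$ lies outside exactly $K-1$ folds. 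Its total weight is therefore $(K-1)\cdot\frac1K\cdot\frac{1}{N(K-1)/K}=\frac1N$, so $\bar D_N=N^{-1}\sum_j e_ja_j+o_{\mathbb P}(N^{-1/2})$. Hence
\[
\sqrt N(\hat\beta_{\mathrm{naive}}-\beta_0)=J_0^{-1}N^{-1/2}\sum_i(U_iV_i+e_ia_i)+o_{\mathbb P}(1).
\]
The summands are i.i.d. and mean zero, since $\E[UV]=0$ and $\E[ea]=0$ by \eqref{eq:eorth}. They have finite variance by the $L_4$ bounds and the boundedness of $a$. Lindeberg--L\'evy then gives (i), and expanding the square yields the stated form of $\Omega_S$.

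For (ii), the variance-estimator argument in Step~6 of the proof of Theorem~\ref{thm:driftvalid} never used the centering of $\hat\beta_{\mathrm{naive}}$. It only needed $\hat\beta_{\mathrm{naive}}-\beta_k^*=o_{\mathbb P}(1)$, which still holds, so $\hat J_{\mathrm{nv}}^{-2}\hat\Omega^{\mathrm{nv}}\to_p J_0^{-2}\Omega_0^{\mathrm{nv}}$. The studentized statistic is then asymptotically $\mathcal N(0,\Omega_S/\Omega_0^{\mathrm{nv}})$. Consequently the coverage is $\mathbb P(|N(0,1)|\le z\sqrt{\Omega_0^{\mathrm{nv}}/\Omega_S})=2\Phi(z\sqrt{\Omega_0^{\mathrm{nv}}/\Omega_S})-1$.

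For the cross term, condition on $(Z,W)$. Both $V$ and $a$ are $(Z,W)$-measurable, so $\E[UVea]=\E[Va\,\E[Ue\mid Z,W]]$. Since $\E[e\mid Z,W]=0$, $\E[Ue\mid Z,W]=\Cov(U,e\mid Z,W)=\Cov(U,X\mid Z,W)$, which vanishes by hypothesis. Therefore $\Omega_S=\Omega_0^{\mathrm{nv}}+\E[e^2a^2]$, and strict under-coverage follows because $\Phi$ is strictly increasing.

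For (iii), setting $a=\mathcal H$ gives $UV+ea=UV+e\mathcal H=\psi_{\mathrm{CRC}}(O;\beta_0,\eta_0)$ by Lemma~\ref{lem:scoreid}, so $\Omega_S=\Omega_0$. Theorem~\ref{thm:fixedtarget}(ii) gives the CRC influence expansion
\[
\sqrt N(\hat\beta_{\mathrm{CRC}}-\beta_0)=J_0^{-1}N^{-1/2}\sum_i(U_iV_i+e_i\mathcal H_i)+o_{\mathbb P}(1).
\]
This is the same expansion as in (i), so subtracting the two gives $\hat\beta_{\mathrm{naive}}-\hat\beta_{\mathrm{CRC}}=o_{\mathbb P}(N^{-1/2})$. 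The sandwich limit is unchanged from (ii).

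The main obstacle is the fold-weight bookkeeping in converting $\bar D_N$ into a full-sample average. One must check that the weights are exactly $1/N$ under equal folds, and that the $o_{\mathbb P}$ terms from $\varepsilon_{k,N}$ and from the Step~1 remainders stay uniform over the finitely many folds. The argument requires equal folds; with unequal folds the weights would only be approximately $1/N$, which would need a separate argument.
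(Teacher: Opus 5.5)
Your proposal is correct and follows the paper's proof essentially step for step. It uses the same equal-fold bookkeeping to turn $\bar D_N$ into $N^{-1}\sum_j e_j a_j + o_{\mathbb P}(N^{-1/2})$, reuses Step~6 of the proof of Theorem~\ref{thm:driftvalid} for the sandwich limit, handles $\E[UVea]$ by conditioning on $(Z,W)$ (your $\E[Ue\mid Z,W]=\Cov(U,e\mid Z,W)=\Cov(U,X\mid Z,W)$ is a slightly cleaner version of the paper's computation via $\E[U\mid Z,W]=\mathcal H$), and in part (iii) matches the two influence functions through Lemma~\ref{lem:scoreid}.
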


Parts (ii)--(iii) concern inference on the fixed target $\beta_0$ and therefore do not contradict Theorem~\ref{thm:driftvalid}(iv), which concerns inference around the moving target $\beta_N^*$. The naive sandwich consistently estimates the variance around $\beta_N^*$ but omits the first-stage term needed for inference on $\beta_0$. When $\Cov(U,X\mid Z,W)=0$, this omission produces strict undercoverage whenever $\E[e^2a^2]>0$. Proposition~\ref{prop:series} gives a concrete first stage satisfying Assumption~\ref{ass:fsal}: projection estimators can supply the point-estimate correction implicitly \citep{newey1994asymptotic}, while the naive sandwich remains inconsistent for fixed-target inference unless its probability limit equals the true variance. Outside that projection structure, even correct centering is not guaranteed.

\begin{proof}[Proof of Proposition~\ref{prop:sech}]
  (i) With equal folds $n_k=N/K$, Assumption~\ref{ass:fsal} gives $\tfrac{n_k}{N}\E[\mathcal H\Delta g_k\mid\mathcal D_{-k}]= \tfrac1{N(K-1)}\sum_{j\notin\mathcal I_k}e_ja_j+\tfrac1K\varepsilon_{k,N}$, because $\tfrac{n_k}{N}\cdot\tfrac1{N-n_k}=\tfrac1{N(K-1)}$. Summing over $k$ and using that each index $j$ is omitted from exactly its own fold, hence appears in $\sum_{j\notin\mathcal I_k}$ for $K-1$ of the $K$ folds, $\sum_k\sum_{j\notin\mathcal I_k}e_ja_j=(K-1)\sum_{j=1}^N e_ja_j$, so
  \[
    \bar D_N=\sum_k\frac{n_k}{N}\E[\mathcal H\Delta g_k\mid\mathcal D_{-k}]
    =\frac1N\sum_{j=1}^N e_j a_j+\frac1K\sum_k\varepsilon_{k,N},
  \]
  while
  \[
    \sqrt N\,\Big|\tfrac1K\sum_k\varepsilon_{k,N}\Big|\le\sqrt N\max_k
    |\varepsilon_{k,N}|=o_{\mathbb P}(1),
  \]
  whence $\sqrt N\bar D_N=N^{-1/2}\sum_j e_ja_j+o_{\mathbb P}(1)$. Substituting into the naive expansion of Theorem~\ref{thm:driftvalid},
  \[
    \sqrt N(\hat\beta_{\mathrm{naive}}-\beta_0)
    = J_0^{-1}\Big[N^{-1/2}\sum_i U_iV_i+\sqrt N\bar D_N\Big]+o_{\mathbb P}(1)
    = J_0^{-1}N^{-1/2}\sum_i(U_iV_i+e_ia_i)+o_{\mathbb P}(1).
  \]
  The summands $U_iV_i+e_ia_i$ are i.i.d.\ with mean $\E[UV]+\E[ea]=0$ (the second by~\eqref{eq:eorth}, $a$ being $(Z,W)$-measurable) and variance $\Omega_S=\E[(UV+ea)^2]=\Omega_0^{\mathrm{nv}}+2\E[UVea]+\E[e^2a^2]$, so Lindeberg--L\'evy and Slutsky give $\sqrt N(\hat\beta_{\mathrm{naive}}-\beta_0) \xrightarrow{d}\mathcal N(0,J_0^{-2}\Omega_S)$.

  (ii) Step~6 of the proof of Theorem~\ref{thm:driftvalid} shows $\hat J_{\mathrm{nv}}^{-2}\hat\Omega^{\mathrm{nv}}\xrightarrow{p}J_0^{-2} \Omega_0^{\mathrm{nv}}$ using only Assumptions~\ref{ass:regB}--\ref{ass:ratesP} (it does not use Assumption~\ref{ass:fsal}), so the reported variance is $J_0^{-2}\Omega_0^{\mathrm{nv}}$ while the true one is $J_0^{-2}\Omega_S$. The nominal interval has half-width $z_{1-\alpha/2}\sqrt{J_0^{-2} \Omega_0^{\mathrm{nv}}/N}(1+o_{\mathbb P}(1))$ and $\hat\beta_{\mathrm{naive}}-\beta_0=\sqrt{J_0^{-2}\Omega_S/N}\,Z(1+ o_{\mathbb P}(1))$ with $Z\xrightarrow{d}\mathcal N(0,1)$; hence coverage $\to\mathbb P(|Z|\le z_{1-\alpha/2}\sqrt{\Omega_0^{\mathrm{nv}}/\Omega_S})= 2\Phi(z_{1-\alpha/2}\sqrt{\Omega_0^{\mathrm{nv}}/\Omega_S})-1$. For the sign, compute $\E[UVea]=\E[Va\,\E[Ue\mid Z,W]]$ (as $V,a$ are $(Z,W)$-measurable), and $\E[Ue\mid Z,W]=\E[UX\mid Z,W]-g_X\E[U\mid Z,W]=\Cov(U,X\mid Z,W)+ \mathcal H g_X-g_X\mathcal H=\Cov(U,X\mid Z,W)$ by~\eqref{eq:UZW}. Thus $\E[UVea]=\E[Va\,\Cov(U,X\mid Z,W)]$, which vanishes when $\Cov(U,X\mid Z,W)=0$ a.s.; then $\Omega_S=\Omega_0^{\mathrm{nv}}+\E[e^2a^2]> \Omega_0^{\mathrm{nv}}$ whenever $\E[e^2a^2]>0$, and $2\Phi(z_{1-\alpha/2}\sqrt{\Omega_0^{\mathrm{nv}}/\Omega_S})-1<1-\alpha$.

  (iii) If $a=\mathcal H$, then $\Omega_S=\E[(UV+e\mathcal H)^2]= \E[\psi_{\mathrm{CRC}}(O;\beta_0,\eta_0)^2]=\Omega_0$ by Lemma~\ref{lem:scoreid}. The influence function of $\hat\beta_{\mathrm{naive}}$ in (i) is then $J_0^{-1}(UV+e\mathcal H)$, identical to that of $\hat\beta_{\mathrm{CRC}}$ (Theorem~\ref{thm:fixedtarget} with Lemma~\ref{lem:scoreid}), so $\hat\beta_{\mathrm{naive}}-\hat\beta_{\mathrm{CRC}}=o_{\mathbb P}(N^{-1/2})$. By part (ii), however, the naive sandwich converges to $J_0^{-2}\Omega_0^{\mathrm{nv}}$, which need not equal $J_0^{-2}\Omega_0$. Lemma~\ref{lem:scoreid} gives the exact condition:
  \[
    \Omega_0-\Omega_0^{\mathrm{nv}}
    =2\E[UVe\mathcal H]+\E[e^2\mathcal H^2].
  \]
  Thus the naive sandwich is inconsistent precisely when the right-hand side is nonzero. If $\Cov(U,X\mid Z,W)=0$, the cross term vanishes, and inconsistency follows whenever $\E[e^2\mathcal H^2]>0$.
\end{proof}

\begin{proposition}[Series Least Squares Verifies Assumption~\ref{ass:fsal}]\label{prop:series}
  Let $p(Z,W) \in \mathbb{R}^{d_p}$ be a fixed, bounded dictionary with $Q := \E[p\,p']$ nonsingular; suppose $g_X = p'\theta_g$ and $\mathcal{H} = p'\theta_{\mathcal{H}}$ for some $\theta_g, \theta_{\mathcal{H}} \in \mathbb{R}^{d_p}$; and let $\hat{g}_{X,k} = p'\hat{\theta}_k$, where $\hat{\theta}_k$ is the ordinary least squares coefficient of $X$ on $p$ computed on $\mathcal{D}_{-k}$. Then Assumption~\ref{ass:fsal} holds with $a = \mathcal{H}$, and $\rho_{g,k} = O_{\mathbb{P}}(N^{-1/2})$.
\end{proposition}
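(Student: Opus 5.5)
The argument is exact OLS algebra followed by an ordinary rate argument. Write $e_j := X_j - g_X(Z_j,W_j)$. Since $g_X = p'\theta_g$ lies in the span of the dictionary and $g_X = \E[X\mid Z,W]$, we have $\E[p\,e] = 0$.

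Let $\hat Q_k := (N-n_k)^{-1}\sum_{j\notin\mathcal I_k} p_jp_j'$. The OLS coefficient on $\mathcal D_{-k}$ then satisfies, exactly,
\[
  \hat\theta_k - \theta_g = \hat Q_k^{-1}\,\frac{1}{N-n_k}\sum_{j\notin\mathcal I_k} p_j e_j ,
\]
on the event that $\hat Q_k$ is invertible. That event has probability tending to one, because $p$ is bounded, $\hat Q_k \xrightarrow{p} Q$ by the law of large numbers, and $Q$ is nonsingular.

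For the rate, the summands $p_je_j$ are i.i.d. and mean zero with finite variance, since $p$ is bounded and $\E[X^2]<\infty$. The central limit theorem gives $(N-n_k)^{-1}\sum_{j\notin\mathcal I_k} p_je_j = O_{\mathbb P}(N^{-1/2})$, hence $\hat\theta_k-\theta_g = O_{\mathbb P}(N^{-1/2})$. It follows that
\[
  \rho_{g,k} = \|p'(\hat\theta_k-\theta_g)\| \le \lambda_{\max}(Q)^{1/2}\,|\hat\theta_k-\theta_g| = O_{\mathbb P}(N^{-1/2}).
\]

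Next I compute the loading. Conditional on $\mathcal D_{-k}$, the vector $\hat\theta_k$ is fixed, so
\[
  \E[\mathcal H\,\Delta g_{X,k}\mid\mathcal D_{-k}] = \E[\mathcal H\,p']\,(\hat\theta_k-\theta_g) = \theta_{\mathcal H}'\,Q\,(\hat\theta_k-\theta_g),
\]
where the second equality uses $\mathcal H = p'\theta_{\mathcal H}$. Substituting the OLS identity gives
\[
  \theta_{\mathcal H}'Q\hat Q_k^{-1}\frac{1}{N-n_k}\sum_{j\notin\mathcal I_k}p_je_j .
\]
I split this as
\[
  \theta_{\mathcal H}'\frac{1}{N-n_k}\sum_{j\notin\mathcal I_k}p_je_j \;+\; \varepsilon_{k,N},
  \qquad
  \varepsilon_{k,N} := \theta_{\mathcal H}'(Q\hat Q_k^{-1}-I)\frac{1}{N-n_k}\sum_{j\notin\mathcal I_k}p_je_j .
\]
The first term equals $(N-n_k)^{-1}\sum_{j\notin\mathcal I_k} e_j\,\mathcal H(Z_j,W_j)$, which is exactly the leading term in \eqref{eq:condS} with $a=\mathcal H$. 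The function $a = \mathcal H$ is bounded and $(Z,W)$-measurable because $p$ is bounded.

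The only remaining step is to show that the remainder is negligible. We have $Q\hat Q_k^{-1}-I = o_{\mathbb P}(1)$ by $\hat Q_k\xrightarrow{p}Q$ and the continuous mapping theorem, while the averaged score is $O_{\mathbb P}(N^{-1/2})$. Hence $\sqrt N\,|\varepsilon_{k,N}| = o_{\mathbb P}(1)$ for each fold. Because $K$ is fixed, this gives $\sqrt N\max_k|\varepsilon_{k,N}|\xrightarrow{p}0$.

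Off the vanishing event where some $\hat Q_k$ is singular, $\varepsilon_{k,N}$ can be defined arbitrarily without affecting convergence in probability. On that event one could also use a generalized inverse, but this is not needed.
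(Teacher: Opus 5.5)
Your proposal is correct and follows the paper's proof essentially step for step. It uses the exact OLS identity for $\hat\theta_k-\theta_g$, the loading $\theta_{\mathcal H}'Q(\hat\theta_k-\theta_g)$, the split $\theta_{\mathcal H}'Q\hat Q_k^{-1}=\theta_{\mathcal H}'+\theta_{\mathcal H}'(Q\hat Q_k^{-1}-I)$, and the bound $\rho_{g,k}^2\le\lambda_{\max}(Q)\,|\hat\theta_k-\theta_g|^2$.

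The differences are cosmetic. You control the remainder with mere consistency $Q\hat Q_k^{-1}-I=o_{\mathbb P}(1)$, giving $o_{\mathbb P}(N^{-1/2})$, which suffices. The paper instead uses the rate $\hat Q_k-Q=O_{\mathbb P}(N^{-1/2})$ to get the sharper $O_{\mathbb P}(N^{-1})$. You also make explicit the invertibility event and the boundedness of $a=\mathcal H$, which the paper leaves implicit.
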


\begin{proof}[Proof of Proposition~\ref{prop:series}]
  Let $\hat Q_k:=\tfrac1{N-n_k}\sum_{j\notin\mathcal I_k}p_jp_j'$ and note the fold-$k$ OLS coefficient is $\hat\theta_k=\hat Q_k^{-1}\tfrac1{N-n_k} \sum_{j\notin\mathcal I_k}p_jX_j$. Since $X=p'\theta_g+e$ with $\E[pe]=\E[p\,\E[e\mid Z,W]]=0$ by~\eqref{eq:eorth},
  \[
    \hat\theta_k-\theta_g=\hat Q_k^{-1}\frac1{N-n_k}\sum_{j\notin\mathcal I_k}
    p_j e_j .
  \]
  As $\Delta g_k=p'(\hat\theta_k-\theta_g)$ and $\mathcal H=p'\theta_{\mathcal H}$, and $\hat\theta_k$ is $\mathcal D_{-k}$-measurable,
  \[
    \E[\mathcal H\Delta g_k\mid\mathcal D_{-k}]
    =\theta_{\mathcal H}'\,\E[pp']\,(\hat\theta_k-\theta_g)
    =\theta_{\mathcal H}'Q\hat Q_k^{-1}\frac1{N-n_k}\sum_{j\notin\mathcal I_k}
    p_je_j .
  \]
  Split $\theta_{\mathcal H}'Q\hat Q_k^{-1}=\theta_{\mathcal H}'+ \theta_{\mathcal H}'(Q\hat Q_k^{-1}-I)$. The first piece gives $\theta_{\mathcal H}'\tfrac1{N-n_k}\sum_{j\notin\mathcal I_k}p_je_j= \tfrac1{N-n_k}\sum_{j\notin\mathcal I_k}(p_j'\theta_{\mathcal H})e_j= \tfrac1{N-n_k}\sum_{j\notin\mathcal I_k}(X_j-g_{X,j})\mathcal H_j$, which is the leading term of Assumption~\ref{ass:fsal} with $a=\mathcal H$. The second piece is $\varepsilon_{k,N}=\theta_{\mathcal H}'(Q\hat Q_k^{-1}-I) \tfrac1{N-n_k}\sum_{j\notin\mathcal I_k}p_je_j$; using $Q\hat Q_k^{-1}-I= -Q\hat Q_k^{-1}(\hat Q_k-Q)Q^{-1}$ with $\hat Q_k-Q=O_{\mathbb P}(N^{-1/2})$ (bounded dictionary) and $\tfrac1{N-n_k}\sum_{j\notin\mathcal I_k}p_je_j= O_{\mathbb P}(N^{-1/2})$ (mean zero, finite variance since $p$ bounded and $e\in L_2$), we get $\varepsilon_{k,N}=O_{\mathbb P}(N^{-1})$, so $\sqrt N \max_k|\varepsilon_{k,N}|=o_{\mathbb P}(1)$. Thus Assumption~\ref{ass:fsal} holds with $a=\mathcal H$. Finally $\rho_{g,k}^2=\|p'(\hat\theta_k-\theta_g)\|^2= (\hat\theta_k-\theta_g)'Q(\hat\theta_k-\theta_g)\le\lambda_{\max}(Q)\| \hat\theta_k-\theta_g\|^2=O_{\mathbb P}(N^{-1})$, i.e.\ $\rho_{g,k}= O_{\mathbb P}(N^{-1/2})$.
\end{proof}

\subsection*{Proof for Section~\ref{sec:robustsets}: Identification Robust Sets}

\begin{proof}[Proof of Proposition~\ref{thm:ar}]
  Decompose $\sqrt N\bar\psi(\beta_0)=N^{-1/2}\sum_i\psi_{\mathrm{CRC}}(O_i; \beta_0,\eta_0)+N^{-1/2}\sum_k\sum_{i\in\mathcal I_k}E_{ik}$ with $E_{ik}:= \psi_{\mathrm{CRC}}(O_i;\beta_0,\hat\eta_k)-\psi_{\mathrm{CRC}}(O_i;\beta_0, \eta_0)$. By Lemma~\ref{lem:exact} (conditional form) and $\E[\psi_{\mathrm{CRC}} (O;\beta_0,\eta_0)]=0$,
  \[
    \E[E_{ik}\mid\mathcal D_{-k}]=R_{Y,k}-\beta_0R_{X,k},\qquad
    |R_{Y,k}-\beta_0R_{X,k}|\le\rho_{l,k}\rho_{m,k}+\rho_{g,k}\rho_{q,k}+
    |\beta_0|(\rho_{m,k}^2+\rho_{g,k}^2),
  \]
  which is $o_{\mathbb P}(N^{-1/2})$ under Assumption~\ref{ass:ratesP}; hence the fold-mean contribution $N^{-1/2}\sum_k n_k(R_{Y,k}-\beta_0R_{X,k})=\sqrt N \sum_k\tfrac{n_k}{N}(R_{Y,k}-\beta_0R_{X,k})=o_{\mathbb P}(1)$. The centered part $N^{-1/2}\sum_{i\in\mathcal I_k}[E_{ik}-\E(E_{ik}\mid\mathcal D_{-k})]$ has conditional variance $\le\tfrac1K\E[E_{ik}^2\mid\mathcal D_{-k}]=\tfrac1K\| \psi_{\mathrm{CRC}}(\cdot;\beta_0,\hat\eta_k)-\psi_{\mathrm{CRC}}(\cdot;\beta_0, \eta_0)\|^2\le\tfrac{L^2}K(\sum_j\rho_{j,k}^{1/2})^2=o_{\mathbb P}(1)$ by Lemma~\ref{lem:l2cont} (with $\beta'=\beta=\beta_0$), so it is $o_{\mathbb P}(1)$ by conditional Chebyshev. Therefore $\sqrt N\bar\psi(\beta_0) =N^{-1/2}\sum_i\psi_{\mathrm{CRC}}(O_i;\beta_0,\eta_0)+o_{\mathbb P}(1) \xrightarrow{d}\mathcal N(0,\Omega_0)$ by Lindeberg--L\'evy. For the denominator, the same Step-6 argument (now with $\beta_0$ fixed, so only the nuisance-continuity part is needed) gives $\hat\sigma^2(\beta_0) \xrightarrow{p}\E[\psi_{\mathrm{CRC}}(O;\beta_0,\eta_0)^2]=\Omega_0>0$. Slutsky yields $\mathrm{AR}_N(\beta_0)\xrightarrow{d}\mathcal N(0,1)$, so $\mathbb P(\beta_0\in\mathcal C_{1-\alpha})=\mathbb P(|\mathrm{AR}_N(\beta_0)| \le z_{1-\alpha/2})\to1-\alpha$. No step divides by $J_0$ or uses a lower bound on it.

  (Remark~\ref{rem:fieller}.) With $\bar\psi(\beta)=S_Y-\beta S_X$ and $\hat\sigma^2(\beta)=M_{YY}-2\beta M_{XY}+\beta^2M_{XX}$, the defining inequality $N\bar\psi(\beta)^2\le z^2\hat\sigma^2(\beta)$ becomes
  \[
    A\beta^2-2B\beta+C\le0.
  \]
  Let $D:=B^2-AC$. If $A>0$, the set is empty when $D<0$ and is the closed interval with endpoints $(B\pm\sqrt D)/A$ when $D\ge0$. If $A<0$, it is the whole line when $D\le0$ and the complement of the open interval between the two roots when $D>0$. If $A=0$ and $B\ne0$, it is a half-line with boundary $C/(2B)$; if $A=B=0$, it is the whole line when $C\le0$ and empty when $C>0$. These are the Fieller--Anderson--Rubin cases.
\end{proof}

\end{document}